\documentclass[11pt]{article}
\usepackage{tikz} 
\usetikzlibrary{decorations,arrows} 
\usetikzlibrary{arrows,decorations.markings}
\usepackage[margin=1in]{geometry}
\usepackage{color}
\usepackage{amssymb}
\usepackage{amsmath}
\usepackage{amsthm}
\usepackage{url}
\usepackage{enumitem}
\usepackage{amsfonts}
\usepackage{hyperref}
\usepackage{graphicx}
\usepackage[norelsize,ruled,vlined]{algorithm2e}
\usepackage{authblk}

\newtheorem{theorem}{Theorem}
\newtheorem{definition}[theorem]{Definition}
\newtheorem{proposition}[theorem]{Proposition}
\newtheorem{lemma}[theorem]{Lemma}
\newtheorem{corollary}[theorem]{Corollary}

\newtheorem{problem}[theorem]{Problem}

\newcommand\T{\rule{0pt}{2.6ex}}
\newcommand\B{\rule[-1.2ex]{0pt}{0pt}}

\begin{document}

\title{Probably certifiably correct $k$-means clustering}
\author[1]{Takayuki Iguchi}
\author[1]{Dustin G.\ Mixon}
\author[1]{Jesse Peterson}
\author[2]{Soledad Villar}
\affil[1]{Department of Mathematics and Statistics, Air Force Institute of Technology}
\affil[2]{Department of Mathematics, University of Texas at Austin}

\date{}
\maketitle

\begin{abstract}
Recently, Bandeira~\cite{bandeira2015note} introduced a new type of algorithm (the so-called probably certifiably correct algorithm) that combines fast solvers with the optimality certificates provided by convex relaxations.
In this paper, we devise such an algorithm for the problem of $k$-means clustering.
First, we prove that Peng and Wei's semidefinite relaxation of $k$-means~\cite{peng2007approximating} is tight with high probability under a distribution of planted clusters called the stochastic ball model.
Our proof follows from a new dual certificate for integral solutions of this semidefinite program.
Next, we show how to test the optimality of a proposed $k$-means solution using this dual certificate in quasilinear time.
Finally, we analyze a version of spectral clustering from Peng and Wei~\cite{peng2007approximating} that is designed to solve $k$-means in the case of two clusters.
In particular, we show that this quasilinear-time method typically recovers planted clusters under the stochastic ball model.
\end{abstract}

\section{Introduction}

Clustering is a central problem in unsupervised machine learning.
It consists of partitioning a given finite sequence $\{x_i\}_{i=1}^N$ of points in $\mathbb{R}^m$ into $k$ subsequences such that some dissimilarity function is minimized.
Usually, this function is chosen ad hoc with an application in mind.
A particularly common choice is the \textbf{$k$-means objective}:
\begin{alignat}{2}
\label{kmeans}
&\text{minimize}  &       &
\sum_{t=1}^k \sum_{i\in A_t} \bigg\| x_i-\frac{1}{|A_t|}\sum_{j\in A_t} x_j \bigg\|_2^2\\
\nonumber
& \text{subject to}& \quad & A_1\sqcup\cdots\sqcup A_k=\{1,\ldots,N\}
\end{alignat}
Problem \eqref{kmeans} is NP-hard in general~\cite{jms06}.
A popular heuristic for solving $k$-means is Lloyd's algorithm, also known as the $k$-means algorithm~\cite{Lloyd}.
This algorithm alternates between calculating centroids of proto-clusters and reassigning points according to the nearest centroid.
In general, Lloyd's algorithm (and its variants~\cite{Arthur07, Lloyd06}) may converge to local minima of the $k$-means objective (e.g., see section~5 of \cite{relax}).
Furthermore, the output of Lloyd's algorithm does not indicate how far it is from optimal.
Instead, we seek a new sort of algorithm, recently introduced by Bandeira~\cite{bandeira2015note}:

\begin{definition}
Let $\mathbf{P}$ be an optimization problem that depends on some input, and let $\mathbf{D}$ denote a probability distribution over possible inputs.
Then a \textbf{probably certifiably correct (PCC) algorithm} for $(\mathbf{P},\mathbf{D})$ is an algorithm that on input $D\sim \mathbf{D}$ produces a global optimizer of $\mathbf{P}$ with high probability, and furthermore produces a certificate of having done so.
\end{definition}

Most non-convex optimization methods fail to produce a certificate of global optimality.
However, if a non-convex problem enjoys a convex relaxation, then solving the dual of this relaxation will produce a certificate of (approximate) optimality.
Along these lines, the $k$-means problem enjoys a semidefinite relaxation.
To see this, let $1_A$ denote the indicator function of $A\subseteq\{1,\ldots,N\}$, and define the $N\times N$ matrix $D$ by $D_{ij}:=\|x_i-x_j\|_2^2$.
Then taking
\begin{equation}
\label{eq.integral}
X:=\sum_{t=1}^k\frac{1}{|A_t|}1_{A_t}1_{A_t}^\top,
\end{equation}
the $k$-means objective \eqref{kmeans} may be expressed as $\frac{1}{2}\operatorname{Tr}(DX)$.
Since $X$ satisfies several convex constraints, we may relax the region of optimization to produce a convex program, namely, the Peng--Wei semidefinite relaxation~\cite{peng2007approximating} (cf.\ \cite{chen20080}):
\begin{alignat}{2}
\label{eq.kmeansSDP}
& \text{minimize}  &       & \operatorname{Tr}(DX) \\
\nonumber
& \text{subject to}& \quad & 
\begin{aligned}[t]
\operatorname{Tr}(X)&=k,~
X1=1,~
X\geq0,~
X\succeq0
\end{aligned}
\end{alignat}
Here, $X\geq0$ means that each entry of $X$ is nonnegative, whereas $X\succeq0$ means that $X$ is symmetric and positive semidefinite.

Recently, it was shown that under a certain random data model, this convex relaxation is \textbf{tight} with high probability~\cite{relax}, that is, every solution to the relaxed problem \eqref{eq.kmeansSDP} has the form \eqref{eq.integral}, thereby identifying an optimal clustering.
As such, in this high-probability event, one may solve the dual program to produce a certificate of optimality.
However, semidefinite programming (SDP) solvers are notoriously slow.
For example, running MATLAB's built-in implementation of Lloyd's algorithm on 64 points in $\mathbb{R}^6$ will take about 0.001 seconds, whereas a CVX implementation~\cite{grant2008cvx} of the dual of \eqref{eq.kmeansSDP} for the same data takes about 20 seconds.
Also, Lloyd's algorithm scales much better than SDP solvers, and so one should expect this runtime disparity to only increase with larger datasets.
Overall, while the SDP relaxation theoretically produces a certificate in polynomial time (e.g., by an interior-point method \cite{nesterov1994interior}), it is far too slow to wait for in practice.

As a fast alternative, Bandeira~\cite{bandeira2015note} recently devised a general technique to certify global optimality.
This technique leverages several components simultaneously:
\begin{itemize}
\item[(i)]
A fast non-convex solver that produces the optimal solution with high probability (under some reasonable probability distribution of problem instances).
\item[(ii)]
A convex relaxation that is tight with high probability (under the same distribution).
\item[(iii)]
A fast method of computing a certificate of global optimality for the output of the non-convex solver in (i) by exploiting convex duality with the relaxation in (ii).
\end{itemize}
In the context of $k$-means, one might expect Lloyd's algorithm and the Peng--Wei SDP to be suitable choices for (i) and (ii), respectively.
For (iii), one might adapt Bandeira's original method in~\cite{bandeira2015note} based on complementary slackness (see Figure~\ref{figure.pcc} for an illustration).
In this paper, we provide a theoretical basis for each of these components in the context of $k$-means.

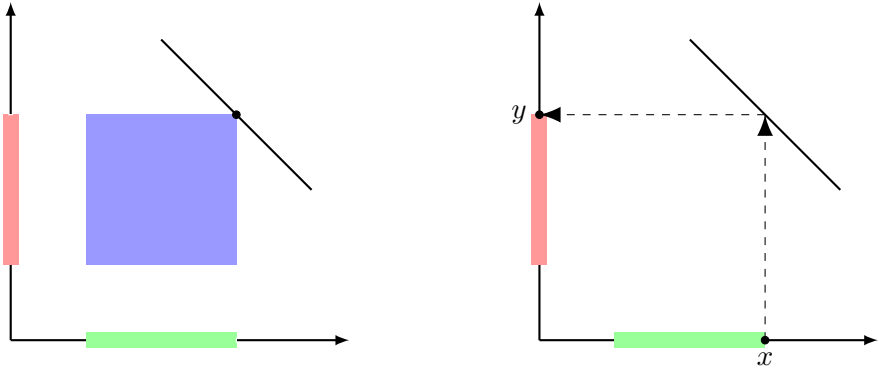
\begin{figure}
\begin{center}
\begin{tikzpicture}
\draw[thick,->,>=latex ] (0,0) -- (4.5,0);
\draw[thick,->, >=latex] (0,0) -- (0,4.5);
\fill[fill=blue!40!white] (1,1) rectangle (3,3);
\fill[fill=green!40!white] (1,-0.1) rectangle (3,0.1);
\fill[fill=red!40!white] (-0.1,1) rectangle (0.1,3);
\draw[thick] (2,4) -- (4,2);
\draw[white] (1pt,3 cm) -- (-1pt,3 cm) node[anchor=east] {$y$};
\filldraw[fill=black, draw=black] (3,3) circle (.5mm);
\draw[white] (3 cm,1pt) -- (3 cm,-1pt) node[anchor=north] {$x$};
\end{tikzpicture}
\hspace{50pt}
\begin{tikzpicture}
\draw[thick,->, >=latex] (0,0) -- (4.5,0);
\draw[thick,->, >=latex] (0,0) -- (0,4.5);
\fill[fill=green!40!white] (1,-0.1) rectangle (3,0.1);
\fill[fill=red!40!white] (-0.1,1) rectangle (0.1,3);
\draw[thick] (2,4) -- (4,2);
\draw (1pt,3 cm) -- (-1pt,3 cm) node[anchor=east] {$y$};
\draw (3 cm,1pt) -- (3 cm,-1pt) node[anchor=north] {$x$};
\draw [dashed, decoration={markings,mark=at position 1 with
    {\arrow[scale=2,>=latex]{>}}},postaction={decorate}] (3,3) -- (0,3);
    \draw [dashed, decoration={markings,mark=at position 1 with
    {\arrow[scale=2,>=latex]{>}}},postaction={decorate}] (3,0) -- (3,3);
    \filldraw[fill=black, draw=black] (0,3) circle (.5mm);
\filldraw[fill=black, draw=black] (3,0) circle (.5mm);
\end{tikzpicture}
\end{center}
\caption{\footnotesize{\textbf{(left)} Depiction of complementary slackness. The horizontal axis represents a vector space in which we consider a cone program (e.g., a linear or semidefinite program), and the feasibility region of this program is highlighted in green. The dual program concerns another vector space, which we represent with the vertical axis and feasibility region highlighted in red. The downward-sloping line represents all pairs of points $(x,y)$ that satisfy complementary slackness. Recall that when strong duality is satisfied, we have that $x$ is primal-optimal and $y$ is dual-optimal if and only if $x$ is primal feasible, $y$ is dual feasible, and $(x,y)$ satisfy complementary slackness. As such, the intersection between the blue Cartesian product and the complementary slackness line represents all pairs of optimizers. \textbf{(right)} Bandeira's probably certifiably correct technique~\cite{bandeira2015note}. Given a purported primal-optimizer $x$, we first check that $x$ is primal-feasible. Next, we select $y$ such that $(x,y)$ satisfies complementary slackness. Finally, we check that $y$ is dual-feasible. By complementary slackness, $y$ is then a dual certificate of $x$'s optimality in the primal program, which can be verified by comparing their values (a la strong duality).}\label{figure.pcc}}
\end{figure}

\subsection{Technical background and overview}

The first two components of a probably certifiably correct algorithm require non-convex and convex solvers that perform well under some ``reasonable'' distribution of problem instances.
In the context of geometric clustering, it has become popular recently to consider a particular model of data called the \textbf{stochastic ball model}, introduced in~\cite{Nellore_Kmedians}:

\begin{definition}[$(\mathcal D, \gamma, n)$-stochastic ball model]
\label{stochastic_balls}
Let $\{\gamma_a\}_{a=1}^k$ be ball centers in $\mathbb{R}^m$.
For each $a$, draw i.i.d.\ vectors $\{r_{a,i}\}_{i=1}^n$ from some rotation-invariant distribution $\mathcal D$ supported on the unit ball.
The points from cluster $a$ are then taken to be $x_{a,i}:= r_{a,i} + \gamma_a$.
We denote $\Delta:=\min_{a\neq b}\|\gamma_a-\gamma_b\|_2$. 
\end{definition}

Table~\ref{table} summarizes the state of the art for recovery guarantees under the stochastic ball model.
In~\cite{Nellore_Kmedians}, it was shown that an LP relaxation of $k$-medians will, with high probability, recover clusters drawn from the stochastic ball model provided the smallest distance between ball centers is $\Delta\geq3.75$.
Note that exact recovery only makes sense for $\Delta>2$ (i.e., when the balls are disjoint).
Once $\Delta>4$, any two points within a particular cluster are closer to each other than any two points from different clusters, and so in this regime, cluster recovery follows from a simple distance thresholding.
For the $k$-means problem, Awasthi et al.~\cite{relax} studies the Peng--Wei semidefinite relaxation and demonstrates exact recovery in the regime $\Delta>2\sqrt2(1+ 1/\sqrt m)$, where $m$ is the dimension of the Euclidean space.

\begin{table}
\begin{center}
\begin{tabular}{llll} \hline
    {\textbf{Method}} & {\textbf{Sufficient Condition}} & {\textbf{Optimal?}} & {\textbf{Reference}} \T\B \\ \hline\hline
    Thresholding  & $\Delta>4$ & Yes & (simple exercise) \T\B \\ \hline
    $k$-medians LP  & $\Delta\geq4$  & No & Theorem~2 in \cite{elhamifar2012finding} \T \\
      & $\Delta\geq3.75$  & No & Theorem~1 in \cite{Nellore_Kmedians} \\
      & $\Delta>2$  & Yes &  Theorem~1 in \cite{relax} \B \\ \hline
    $k$-means LP  & $\Delta>4$   & Yes & Theorem~9 in \cite{relax} \T \B \\ \hline
    $k$-means SDP  & $\Delta>2\sqrt2(1+ 1/\sqrt m)$  & No  & Theorem~3 in \cite{relax} \T\\
      & $\Delta>2+k^2/m$  & No  & Theorem~\ref{main_theorem} \B   \\ \hline
        Spectral $k$-means  & $\Delta>\Delta^\star$, $k=2$   & Yes & Theorem~\ref{theorem.spectral clustering} \T \B \\ \hline
\end{tabular}
\end{center}
\caption{\label{table}
{\footnotesize Summary of cluster recovery guarantees under the stochastic ball model.
The second column reports sufficient separation between ball centers in order for the corresponding method to provably give exact recovery with high probability.
The third column reports whether the sufficient condition on $\Delta$ cannot be improved.
Here, $\Delta^\star=\Delta^\star(\mathcal{D},k)$ denotes the smallest value for which $\Delta>\Delta^\star$ implies that minimizing the $k$-means objective recovers planted clusters under the $(\mathcal{D},\gamma,n)$-stochastic ball model with probability $1-e^{-\Omega_{\mathcal{D},\gamma}(n)}$.}
\normalsize}
\end{table}

As indicated in Table~\ref{table}, we also study the Peng--Wei SDP, but our guarantee is different from~\cite{relax}.
In particular, we demonstrate tightness in the regime $\Delta>2+k^2/m$, which is near-optimal for large $m$.
The source of this improvement is a different choice of dual certificate, which leads to the following result (see Section \ref{sec:sdp} for details):

\begin{theorem}
\label{thm.dual certificate intro}
Take $X$ of the form \eqref{eq.integral}, and let $P_{\Lambda^\perp}$ denote the orthogonal projection onto the orthogonal complement of the span of $\{1_{A_t}\}_{t=1}^k$.
Then there exists an explicit matrix $Z=Z(D,X)$ and scalar $z=z(D,X)$ such that $X$ is a solution to the semidefinite relaxation \eqref{eq.kmeansSDP} if
\begin{equation}
\label{cert_condition_intro}
P_{\Lambda^\perp}ZP_{\Lambda^\perp}\preceq zP_{\Lambda^\perp}.
\end{equation}
\end{theorem}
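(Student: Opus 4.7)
The plan is to derive the certificate from the KKT conditions of the Peng--Wei SDP and reduce dual feasibility to a single semidefinite condition on $\Lambda^\perp$.

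First, I would write the Lagrangian of \eqref{eq.kmeansSDP} with multipliers $\alpha\in\mathbb{R}$ for $\operatorname{Tr}(X)=k$, $y\in\mathbb{R}^N$ for $X1=1$, $B\geq 0$ (entrywise) for the constraint $X\geq 0$, and $M\succeq 0$ for $X\succeq 0$. Stationarity in $X$ yields
$$M \;=\; D - B + \alpha I + \tfrac{1}{2}\bigl(y1^\top + 1\,y^\top\bigr).$$
The primal SDP is strictly feasible (a small PSD perturbation of $\tfrac{k}{N}11^\top$ is a Slater point), so strong duality holds. Consequently, $X$ of the form \eqref{eq.integral} is optimal as soon as there exist multipliers $(\alpha,y,B,M)$ satisfying dual feasibility ($B\geq 0$, $M\succeq 0$) together with complementary slackness ($B\circ X=0$ entrywise and $MX=0$).

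Next, I would exploit the block structure of $X$ to pin down the dual variables. The support of $X$ is exactly the set of intra-cluster pairs, so $B\circ X=0$ forces $B$ to vanish on the cluster blocks. The range of $X$ equals $\Lambda=\operatorname{span}\{1_{A_t}\}_{t=1}^k$, so $MX=0$ is equivalent to $M=P_{\Lambda^\perp}MP_{\Lambda^\perp}$. I would then commit to an explicit ansatz for the inter-cluster entries of $B$, designed so that the $k$ vector equations $M1_{A_t}=0$ can be solved for $\alpha$ and $y$ in closed form. Because $1=\sum_t 1_{A_t}\in\Lambda$, this is a well-posed linear system in the residual degrees of freedom and produces the explicit $B=B(D,X)$, $y=y(D,X)$, and $\alpha=\alpha(D,X)$ advertised in the theorem.

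Finally, I would reduce the remaining PSD condition to the form stated. Since $P_{\Lambda^\perp}1=0$, the rank-two piece $\tfrac{1}{2}(y1^\top+1y^\top)$ vanishes after conjugation by $P_{\Lambda^\perp}$. Combined with $M=P_{\Lambda^\perp}MP_{\Lambda^\perp}$, stationarity gives
$$M \;=\; P_{\Lambda^\perp}(D-B)P_{\Lambda^\perp} + \alpha P_{\Lambda^\perp},$$
so $M\succeq 0$ is equivalent to $P_{\Lambda^\perp}(B-D)P_{\Lambda^\perp}\preceq \alpha P_{\Lambda^\perp}$. Setting $Z:=B-D$ and $z:=\alpha$ then yields exactly \eqref{cert_condition_intro}, and by the KKT analysis above this inequality suffices for $X$ to be SDP-optimal.

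The main obstacle is the choice of ansatz for $B$. It must be simple enough to admit clean closed-form expressions for $\alpha$ and $y$, it must automatically satisfy $B\geq 0$ entrywise on its support (at least with high probability under the stochastic ball model, which is the intended regime), and the resulting $Z$ must be amenable to spectral analysis so that the single projected inequality \eqref{cert_condition_intro} is verifiable down to the near-optimal separation promised in Table~\ref{table}. The novelty compared with~\cite{relax} is expected to lie precisely in engineering this $B$ so that $P_{\Lambda^\perp}ZP_{\Lambda^\perp}$ concentrates tightly under random data.
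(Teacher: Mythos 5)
Your proposal correctly reconstructs the paper's framework at the level of strategy: take the dual of the Peng--Wei SDP, use complementary slackness to force the entrywise multiplier to vanish on the diagonal blocks and the slack matrix $Q$ to annihilate $\Lambda=\operatorname{span}\{1_{A_t}\}$, and then observe that after conjugating by $P_{\Lambda^\perp}$ the rank-two term $\tfrac{1}{2}(y1^\top+1y^\top)$ drops out, leaving a single projected semidefinite inequality with $Z=B-D$ and $z$ the trace multiplier. This is precisely the paper's reduction (there, $Q=z(I-E)+M-B$ with $Q1_a=0$ imposed, and the passage from $B-M$ to $B-D$ under $P_{\Lambda^\perp}$-conjugation is noted at~\eqref{eq.fast cert matrix}).

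However, the theorem asserts the existence of an \emph{explicit} $Z=Z(D,X)$ and $z=z(D,X)$, and you never produce them; you explicitly defer the construction of $B$ as ``the main obstacle.'' That construction is the actual content of the result. The paper's choice is concrete: the off-diagonal block equations $(Q1_b)_a=0$ force $B^{(a,b)}1=M^{(a,b)}1-z\tfrac{n_a+n_b}{2n_a}1=:u_{(a,b)}$, which can be nonnegative only if $z\leq\min_{a\neq b}\tfrac{2n_a}{n_a+n_b}\min(M^{(a,b)}1)$; taking $z$ to saturate this bound and setting $B^{(a,b)}:=u_{(a,b)}u_{(b,a)}^\top/\rho_{(b,a)}$ with $\rho_{(a,b)}:=u_{(a,b)}^\top1$ gives a rank-one block ansatz that simultaneously satisfies the linear constraints, is symmetric (since $\rho_{(a,b)}=\rho_{(b,a)}$), and is entrywise nonnegative by construction. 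Without exhibiting this (or some equivalent) ansatz, the proof is not complete. A secondary inaccuracy: you frame $B\geq 0$ as something that should hold ``with high probability under the stochastic ball model,'' but in the paper it is a deterministic consequence of saturating the bound on $z$ (modulo the implicit requirement $\rho_{(a,b)}>0$); only the final spectral inequality~\eqref{cert_condition_intro} is probabilistic, and that analysis is not part of this theorem.
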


To prove that $\Delta>2+k^2/m$ suffices for the SDP to recover the planted clustering under the stochastic ball model, we estimate the left- and right-hand sides of \eqref{cert_condition_intro} with the help of standard techniques from random matrix theory and concentration of measure; see Appendix~\ref{sec:appendix_theorem} for the (rather technical) details.
While this is an improvement over the condition from~\cite{relax} in the large-$m$ regime, there are other regimes (e.g., $k=m$) for which their condition is much better, leaving open the question of what the optimal bound is.
Conjecture~4 in~\cite{relax} suggests that $\Delta>2$ suffices for the $k$-means SDP to recover planted clusters under the stochastic ball model, but as we illustrate in Section \ref{sec:SBM_SDP}, this conjecture is surprisingly false.

Having accomplished component (ii) in Bandeira's PCC technique, we tackle component (iii) next.
For this, consider the matrix 
\begin{equation}
\label{eq.A matrix intro}
A:=\frac{z}{N}11^\top+P_{\Lambda^\perp}ZP_{\Lambda^\perp},
\end{equation}
where $z$ and $Z$ come from Theorem~\ref{thm.dual certificate intro}.
Since the all-ones vector $1$ lies in the span of $\{1_{A_t}\}_{t=1}^k$, we have that $1$ spans the unique leading eigenspace of $A$ precisely when $P_{\Lambda^\perp}ZP_{\Lambda^\perp}\prec zP_{\Lambda^\perp}$, which in turn implies that $X$ is a $k$-means optimal clustering by Theorem~\ref{thm.dual certificate intro}.
As such, component (iii) can be accomplished by solving the following fundamental problem from linear algebra:

\begin{problem}
\label{prob.leading espace}
Given a symmetric matrix $A\in\mathbb{R}^{n\times n}$ and an eigenvector $v$ of $A$, determine whether the span of $v$ is the unique leading eigenspace, that is, the corresponding eigenvalue $\lambda$ has multiplicity~$1$ and satisfies $|\lambda|>|\lambda'|$ for every other eigenvalue $\lambda'$ of $A$.
\end{problem}

Interestingly, this same problem appeared in Bandeira's original PCC theory~\cite{bandeira2015note}, but it was left unresolved.
In this paper, we fill this gap by developing a so-called power iteration detector, which applies the power iteration to a random initialization on the unit sphere.
Due to the randomness, the power iteration produces a test statistic that allows us to infer whether $(A,v)$ satisfies the desired leading eigenspace condition.
In Section \ref{sec:cert}, we pose this as a hypothesis test, and we estimate the associated error probabilities.
In addition, we show how to leverage the structure of $A$ defined by \eqref{eq.A matrix intro} and Theorem~\ref{cert_condition_intro} to compute the matrix--vector multiplication $Ax$ for any given $x$ in only $O(kmN)$ operations, thereby allowing the test statistic to be computed in linear time (up to the spectral gap of $A$ and the desired confidence for the hypothesis test).
See Figure \ref{figure.2} for an illustration of the runtime of our method.
Overall, the power iteration detector will deliver a highly confident inference on whether $(A,v)$ satisfies the leading eigenspace condition, which in turn certifies the optimality of $X$ up to the prescribed confidence level.
Of course, one may remove the need for a confidence level by opting for deterministic spectral methods, but we have no idea how to accomplish this in linear or even near-linear time.

\begin{figure}[t!]
\begin{center}
\begin{picture}(330,330)
\put(0,0){\includegraphics[width=0.7\textwidth]{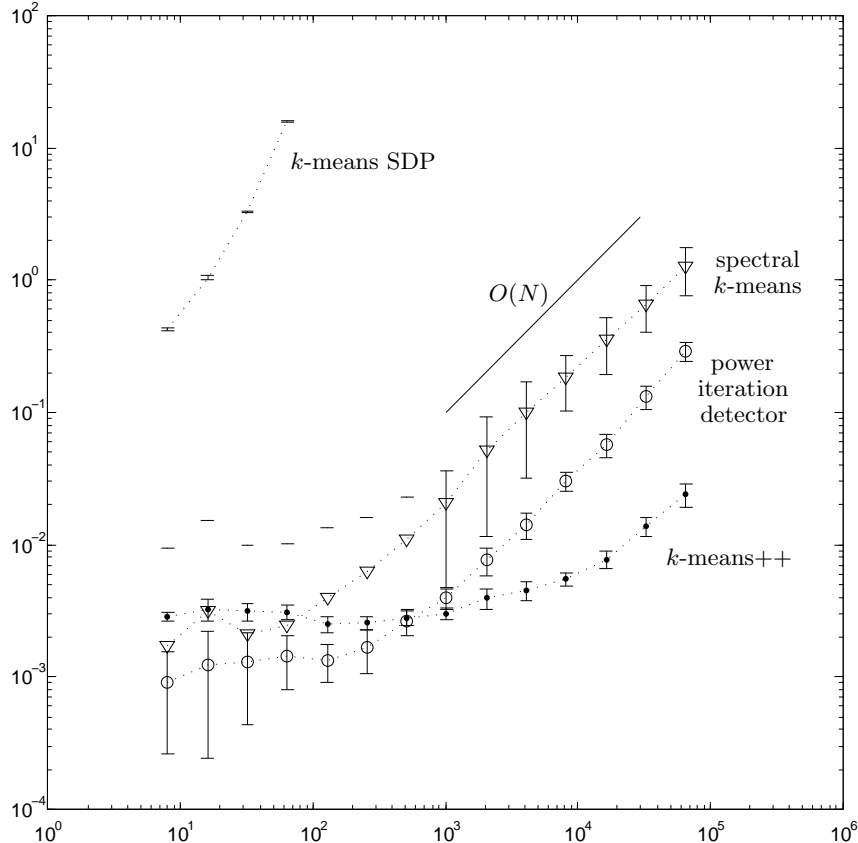}}
\put(185,210){\footnotesize{$O(N)$}}
\put(110,260){\footnotesize{$k$-means SDP}}
\put(271,223){\footnotesize{spectral}}
\put(270,213){\footnotesize{$k$-means}}
\put(269,184){\footnotesize{power}}
\put(264,174){\footnotesize{iteration}}
\put(265,164){\footnotesize{detector}}
\put(252,110){\footnotesize{$k$-means++}}
\end{picture}
\end{center}
\caption{
\label{figure.2}
{\footnotesize 
Take two unit balls in $\mathbb{R}^6$ at distance $2.3$ apart.
For each $N\in\{2^3,2^4,\ldots,2^{16}\}$, we perform $300$ trials of the following experiment:
Draw $N/2$ points uniformly at random from each ball, and then compute four different functions: (a) MATLAB's built-in implementation of $k$-means++, (b) a CVX implementation~\cite{grant2008cvx} of the $k$-means SDP~\eqref{eq.kmeansSDP}, (c) the power iteration detector (Algorithm~\ref{alg.power iteration}) with $A$ given by~\eqref{eq.A matrix intro}, and (d) spectral $k$-means clustering (Algorithm~\ref{alg.euclidean spectral clustering}).
For each trial, we recorded the runtime in seconds.
Above, we plot the average runtime along with error bars for standard deviation.
For the record, the power iteration detector failed to certify optimality (i.e., reject $H_0$ in~\eqref{eq.hypotheses}) in at most 3\% of the trials with $N\leq2^7$, but rejected $H_0$ in every trial otherwise; similarly, spectral $k$-means failed to recover the planted clusters in two of the trials with $N=2^3$.
Our implementation of the $k$-means SDP was too slow to perform trials with $N\geq 2^7$ in a reasonable amount of time, so we only recorded runtimes for $N\in\{2^3,2^4,2^5,2^6\}$.
As the plot illustrates, the other algorithms ran in quasilinear time, as expected.}
\normalsize}
\end{figure}

Now that we have discussed components (ii) and (iii) in Bandeira's PCC technique, we conclude by discussing component (i).
While we presume that there exists a fast initialization of Lloyd's algorithm that performs well under the stochastic ball model, we leave this investigation for future research.
Instead, Section \ref{sec:fast_solver} considers a spectral method introduced by Peng and Wei in \cite{peng2007approximating}.
We show that when $k=2$, this method performs as well as the optimizer of the original $k$-means problem under the stochastic ball model.
Figure \ref{figure.2} illustrates the quasilinear runtime of this approach.

\subsection{Outline}

In this paper, we provide a theoretical analysis of probably certifiably correct $k$-means clustering, and we do so by developing components (i), (ii) and (iii) of Bandeira's general technique.
First, we investigate (ii) in Section \ref{sec:sdp} by analyzing the tightness of the Peng--Wei SDP.
In particular, we choose a different dual certificate from the one used in \cite{relax}, and our choice demonstrates tightness in the SDP for clusters that are near-optimally close.
Section \ref{sec:cert} then addresses (iii) by providing a fast method of computing this dual certificate given the optimal $k$-means partition.
In fact, a subroutine of our method (the so-called power iteration detector) resolves a gap in Bandeira's original PCC theory~\cite{bandeira2015note}, and as such, we expect this to be leveraged in future PCC algorithms.
We conclude in Section \ref{sec:fast_solver} with some theoretical guarantees for (i).
Here, we focus on the case $k=2$, and we show that a slight modification of the spectral clustering--based method in \cite{peng2007approximating} manages to recover the optimal $k$-means partition with high probability under the stochastic ball model.
We conclude in Section \ref{sec:discussion} with a discussion of various open problems.

\section{A typically tight relaxation of $k$-means}
\label{sec:sdp}

This section establishes that the Peng--Wei semidefinite relaxation \eqref{eq.kmeansSDP} of the $k$-means problem \eqref{kmeans} is typically tight under the stochastic ball model.
First, we find a deterministic condition on the set of points under which the relaxation finds the $k$-means-optimal solution.
Later, we discuss when this deterministic condition is satisfied with high probability under the stochastic ball model.

\subsection{The dual program} \label{sec:sdp_intro}
The following is the dual program of \eqref{eq.kmeansSDP}:
\begin{alignat}{2}
\label{eq.kmeansSDPdual}
& \text{minimize}  &       & kz+\sum_{i=1}^N\alpha_i \\
\nonumber
& \text{subject to}& \quad & 
\begin{aligned}[t]
Q:=zI+\sum_{i=1}^N\alpha_i\cdot\frac{1}{2}(e_i1^\top+1e_i^\top)-\sum_{i=1}^N\sum_{j=i}^N\beta_{ij}\cdot\frac{1}{2}(e_ie_j^\top+e_je_i^\top)+D&\succeq0\\
\beta&\geq0
\end{aligned}
\end{alignat}

For notational simplicity, from this point forward, we organize indices according to clusters.
For example, $1_a$ shall denote the indicator function of the $a$th cluster.
Also, we shuffle the rows and columns of $X$ and $D$ into blocks that correspond to clusters; for example, the $(i,j)$th entry of the $(a,b)$th block of $D$ is given by $D^{(a,b)}_{ij}$.
We also index $\alpha$ in terms of clusters; for example, the $i$th entry of the $a$th block of $\alpha$ is denoted $\alpha_{a,i}$.
For $\beta$, we identify
\[
\beta:=\sum_{i=1}^N\sum_{j=i}^N\beta_{ij}\cdot\frac{1}{2}(e_ie_j^\top+e_je_i^\top).
\]
Indeed, when $i\leq j$, the $(i,j)$th entry of $\beta$ is $\beta_{ij}$.
We also consider $\beta$ as having its rows and columns shuffled according to clusters, so that the $(i,j)$th entry of the $(a,b)$th block is $\beta^{(a,b)}_{ij}$.

With this notation, the following proposition characterizes all possible dual certificates of \eqref{eq.kmeansSDP}:

\begin{proposition}[Theorem~4 in~\cite{iguchi2015tightness}, cf.\ \cite{relax}]
\label{thm.integral optimality}
Take $X:=\sum_{a=1}^k\frac{1}{n_a}1_a1_a^\top$, where $n_a$ denotes the number of points in cluster $t$.
The following are equivalent:
\begin{itemize}
\item[(a)]
$X$ is a solution to the semidefinite relaxation \eqref{eq.kmeansSDP}.
\item[(b)]
Every solution to the dual program \eqref{eq.kmeansSDPdual} satisfies
\[
Q^{(a,a)}1=0,
\qquad
\beta^{(a,a)}=0
\qquad
\forall a\in\{1,\ldots,k\}.
\]
\item[(c)]
Every solution to the dual program \eqref{eq.kmeansSDPdual} satisfies
\[
\alpha_{a,r}
=-\frac{1}{n_a}z+\frac{1}{n_a^2}1^\top D^{(a,a)}1-\frac{2}{n_a}e_r^\top D^{(a,a)}1
\qquad
\forall a\in\{1,\ldots,k\},~r\in a.
\]
\end{itemize}
\end{proposition}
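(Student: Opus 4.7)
The plan is to obtain the equivalence through strong duality together with a careful unfolding of complementary slackness. The Peng--Wei SDP admits a strictly feasible primal point (for $k\geq 2$, one may take $X_0=\frac{k-1}{N-1}I+\frac{N-k}{N(N-1)}11^\top$, which is entrywise strictly positive and positive definite while satisfying $X_01=1$ and $\operatorname{Tr}(X_0)=k$), so Slater's condition holds and strong duality is in force. Consequently, a primal-feasible $X$ is optimal if and only if there exists a dual-feasible triple $(z,\alpha,\beta)$ such that $(X,z,\alpha,\beta)$ satisfies complementary slackness, which in our setting amounts to $\langle Q,X\rangle=0$ together with $\beta_{ij}X_{ij}=0$ for all $i,j$.

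For (a) $\Leftrightarrow$ (b), I would exploit the block structure: $X^{(a,b)}=\frac{1}{n_a}11^\top$ when $a=b$ and $0$ otherwise. Since $X_{ij}>0$ precisely when $i,j$ lie in the same cluster, the entrywise slackness is equivalent to $\beta^{(a,a)}=0$ for each $a$. Since $Q,X\succeq 0$ with $\operatorname{Tr}(QX)=0$ gives $QX=0$, the block computation $(QX)^{(a,b)}=\frac{1}{n_b}(Q^{(a,b)}1)\,1^\top$ forces $Q^{(a,b)}1=0$ for every pair, in particular the diagonal claim $Q^{(a,a)}1=0$ of (b). Conversely, $Q^{(a,a)}1=0$ already yields $\langle Q,X\rangle=\sum_a\frac{1}{n_a}1^\top Q^{(a,a)}1=0$, and $\beta^{(a,a)}=0$ combined with the across-cluster vanishing of $X$ gives $\beta_{ij}X_{ij}=0$ everywhere, so (b) is strong enough to certify complementary slackness and hence primal-dual optimality (the off-diagonal $Q^{(a,b)}1=0$ then comes back for free through $QX=0$).

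For (b) $\Leftrightarrow$ (c), the argument is a direct linear algebra computation. Setting $\beta^{(a,a)}=0$ and expanding the definition of $Q$ on the $(a,a)$-block gives
\[
Q^{(a,a)}1 \;=\; z\cdot 1 + \tfrac{n_a}{2}\alpha^{(a)}+\tfrac{1}{2}(1^\top\alpha^{(a)})\,1+D^{(a,a)}1,
\]
where $\alpha^{(a)}$ denotes the restriction of $\alpha$ to cluster $a$. Imposing $Q^{(a,a)}1=0$ yields a linear relation for $\alpha^{(a)}$; dotting with $1$ determines the scalar $1^\top\alpha^{(a)}=-z-\frac{1}{n_a}1^\top D^{(a,a)}1$, and resubstituting produces exactly the closed-form expression in (c). The manipulation is reversible, so (c) returns $Q^{(a,a)}1=0$ as well.

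I expect the main obstacle to be mostly bookkeeping: confirming strict primal feasibility to invoke Slater, and recognizing that the weaker conjunction in (b)---which drops the off-diagonal conditions $Q^{(a,b)}1=0$ for $a\neq b$---is still enough to establish $\langle Q,X\rangle=0$, which holds precisely because $X$ has block-diagonal support and $Q\succeq 0$ then upgrades $\operatorname{Tr}(QX)=0$ to $QX=0$. Once those structural points are in place, the algebra producing the closed form for $\alpha_{a,r}$ in (c) is a short one-step substitution.
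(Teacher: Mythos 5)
The paper does not prove this proposition itself; it cites Theorem~4 of the authors' earlier preprint \cite{iguchi2015tightness}, so there is no in-paper proof to compare against, and the evaluation below concerns the proposal's own correctness. Your overall strategy (Slater $\Rightarrow$ strong duality with dual attainment, then characterize optimality via complementary slackness, then block algebra) is the natural one and is surely close to the original argument. The strictly feasible point works for $2\leq k<N$, the (a)$\Leftrightarrow$(b) equivalence is argued correctly (including the upgrade $\operatorname{Tr}(QX)=0\Rightarrow QX=0$ and the observation that the off-diagonal blocks of $Q$ come along for free), and the algebra from (b) to the closed form in (c) checks out.

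The gap is in the direction (c)$\Rightarrow$(b). You write that ``the manipulation is reversible, so (c) returns $Q^{(a,a)}1=0$ as well,'' but the forward computation explicitly set $\beta^{(a,a)}=0$ before imposing $Q^{(a,a)}1=0$. Reversing that computation from (c) alone gives only $Q^{(a,a)}1=-\beta^{(a,a)}1$, and (c) places no constraint on $\beta$ — so you have recovered neither $Q^{(a,a)}1=0$ nor, in particular, the second half of (b), $\beta^{(a,a)}=0$, which your sentence does not even claim to reproduce. The fix is short and uses dual feasibility: from $Q^{(a,a)}1=-\beta^{(a,a)}1$ we get $0\leq 1_a^\top Q1_a=1^\top Q^{(a,a)}1=-1^\top\beta^{(a,a)}1\leq 0$, where the left inequality is $Q\succeq0$ and the right is $\beta\geq0$; hence $1^\top\beta^{(a,a)}1=0$, entrywise nonnegativity forces $\beta^{(a,a)}=0$, and then $Q^{(a,a)}1=0$. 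Alternatively one can close the cycle by showing (c)$\Rightarrow$(a) directly: the formula in (c) pins the dual objective of any dual-optimal $(z,\alpha,\beta)$ to $-\operatorname{Tr}(DX)$, which by strong duality forces $\operatorname{Tr}(DX)$ to equal the primal optimum. As written, though, (c)$\Rightarrow$(b) is incomplete.
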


The following subsection will leverage this result to identify a condition on $D$ that implies that the SDP \eqref{eq.kmeansSDP} relaxation is tight.

\subsection{Selecting a dual certificate} \label{sec:deterministic}

The goal is to certify when the SDP relaxation is tight.
In this event, Proposition~\ref{thm.integral optimality} characterizes acceptable dual certificates $(z,\alpha,\beta)$, but this information fails to uniquely determine a certificate.
In this subsection, we will motivate the application of additional constraints on dual certificates so as to identify certifiable instances.

We start by reviewing the characterization of dual certificates $(z,\alpha,\beta)$ provided in Proposition~\ref{thm.integral optimality}.
In particular, $\alpha$ is completely determined by $z$, and so $z$ and $\beta$ are the only remaining free variables.
Indeed, for every $a,b\in\{1,\ldots,k\}$, we have
\begin{align*}
&\bigg(\sum_{t=1}^k\sum_{i\in t}\alpha_{t,i}\cdot\frac{1}{2}(e_{t,i}1^\top+1e_{t,i}^\top)\bigg)^{(a,b)}\\
&\qquad=\sum_{i\in a}\alpha_{a,i}\cdot\frac{1}{2}e_i1^\top+\sum_{j\in b}\alpha_{b,j}\cdot\frac{1}{2}1e_j^\top\\
&\qquad=-\frac{1}{2}\bigg(\frac{1}{n_a}+\frac{1}{n_b}\bigg)z+\sum_{i\in a}\bigg(\frac{1}{n_a^2}1^\top D^{(a,a)}1-\frac{2}{n_a}e_i^\top D^{(a,a)}1\bigg)\frac{1}{2}e_i1^\top\\
&\qquad\qquad+\sum_{j\in b}\bigg(\frac{1}{n_b^2}1^\top D^{(b,b)}1-\frac{2}{n_b}e_j^\top D^{(b,b)}1\bigg)\frac{1}{2}1e_j^\top,
\end{align*}
and so since
\[
Q
=zI+\sum_{t=1}^k\sum_{i\in t}\alpha_{t,i}\cdot\frac{1}{2}(e_{t,i}1^\top+1e_{t,i}^\top)-\frac{1}{2}\beta+D,
\]
we may write $Q=z(I-E)+M-B$, where
\begin{align}
\label{eq.definition of E}
E^{(a,b)}
&:=\frac{1}{2}\bigg(\frac{1}{n_a}+\frac{1}{n_b}\bigg)11^\top\\
\nonumber
M^{(a,b)}
&:=D^{(a,b)}+\sum_{i\in a}\bigg(\frac{1}{n_a^2}1^\top D^{(a,a)}1-\frac{2}{n_a}e_i^\top D^{(a,a)}1\bigg)\frac{1}{2}e_i1^\top\\
\label{eq.definition of M}
&\qquad+\sum_{j\in b}\bigg(\frac{1}{n_b^2}1^\top D^{(b,b)}1-\frac{2}{n_b}e_j^\top D^{(b,b)}1\bigg)\frac{1}{2}1e_j^\top\\
\nonumber
B^{(a,b)}
&=\frac{1}{2}\beta^{(a,b)}
\end{align}
for every $a,b\in\{1,\ldots,k\}$.
The following is one way to formulate our task:
Given $D$ and a clustering $X$ (which in turn determines $E$ and $M$), determine whether there exist feasible $z$ and $B$ such that $Q\succeq0$; here, feasibility only requires $B$ to be symmetric with nonnegative entries and $B^{(a,a)}=0$ for every $a\in\{1,\ldots,k\}$.
We opt for a slightly more modest goal:
Find $z=z(D,X)$ and $B=B(D,X)$ such that $Q\succeq0$ for a large family of $D$'s.

Before determining $z$ and $B$, we first analyze $E$:

\begin{lemma}
\label{lemma.eigen of E}
Let $E$ be the matrix defined by \eqref{eq.definition of E}.
Then $\operatorname{rank}(E)\in\{1,2\}$.
The eigenvalue of largest magnitude is $\lambda\geq k$, and when $\operatorname{rank}(E)=2$, the other nonzero eigenvalue of $E$ is negative.
The eigenvectors corresponding to nonzero eigenvalues lie in the span of $\{1_a\}_{a=1}^k$.
\end{lemma}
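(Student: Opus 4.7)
My plan is to rewrite $E$ as the symmetrization of two rank-one matrices and read off all the claims from that representation. Setting $u := \sum_{a=1}^k \frac{1}{n_a} 1_a$ and using $1 = \sum_a 1_a$ for the all-ones vector in $\mathbb{R}^N$, the definition \eqref{eq.definition of E} gives
\[
E = \frac{1}{2}\sum_{a,b}\Bigl(\tfrac{1}{n_a}+\tfrac{1}{n_b}\Bigr) 1_a 1_b^\top = \frac{1}{2}\bigl(u\, 1^\top + 1\, u^\top\bigr).
\]
This already yields two of the conclusions: the range of $E$ is contained in $\mathrm{span}(u,1)$, which in turn lies in $\mathrm{span}\{1_a\}_{a=1}^k$, so $\mathrm{rank}(E)\le 2$ and the nonzero eigenvectors are supported on the indicator span.

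Next, I would compute the action of $E$ on the (at most) two-dimensional invariant subspace $\mathrm{span}(u,1)$. Using the inner products $\langle u,1\rangle = \sum_a \tfrac{1}{n_a}\cdot n_a = k$ and $\langle 1,1\rangle = N$, one finds $Eu = \tfrac{1}{2}(ku + \|u\|^2\, 1)$ and $E\, 1 = \tfrac{1}{2}(N u + k\, 1)$, so in the basis $(u,1)$ the restriction of $E$ has matrix
\[
\begin{pmatrix} k/2 & N/2 \\ \|u\|^2/2 & k/2 \end{pmatrix},
\]
with trace $k$ and determinant $(k^2 - N\|u\|^2)/4$. Cauchy--Schwarz applied to $\langle u,1\rangle = k$ gives $N\|u\|^2 \ge k^2$, with equality iff $u$ is parallel to $1$, i.e.\ all $n_a$ are equal; so the determinant is nonpositive, with equality characterizing the rank-one case.

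From here the eigenvalue statements drop out. The two (possibly equal) eigenvalues are $\lambda_\pm = \tfrac{1}{2}\bigl(k \pm \sqrt{N\|u\|^2}\bigr)$, so $\lambda_+ \ge k$ and $\lambda_- \le 0$, with both inequalities strict precisely when $\mathrm{rank}(E)=2$. Since $\lambda_+ + \lambda_- = k > 0$, we have $\lambda_+ > -\lambda_- = |\lambda_-|$, so $\lambda_+$ is the eigenvalue of largest magnitude and satisfies $\lambda_+ \ge k$ as claimed; in the rank-one case (all clusters of equal size) the unique nonzero eigenvalue is exactly $k$. There is no real obstacle here—the only mildly delicate point is keeping track of the sign of the second eigenvalue, but Cauchy--Schwarz and the positive trace settle this cleanly.
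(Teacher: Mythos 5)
Your proof is correct, and it starts from the same decomposition $E=\tfrac12(u\,1^\top+1\,u^\top)$ that the paper uses, but from there it takes a more explicit route. The paper argues softly: it computes $1^\top E1=Nk$ and $\tr(E)=k$, lower-bounds the top eigenvalue via the Rayleigh quotient $\lambda\ge \tfrac1N 1^\top E1=k$, and then uses $\operatorname{rank}(E)\le 2$ together with $\tr(E)=k$ to force the second nonzero eigenvalue (if any) to be $k-\lambda\le 0$. You instead restrict $E$ to the invariant subspace $\operatorname{span}(u,1)$, write down the $2\times2$ matrix of that restriction in the (non-orthogonal) basis $(u,1)$, and read off the exact eigenvalues $\lambda_\pm=\tfrac12\bigl(k\pm\sqrt{N\|u\|^2}\bigr)$, with Cauchy--Schwarz ($\langle u,1\rangle=k$) giving $N\|u\|^2\ge k^2$. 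This costs slightly more computation but buys more: you get closed-form eigenvalues, the clean inequality $\lambda_+>|\lambda_-|$ from $\lambda_++\lambda_-=k>0$, and the characterization that $\operatorname{rank}(E)=1$ precisely when all $n_a$ are equal, none of which the paper spells out. (Minor point worth being aware of, though it does not affect correctness: you are diagonalizing in a non-orthonormal basis, which is fine for eigenvalues of the restricted operator since those are basis-independent, but the $2\times2$ matrix you wrote is not symmetric and should not be confused with the Gram-type representation one would get in an orthonormal basis.)
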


\begin{proof}
Writing
\begin{equation*}
E
=\sum_{a=1}^k\sum_{b=1}^k\frac{1}{2}\bigg(\frac{1}{n_a}+\frac{1}{n_b}\bigg)1_a1_b^\top
=\frac{1}{2}\bigg(\sum_{a=1}^k\frac{1}{n_a}1_a\bigg)1^\top+\frac{1}{2}1\bigg(\sum_{b=1}^k\frac{1}{n_b}1_b\bigg)^\top,
\end{equation*}
we see that $\operatorname{rank}(E)\in\{1,2\}$, and it is easy to calculate $1^\top E1=Nk$ and $\operatorname{Tr}(E)=k$.
Observe that
\[
\lambda
=\sup_{\substack{x\in\mathbb{R}^N\\\|x\|_2=1}}x^\top Ex
\geq \frac{1}{N}1^\top E1
=k,
\]
and combining with $\operatorname{rank}(E)\leq2$ and $\operatorname{Tr}(E)=k$ then implies that the other nonzero eigenvalue (if there is one) is negative.
Finally, any eigenvector of $E$ with a nonzero eigenvalue necessarily lies in the column space of $E$, which is a subspace of $\operatorname{span}\{1_a\}_{a=1}^k$ by the definition of $E$.
\end{proof}

When finding $z$ and $B$ such that $Q=z(I-E)+M-B\succeq0$, it will be useful that $I-E$ has only one negative eigenvalue to correct.
Let $v$ denote the corresponding eigenvector.
Then we will pick $B$ so that $v$ is also an eigenvector of $M-B$.
Since we want $Q\succeq0$ for as many instances of $D$ as possible, we will then pick $z$ as large as possible, thereby sending $v$ to the nullspace of $Q$.
Unfortunately, the authors found that this constraint fails to uniquely determine $B$ in general.
Instead, we impose a stronger constraint:
\[
Q1_a=0
\qquad
\forall a\in\{1,\ldots,k\}.
\]
(This constraint implies $Qv=0$ by Lemma~\ref{lemma.eigen of E}.)
To see the implications of this constraint, note that we already necessarily have
\[
(Q1_a)_a
=\Big((z(I-E)+M-B)1_a\Big)_a
=z(I-E^{(a,a)})1+M^{(a,a)}1-B^{(a,a)}1
=z\bigg(1-\frac{1}{n_a}11^\top1\bigg)
=0,
\]
and so it remains to impose
\begin{align}
\nonumber
0
=(Q1_b)_a
&=\Big((z(I-E)+M-B)1_b\Big)_a\\
\label{eq.requirement for B}
&=-zE^{(a,b)}1+M^{(a,b)}1-B^{(a,b)}1
=-z\frac{n_a+n_b}{2n_a}1+M^{(a,b)}1-B^{(a,b)}1.
\end{align}
In order for there to exist a vector $B^{(a,b)}1\geq0$ that satisfies \eqref{eq.requirement for B}, $z$ must satisfy
\[
z\frac{n_a+n_b}{2n_a}\leq\min(M^{(a,b)}1),
\]
and since $z$ is independent of $(a,b)$, we conclude that
\begin{equation}
\label{eq.bound on z}
z\leq\min_{\substack{a,b\in\{1,\ldots,k\}\\a\neq b}}\frac{2n_a}{n_a+n_b}\min(M^{(a,b)}1).
\end{equation}
Again, in order to ensure $z(I-E)+M-B\succeq0$ for as many instances of $D$ as possible, we intend to choose $z$ as large as possible.
Luckily, there is a choice of $B$ which satisfies \eqref{eq.requirement for B} for every $(a,b)$, even when $z$ satisfies equality in \eqref{eq.bound on z}.
Indeed, we define
\begin{equation}
\label{eq.how to construct B}
u_{(a,b)}
:=M^{(a,b)}1-z\frac{n_a+n_b}{2n_a}1,
\qquad
\rho_{(a,b)}
:=u_{(a,b)}^\top 1,
\qquad
B^{(a,b)}
:=\frac{1}{\rho_{(b,a)}}u_{(a,b)}u_{(b,a)}^\top
\end{equation}
for every $a,b\in\{1,\ldots,k\}$ with $a\neq b$.
Then by design, $B$ immediately satisfies \eqref{eq.requirement for B}.
Also, note that $\rho_{(a,b)}=\rho_{(b,a)}$, and so $B^{(b,a)}=(B^{(a,b)})^\top$, meaning $B$ is symmetric.
Finally, we necessarily have $u_{(a,b)}\geq0$ (and thus $\rho_{(a,b)}\geq0$) by \eqref{eq.bound on z}, and we implicitly require $\rho_{(a,b)}>0$ for division to be permissible.
As such, we also have $B^{(a,b)}\geq0$, as desired.

Now that we have selected $z$ and $B$, it remains to check that $Q\succeq0$.
By construction, we already have $\Lambda:=\operatorname{span}\{1_a\}_{a=1}^k$ in the nullspace of $Q$, and so it suffices to ensure
\[
0
\preceq P_{\Lambda^\perp}QP_{\Lambda^\perp}
=P_{\Lambda^\perp}\Big(z(I-E)+M-B\Big)P_{\Lambda^\perp}
=zP_{\Lambda^\perp}+P_{\Lambda^\perp}(M-B)P_{\Lambda^\perp}.
\]
Here, $P_{\Lambda^\perp}$ denotes the orthogonal projection onto the orthogonal complement of $\Lambda$.
Rearranging then gives the following result:

\begin{theorem}
\label{thm.dual certificate}
Take $X:=\sum_{t=1}^k\frac{1}{n_t}1_t1_t^\top$, where $n_t$ denotes the number of points in cluster $t$.
Consider $M$ defined by \eqref{eq.definition of M}, pick $z$ so as to satisfy equality in \eqref{eq.bound on z}, take $B$ defined by \eqref{eq.how to construct B}, and let $\Lambda$ denote the span of $\{1_t\}_{t=1}^k$.
Then $X$ is a solution to the semidefinite relaxation \eqref{eq.kmeansSDP} if
\begin{equation}
\label{cert_condition}
P_{\Lambda^\perp}(B-M)P_{\Lambda^\perp}\preceq zP_{\Lambda^\perp}.
\end{equation}
\end{theorem}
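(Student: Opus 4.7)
The plan is to exhibit an explicit dual feasible point that satisfies complementary slackness with the proposed primal $X$, and thereby conclude optimality of $X$ via Proposition~\ref{thm.integral optimality}. The candidate certificate has already been assembled in the discussion preceding the theorem: take $z$ to satisfy equality in \eqref{eq.bound on z}, take $\alpha$ determined by $z$ through Proposition~\ref{thm.integral optimality}(c), and let $\beta$ be the symmetric matrix whose off-diagonal blocks come from $B$ in \eqref{eq.how to construct B} and whose diagonal blocks vanish.

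First I would check dual feasibility apart from positive semidefiniteness of $Q$. The entrywise nonnegativity $\beta\geq 0$ is inherited from $B^{(a,b)}=\rho_{(b,a)}^{-1}u_{(a,b)}u_{(b,a)}^\top\geq 0$, which in turn follows from $u_{(a,b)}\geq 0$ (a direct consequence of the choice of $z$ in \eqref{eq.bound on z}). Symmetry $B^{(b,a)}=(B^{(a,b)})^\top$ follows from $\rho_{(a,b)}=\rho_{(b,a)}$, which holds because $1^\top M^{(a,b)} 1 = 1^\top M^{(b,a)} 1$ (since $M$ is symmetric) and $z$ is shared across blocks.

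Next I would verify that $Q1_a=0$ for every $a$. The $b=a$ block is automatically zero, as computed in the paragraph preceding the theorem (using $E^{(a,a)}=\tfrac{1}{n_a}11^\top$, $M^{(a,a)}1=0$ by a row-sum calculation on \eqref{eq.definition of M}, and $B^{(a,a)}=0$). For $b\neq a$ the identity \eqref{eq.requirement for B} is enforced by \eqref{eq.how to construct B}, since $B^{(a,b)}1=\rho_{(b,a)}^{-1}u_{(a,b)}(u_{(b,a)}^\top 1)=u_{(a,b)}$. Consequently $\Lambda:=\operatorname{span}\{1_a\}_{a=1}^k$ lies in the nullspace of $Q$, which gives $QX=0$ (every column of $X$ belongs to $\Lambda$), while $\langle \beta,X\rangle=0$ because $X$ is block-diagonal and $\beta$ has vanishing diagonal blocks. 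These are precisely the complementary slackness conditions that make the dual objective equal to $\operatorname{Tr}(DX)$.

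All that remains is $Q\succeq 0$. Since $\Lambda\subseteq\ker Q$, this is equivalent to $P_{\Lambda^\perp}QP_{\Lambda^\perp}\succeq 0$. By Lemma~\ref{lemma.eigen of E}, every eigenvector of $E$ with nonzero eigenvalue lies in $\Lambda$, so $P_{\Lambda^\perp}EP_{\Lambda^\perp}=0$, and therefore
\[
P_{\Lambda^\perp}QP_{\Lambda^\perp}=zP_{\Lambda^\perp}+P_{\Lambda^\perp}(M-B)P_{\Lambda^\perp}.
\]
Demanding this be positive semidefinite is exactly the hypothesis \eqref{cert_condition}, which closes the argument. The main conceptual obstacle — existence of a dual certificate satisfying the stringent pointwise constraint $Q1_a=0$ for every $a$ — was already resolved in the derivation preceding the theorem, where $z$ and $B$ were reverse-engineered to enforce those equalities while keeping $\beta\geq 0$. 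The genuine analytic difficulty lies instead in verifying \eqref{cert_condition} under the stochastic ball model, which the paper defers to its appendix.
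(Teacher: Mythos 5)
Your proposal follows essentially the same route the paper takes: express $Q=z(I-E)+M-B$, enforce $Q1_a=0$ via the construction of $B$ in \eqref{eq.how to construct B}, verify $\beta\geq 0$ and symmetry from the choice of $z$ in \eqref{eq.bound on z}, invoke Lemma~\ref{lemma.eigen of E} to see that $P_{\Lambda^\perp}E P_{\Lambda^\perp}=0$, and reduce $Q\succeq 0$ to $P_{\Lambda^\perp}Q P_{\Lambda^\perp}\succeq 0$, which rearranges to \eqref{cert_condition}. The only cosmetic difference is that you spell out the complementary-slackness bookkeeping ($QX=0$ and $\langle\beta,X\rangle=0$) explicitly, whereas the paper leaves it implicit in its invocation of Proposition~\ref{thm.integral optimality}; the substance is identical.
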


The next subsection leverages this sufficient condition to establish that the Peng--Wei SDP \eqref{eq.kmeansSDP} is typically tight under the stochastic ball model.

\subsection{Integrality of the relaxation under the stochastic ball model} \label{sec:SBM_SDP}

We first note that our sufficient condition \eqref{cert_condition} is implied by
\[
\|P_{\Lambda^\perp}MP_{\Lambda^\perp}\|_{2\rightarrow2}+\|P_{\Lambda^\perp}BP_{\Lambda^\perp}\|_{2\rightarrow2}\leq z.
\]
By further analyzing the left-hand side above (see Appendix~\ref{sec:appendix_corollary}), we arrive at the following corollary:

\begin{corollary}
\label{cor.dual certificate}
Take $X:=\sum_{t=1}^k\frac{1}{n_t}1_t1_t^\top$, where $n_t$ denotes the number of points in cluster $t$.
Let $\Psi$ denote the $m\times N$ matrix whose $(a,i)$th column is $x_{a,i}-c_a$, where
\[
c_a:=\frac{1}{n_a}\sum_{i\in a}x_{a,i}
\]
denotes the empirical center of cluster $a$.
Consider $M$ defined by \eqref{eq.definition of M}, pick $z$ so as to satisfy equality in \eqref{eq.bound on z}, and take $\rho_{(a,b)}$ defined by \eqref{eq.how to construct B}.
Then $X$ is a solution to the semidefinite relaxation \eqref{eq.kmeansSDP} if
\[
2\|\Psi\|_{2\rightarrow2}^2+\sum_{a=1}^k\sum_{b=a+1}^k\frac{\|P_{1^\perp}M^{(a,b)}1\|_2\|P_{1^\perp}M^{(b,a)}1\|_2}{\rho_{(a,b)}}
\leq z.
\]
\end{corollary}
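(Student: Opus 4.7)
The plan is to derive the corollary from Theorem~\ref{thm.dual certificate} via the stronger-than-needed sufficient condition flagged in the remark just before the statement, namely
\[
\|P_{\Lambda^\perp}MP_{\Lambda^\perp}\|_{2\to 2}+\|P_{\Lambda^\perp}BP_{\Lambda^\perp}\|_{2\to 2}\le z.
\]
Since $P_{\Lambda^\perp}(B-M)P_{\Lambda^\perp}$ is supported on $\Lambda^\perp$, an operator-norm bound of $z$ automatically gives $P_{\Lambda^\perp}(B-M)P_{\Lambda^\perp}\preceq zP_{\Lambda^\perp}$, which is \eqref{cert_condition}. Thus the task reduces to controlling the two operator norms separately.

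For the first norm, I would use the standard expansion $D=d\,1^\top+1\,d^\top-2\,Y^\top Y$, where $d_i=\|x_i\|_2^2$ and $Y\in\mathbb R^{m\times N}$ has the data points as columns. Since $1\in\Lambda$, the two rank-one summands are annihilated by $P_{\Lambda^\perp}$. Writing $Y=\Psi+C$, where $C$ places the cluster center $c_a$ in every column indexed by cluster $a$, the row space of $C$ lies in $\Lambda$, so $CP_{\Lambda^\perp}=0$; expanding $Y^\top Y$ then yields $P_{\Lambda^\perp}Y^\top Y P_{\Lambda^\perp}=P_{\Lambda^\perp}\Psi^\top\Psi P_{\Lambda^\perp}$. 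Direct inspection of \eqref{eq.definition of M} shows that the correction $M-D$ has global form $\tfrac12(\tilde c\,1^\top+1\,\tilde c^\top)$ for some vector $\tilde c\in\mathbb R^N$ assembled from the within-cluster corrections, and is therefore also killed on both sides by $P_{\Lambda^\perp}$. Combining, $P_{\Lambda^\perp}MP_{\Lambda^\perp}=-2\,P_{\Lambda^\perp}\Psi^\top\Psi P_{\Lambda^\perp}$, whose operator norm is at most $2\|\Psi\|_{2\to 2}^2$.

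For the second norm, I would split $B=\sum_{a<b}B^{\{a,b\}}$ by unordered pairs of clusters, with each piece written globally as $B^{\{a,b\}}=\rho_{(a,b)}^{-1}(\hat u_{ab}\hat u_{ba}^\top+\hat u_{ba}\hat u_{ab}^\top)$, where $\hat u_{ab}\in\mathbb R^N$ is the zero-extension of $u_{(a,b)}$ to cluster $a$. Because $P_{\Lambda^\perp}$ acts on $\hat u_{ab}$ by merely subtracting its cluster-$a$ mean, the constant shift $z(n_a+n_b)/(2n_a)\cdot 1$ appearing in \eqref{eq.how to construct B} is annihilated, yielding $\|P_{\Lambda^\perp}\hat u_{ab}\|_2=\|P_{1^\perp}M^{(a,b)}1\|_2$. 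The projected vectors remain supported on disjoint clusters, hence orthogonal, and a one-line calculation shows that a symmetric matrix $pq^\top+qp^\top$ with $p\perp q$ has operator norm $\|p\|_2\|q\|_2$. A final triangle inequality over the pairs $(a,b)$ delivers the stated bound.

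The main obstacle is the bookkeeping in the first step: one has to recognize that every term in $Y^\top Y$, in $M-D$, and in the linear parts of $D$ which is not a pure bilinear form in $\Psi$ can be written as an outer product with the all-ones vector $1$ or with a cluster indicator $1_a$, and therefore vanishes under projection onto $\Lambda^\perp$. Once this structural decomposition is in place, the rest of the argument is elementary linear algebra.
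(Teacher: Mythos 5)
Your proposal is correct and follows essentially the same route as the paper's proof: bound $\|P_{\Lambda^\perp}MP_{\Lambda^\perp}\|_{2\to2}$ by $2\|\Psi\|_{2\to2}^2$ after observing that the linear-in-$1$ parts of $D$ and the correction $M-D$ are annihilated by $P_{\Lambda^\perp}$, then split $B$ into per-pair blocks, project, use orthogonality of the two cluster supports to compute the norm of each $pq^\top+qp^\top$ piece exactly, and apply the triangle inequality. The only cosmetic differences are that you introduce the decomposition $Y=\Psi+C$ where the paper simply uses $\Phi P_{\Lambda^\perp}=\Psi$, and you phrase the block bookkeeping via zero-extensions $\hat u_{ab}$ where the paper uses the embedding maps $H_{(a,b)}$; these are equivalent.
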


In Appendix~\ref{sec:appendix_theorem}, we leverage the stochastic ball model to bound each term in Corollary~\ref{cor.dual certificate}, and in doing so, we identify a regime in which the data points typically satisfy the sufficient condition given in Corollary~\ref{cor.dual certificate}:

\begin{theorem}
\label{main_theorem}
The $k$-means semidefinite relaxation \eqref{eq.kmeansSDP} recovers the planted clusters in the $(\mathcal D, \gamma,n)$-stochastic ball model with probability $1-e^{-\Omega_{\mathcal D, \gamma}(n)}$ provided $\Delta>2+k^2/m$.
\end{theorem}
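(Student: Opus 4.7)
The plan is to invoke Corollary~\ref{cor.dual certificate}: it suffices to show that, with probability $1-e^{-\Omega_{\mathcal D,\gamma}(n)}$ under the stochastic ball model,
\[
2\|\Psi\|_{2\to 2}^2+\sum_{a<b}\frac{\|P_{1^\perp}M^{(a,b)}1\|_2\,\|P_{1^\perp}M^{(b,a)}1\|_2}{\rho_{(a,b)}}\leq z
\]
holds whenever $\Delta>2+k^2/m$. The strategy is to estimate each of the four quantities appearing in this inequality via its SBM expectation together with a concentration inequality, and then to verify that the resulting deterministic bound yields the claimed separation threshold.

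The centerpiece is the expectation of $(M^{(a,b)}1)_i$ for $a\neq b$. Expanding $D^{(a,b)}_{ij}=\|\gamma_a-\gamma_b\|_2^2+2\langle r_{a,i}-r_{b,j},\gamma_a-\gamma_b\rangle+\|r_{a,i}-r_{b,j}\|_2^2$, substituting into \eqref{eq.definition of M}, and using $\mathbb{E}\,r=0$ to collapse the averages over $j\in a$ and $j\in b$, the bulk cancels and one is left with
\[
(M^{(a,b)}1)_i\approx n_b\|\gamma_a-\gamma_b\|_2^2 + 2n_b\langle r_{a,i},\gamma_a-\gamma_b\rangle
\]
plus lower-order corrections. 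Taking a minimum over $i\in a$ and applying subgaussian tail bounds to the linear forms $\langle r_{a,i},\gamma_a-\gamma_b\rangle$ (which have scale $\Delta/\sqrt{m}$) yields the high-probability lower bound on $z$. Matrix Bernstein applied to $\Psi\Psi^\top=\sum_{a,i}(x_{a,i}-c_a)(x_{a,i}-c_a)^\top$ controls $\|\Psi\|_{2\to 2}^2$, because each summand has operator norm $O(1)$ and the covariance $\mathbb{E}\,rr^\top$ has operator norm $\Theta(1/m)$ for rotation-invariant distributions on the unit ball; a parallel subgaussian computation bounds $\|P_{1^\perp}M^{(a,b)}1\|_2^2$ from the variance of the same linear forms summed over $i\in a$.

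The main obstacle is the lower bound on $\rho_{(a,b)}=1^\top M^{(a,b)}1-z(n_a+n_b)/2$. Because $z$ is defined by equality in \eqref{eq.bound on z}, $\rho_{(a,b)}$ reduces essentially to a sum of excesses $(M^{(a,b)}1)_i-\min_{i'}(M^{(a,b)}1)_{i'}$ over $i\in a$: this is a pure fluctuation quantity rather than a mean quantity, so extracting the right-order lower bound requires an anti-concentration estimate for the minimum of $n$ subgaussian variables, together with a union bound over the $\binom{k}{2}$ cluster pairs to make the lower bound uniform in $(a,b)$. Once these four estimates are in place, the Corollary's sufficient inequality reduces to an algebraic condition in $\Delta$, $k$, and $m$; balancing the contribution $\|\Psi\|_{2\to 2}^2$ against $z$ and the $k^2$-many terms $\|P_{1^\perp}M^{(a,b)}1\|_2^2/\rho_{(a,b)}$ against $z$ is what forces the threshold $\Delta>2+k^2/m$.
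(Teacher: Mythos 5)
Your high-level plan matches the paper's exactly: invoke Corollary~\ref{cor.dual certificate}, estimate the four quantities ($\|\Psi\|_{2\to2}$, $\|P_{1^\perp}M^{(a,b)}1\|_2$, $\rho_{(a,b)}$, $z$) under the stochastic ball model, and assemble them. Your expansion $(M^{(a,b)}1)_i\approx n\Delta_{ab}^2+2n\langle r_{a,i},\gamma_a-\gamma_b\rangle$ is precisely the paper's Lemma~\ref{lemma.difference of distances} (with $4np_i$ rewritten), and your covariance heuristic $\|\mathbb{E}\,rr^\top\|_{2\to 2}=\Theta(1/m)$ underlying the $\|\Psi\|_{2\to2}^2$ and $\|P_{1^\perp}M^{(a,b)}1\|_2^2$ estimates is correct and matches Lemmas~\ref{lemma.bound on spectral norm of psi} and~\ref{lemma.bound on numerator}.

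The gap is in how you treat the extremal statistics that drive $z$ and $\rho_{(a,b)}$. You propose ``applying subgaussian tail bounds to the linear forms $\langle r_{a,i},\gamma_a-\gamma_b\rangle$ (which have scale $\Delta/\sqrt m$)'' to lower bound $z$, and then an ``anti-concentration estimate for the minimum of $n$ subgaussian variables'' to lower bound $\rho_{(a,b)}$ as a ``pure fluctuation quantity.'' Both steps rely on the variance-scale $\Delta/\sqrt m$, and that is the wrong scale for extreme order statistics at the confidence level the theorem requires. With $m$ fixed and $n\to\infty$, the minimum over $n$ i.i.d.\ draws of the bounded variable $\langle r,v\rangle$ converges to the support endpoint $-\|v\|_2$ (provided $\mathcal D$ places mass arbitrarily near the unit sphere), not to the variance-scale quantity $\sim -\|v\|_2\sqrt{\log n/m}$. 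Concretely: a subgaussian tail $\Pr[\langle r,v\rangle<-t]\le e^{-ct^2m/\|v\|_2^2}$ combined with a union bound over $n$ points yields failure probability $ne^{-ct^2m/\|v\|_2^2}$, which is $e^{-\Omega(n)}$ only for $t\gtrsim\|v\|_2\sqrt{n/m}$---vacuous once $n>m$, since $|\langle r,v\rangle|\le\|v\|_2$ deterministically. So for the lower bound on $z$ the subgaussian route degenerates to the deterministic Cauchy--Schwarz bound (which is what Lemma~\ref{lemma.bound on rhs} actually uses), and for the lower bound on $\rho_{(a,b)}$ the subgaussian scale gives the wrong order of magnitude.

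The $\rho_{(a,b)}$ step is where this actually breaks the argument. You are right that $\rho_{(a,b)}=1^\top M^{(a,b)}1-nz$ needs an \emph{upper} bound on $z$, hence a statement that $\min_i\langle r_{a,i},\gamma_a-O_{ab}\rangle$ is genuinely small. But the quantity is not a fluctuation of variance scale: what is needed (and what Lemma~\ref{lemma.bound on rho} proves) is that the minimum approaches the deterministic support endpoint $-\Delta_{ab}/2$, which requires only that $\|r\|_2\ge 1-\epsilon$ with positive probability; then the minimum-of-$n$-copies argument delivers the exponential rate $1-e^{-\Omega_{\mathcal D,\epsilon}(n)}$. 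This gives $\rho_{(a,b)}\gtrsim n^2\bigl(\Delta_{ab}^2-\Delta(\Delta-2)\bigr)\ge 2n^2\Delta$, a $\Theta(n^2)$ quantity. If instead one used anti-concentration at subgaussian scale, one would get $\rho_{(a,b)}\gtrsim n^2\Delta\sqrt{\log n/m}$, which plugged into $\sum_{a<b}\|P_{1^\perp}M^{(a,b)}1\|_2^2/\rho_{(a,b)}$ inflates the middle term by $\sqrt{m/\log n}$ and will not yield the clean threshold $\Delta>2+k^2/m$. To repair the argument, drop the subgaussian language entirely: use Cauchy--Schwarz deterministically for the lower bound on $z$, and use the positive-mass-near-the-sphere property of $\mathcal D$ (not a variance-scale tail) to show $\min_i\langle r_{a,i},\gamma_a-O_{ab}\rangle\le -\Delta_{ab}/2+\epsilon$ with probability $1-e^{-\Omega_{\mathcal D,\epsilon}(n)}$, then feed that into the lower bound on $\rho_{(a,b)}$.
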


We note that Theorem~\ref{main_theorem} is an improvement to the main result of the authors' preprint~\cite{iguchi2015tightness}.
When $k=o(m^{1/2})$, Theorem~\ref{main_theorem} is near-optimal, and in this sense, it's a significant improvement over the sufficient condition
\begin{equation}
\label{eq.bound from relax}
\Delta>2\sqrt{2}\bigg(1+\frac{1}{\sqrt{m}}\bigg)
\end{equation}
given in \cite{relax}.
However, there are regimes (e.g., $k=m$) for which \eqref{eq.bound from relax} is much better, leaving open the question of what the optimal bound is.
Conjecture~4 in~\cite{relax} suggests that $\Delta>2$ suffices for the $k$-means SDP to recover planted clusters under the stochastic ball model, but as we illustrate below, this conjecture is surprisingly false.

Consider the special case where $m=1$, $\mathcal{D}$ is uniform on $\{\pm1\}$, and $k=2$.
Centering the two balls on $\pm\Delta/2$, then all of the points land in $\{\pm\Delta/2\pm1\}$.
The $k$-means-optimal clustering will partition the real line into two semi-infinite intervals, and so there are three possible ways of clustering these points.
Suppose exactly $N/4$ of the points land in each of the $4$ positions.
Then by symmetry, there are only two ways to cluster: either we select the planted clusters, or we make the left-most location its own cluster.
Interestingly, a little algebra reveals that this second alternative is strictly better in the $k$-means sense provided $\Delta<1+\sqrt{3}\approx 2.7320$.
Also, in this regime, then as $N$ gets large, the proportion of points in each position will be so close to $1/4$ (with high probability) that this clustering will beat the planted clusters.

Overall, when $m=1$ and $k=2$, then $\Delta\geq1+\sqrt{3}$ is necessary for minimizing the $k$-means objective to recover planted clusters for an arbitrary $\mathcal{D}$.
As a relaxation, the $k$-means SDP recovers planted clusters only if minimizing the $k$-means objective does so as well, and so it inherits this necessary condition, thereby disproving Conjecture~4 in~\cite{relax}.
Furthermore, as Figure~\ref{figure.1}(left) illustrates, a similar counterexample is available in higher dimensions.

\begin{figure}[t]
\begin{center}
\includegraphics[width=\textwidth]{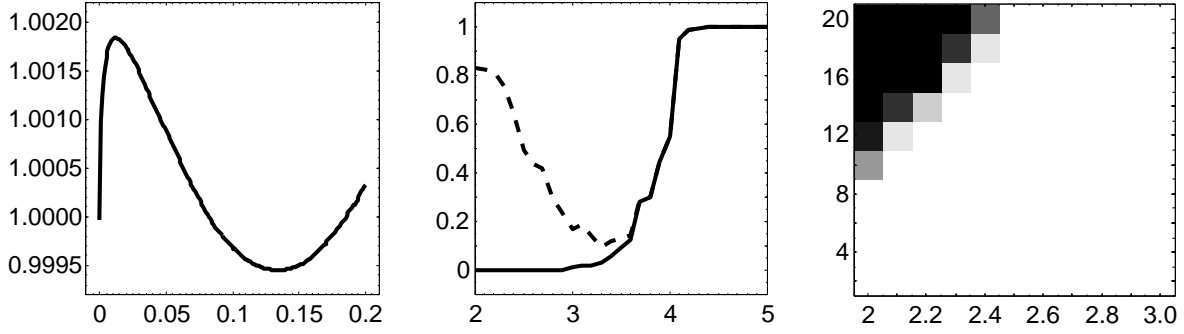}
\end{center}
\caption{
\label{figure.1}
{\footnotesize 
\textbf{(left)}
Take two unit disks in $\mathbb{R}^2$ with centers on the $x$-axis at distance $2.08$ apart.
Let $x_0$ denote the smallest possible $x$-coordinate in the disk on the right.
For each disk, draw $N/2=50,000$ points uniformly at random from the perimeter.
Given $\theta$, cluster the points according to whether the $x$-coordinate is smaller than $x_0+\theta$.
When $\theta=0$, this clustering gives the planted clusters, and the $k$-means objective (divided by $N$) is $1$.
We plot this normalized $k$-means objective for $\theta\in[0,0.2]$.
Since $N$ is large, this curve is very close to its expected shape, and we see that there are clusters whose $k$-means value is smaller than that of the planted clustering.
\textbf{(center)}
Take two intervals of width $2$ in $\mathbb{R}$, and let $\Delta$ denote the distance between the midpoints of these intervals.
For each interval, draw $10$ points at random from its endpoints, and then run the $k$-means SDP.
For each $\Delta=2:0.1:5$, after running $2,000$ trials of this experiment, we plot the proportion of trials for which the SDP relaxation was tight (dashed line) and the proportion of trials for which the SDP recovered the planted clusters (solid line).
In this case, cluster recovery appears to exhibit a phase transition at $\Delta=4$.
\textbf{(right)}
For each $\Delta=2:0.1:3$ and $k=2:2:20$, consider the unit balls in $\mathbb{R}^{20}$ centered at $\{\frac{\Delta}{\sqrt{2}}e_i\}_{i=1}^k$, where $e_i$ denotes the $i$th identity basis element.
Draw $100$ points uniformly from each ball, and test if the resulting data points satisfy \eqref{cert_condition}.
After performing $10$ trials of this experiment for each $(\Delta,k)$, we shade the corresponding pixel according to the proportion of successful trials (white means every trial satisfied \eqref{cert_condition}).
This plot indicates that our certificate \eqref{cert_condition} is to blame for Theorem~\ref{main_theorem}'s dependence on $k$.}
\normalsize}
\end{figure}

To study when the SDP recovers the clusters, let's continue with the case where $m=1$ and $k=2$.
We know that minimizing $k$-means will recover the clusters with high probability provided $\Delta>1+\sqrt{3}$.
However, Theorem~\ref{main_theorem} only guarantees that the SDP recovers the clusters when $\Delta>6$; in fact, \eqref{eq.bound from relax} is slightly better here, giving that $\Delta\geq5.6569$ suffices.
To shed light on the disparity, Figure~\ref{figure.1}(center) illustrates the performance of the SDP for different values of $\Delta$.
Observe that the SDP is often tight when $\Delta$ is close to $2$, but it doesn't reliably recover the planted clusters until $\Delta>4$.
We suspect that $\Delta=4$ is a phase transition for cluster recovery in this case.

Qualitatively, the biggest difference between Theorem~\ref{main_theorem} and \eqref{eq.bound from relax} is the dependence on $k$ that Theorem~\ref{main_theorem} exhibits.
Figure~\ref{figure.1}(right) illustrates that this comes from \eqref{cert_condition}, meaning that one would need to use a completely different dual certificate in order to remove this dependence.

\section{A fast test for $k$-means optimality}
\label{sec:cert}

In this section, we leverage the certificate~\eqref{cert_condition} to test the optimality of a candidate $k$-means solution.
We first show how to solve a more general problem from linear algebra, and then we apply our solution to devise a fast test for $k$-means optimality (as well as fast test for a related PCC algorithm).

\subsection{Leading eigenvector hypothesis test}\label{sec:evec}

This subsection concerns Problem~\ref{prob.leading espace}.
To solve this problem, one might be inclined to apply the power method:

\begin{proposition}[Theorem~8.2.1 in~\cite{golub2012matrix}]
\label{prop.power iteration guarantee}
Let $A\in\mathbb{R}^{n\times n}$ be a symmetric matrix with eigenvalues $\{\lambda_i\}_{i=1}^n$ (counting multiplicities) satisfying
\[
|\lambda_1|>|\lambda_2|\geq\cdots\geq|\lambda_n|,
\]
and with corresponding orthonormal eigenvectors $\{v_i\}_{i=1}^n$.
Pick a unit-norm vector $q_0\in\mathbb{R}^n$ and consider the power iteration $q_{j+1}:=Aq_j/\|Aq_j\|_2$.
If $q_0$ is not orthogonal to $v_1$, then
\[
(v_1^\top q_j)^2
\geq1-\Big((v_1^\top q_0)^{-2}-1\Big)\bigg(\frac{\lambda_2}{\lambda_1}\bigg)^{2j}.
\]
\end{proposition}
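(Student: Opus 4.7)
The plan is to reduce the iteration to a closed-form expression in the eigenbasis, then control the ratio that appears. Since $A$ is symmetric, I would decompose the initial vector as $q_0 = \sum_{i=1}^n c_i v_i$ with $c_i := v_i^\top q_0$, so that $\sum_i c_i^2 = 1$ and $c_1 \neq 0$ by hypothesis. Powers of $A$ act diagonally in this basis, giving $A^j q_0 = \sum_i c_i \lambda_i^j v_i$, and the normalization cancels across the recursion so that $q_j = A^j q_0 / \|A^j q_0\|_2$. Taking the inner product with $v_1$ and squaring yields
\[
(v_1^\top q_j)^2 = \frac{c_1^2 \lambda_1^{2j}}{\sum_{i=1}^n c_i^2 \lambda_i^{2j}} = \frac{1}{1 + \sum_{i \geq 2} (c_i/c_1)^2 (\lambda_i/\lambda_1)^{2j}}.
\]

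Next I would bound the sum in the denominator. Using the spectral gap $|\lambda_1| > |\lambda_2| \geq \cdots \geq |\lambda_n|$, each factor $(\lambda_i/\lambda_1)^{2j}$ is at most $(\lambda_2/\lambda_1)^{2j}$, so
\[
\sum_{i \geq 2} (c_i/c_1)^2 (\lambda_i/\lambda_1)^{2j} \leq \Big(\tfrac{\lambda_2}{\lambda_1}\Big)^{2j} \cdot \frac{1 - c_1^2}{c_1^2} = \Big(\tfrac{\lambda_2}{\lambda_1}\Big)^{2j} \big((v_1^\top q_0)^{-2} - 1\big),
\]
where the middle equality uses $\sum_{i \geq 2} c_i^2 = 1 - c_1^2$.

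To finish, I would apply the elementary inequality $1/(1+x) \geq 1 - x$ valid for all $x \geq 0$, with $x$ the quantity just bounded. This converts the reciprocal expression for $(v_1^\top q_j)^2$ into the claimed lower bound. There is no real obstacle here: the argument is entirely algebraic once one works in the eigenbasis, and the hypothesis $v_1^\top q_0 \neq 0$ is precisely what is needed to ensure the denominator $c_1^2$ never vanishes. The only minor point worth flagging is that the bound is informative only once $j$ is large enough that the right-hand side is nonnegative, but this is automatic from the way the inequality is stated.
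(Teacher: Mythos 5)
Your proof is correct. The paper does not actually prove this proposition --- it cites it as Theorem~8.2.1 in Golub and Van Loan and uses it as a black box --- so there is no in-paper proof to compare against. Your argument (expand $q_0$ in the eigenbasis, observe that $q_j$ is proportional to $A^j q_0$, bound the tail of the denominator by pulling out $(\lambda_2/\lambda_1)^{2j}$, and finish with $1/(1+x)\geq 1-x$ together with monotonicity of $1-x$) is the standard textbook derivation of this estimate and is complete; the hypothesis $v_1^\top q_0\neq 0$ together with $|\lambda_1|>|\lambda_2|\geq 0$ (hence $\lambda_1\neq 0$) guarantees $A^jq_0\neq 0$, so the normalization is well-defined at every step.
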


Notice that the above convergence guarantee depends on the quality of the initialization $q_0$.
To use this guarantee, draw $q_0$ at random from the unit sphere so that $q_0$ is not orthogonal to $v_1$ almost surely; one might then analyze the statistics of $v_1^\top q_0$ to produce statistics on the time required for convergence.
The power method is typically used to find a leading eigenvector, but for our problem, we already have access to an eigenvector $v$, and we are tasked with determining whether $v$ is the unique leading eigenvector.
Intuitively, if you run the power method from a random initialization and it happens to converge to $v$, then this would have been a remarkable coincidence if $v$ were not the unique leading eigenvector.
Since we will only run finitely many iterations, how do we decide when we are sufficiently confident?
The remainder of this subsection answers this question.

Given a symmetric matrix $A\in\mathbb{R}^{n\times n}$ and a unit eigenvector $v$ of $A$, consider the hypotheses
\begin{equation}
\label{eq.hypotheses}
\begin{array}{ll}
H_0\colon&\text{$\operatorname{span}(v)$ is not the unique leading eigenspace of $A$,}\\ \T
H_1\colon&\text{$\operatorname{span}(v)$ is the unique leading eigenspace of $A$.}
\end{array}
\end{equation}
To test these hypotheses, pick a tolerance $\epsilon>0$ and run the power iteration detector (see Algorithm~\ref{alg.power iteration}).
This detector terminates either by accepting $H_0$ or by rejecting $H_0$ and accepting $H_1$.
We say the detector \textbf{fails to reject $H_0$} if it either accepts $H_0$ or fails to terminate.
Before analyzing this detector, we consider the following definition:

\begin{algorithm}[t]
\caption{Power iteration detector}
\label{alg.power iteration}
\SetAlgoLined
\KwIn{Symmetric matrix $A\in\mathbb{R}^{n\times n}$, unit eigenvector $v\in\mathbb{R}^n$, tolerance $\epsilon>0$}
\KwOut{Decision of whether to accept $H_0$ or to reject $H_0$ and accept $H_1$ as given in \eqref{eq.hypotheses}}
$\lambda\leftarrow v^\top Av$\\
Draw $q$ uniformly at random from the unit sphere in $\mathbb{R}^n$\\
\While{no decision has been made}{
\uIf{$|q^\top Aq|>|\lambda|$}{Print \texttt{accept $H_0$}}
\ElseIf{$(v^\top q)^2\geq1-\epsilon$}{Print \texttt{reject $H_0$ and accept $H_1$}}
$q\leftarrow Aq/\|Aq\|_2$
}
\end{algorithm}

\begin{definition}
Given a symmetric matrix $A\in\mathbb{R}^{n\times n}$ and unit eigenvector $v$ of $A$, put $\lambda=v^\top Av$, and let $\lambda_1$ denote a leading eigenvalue of $A$ (i.e., $|\lambda_1|=\|A\|_{2\rightarrow2}$).
We say $(A,v)$ is \textbf{degenerate} if
\begin{itemize}
\item[(a)]
the eigenvalue $\lambda$ of $A$ has multiplicity $\geq2$,
\item[(b)]
$-\lambda$ is an eigenvalue of $A$, or
\item[(c)]
$-\lambda_1$ is an eigenvalue of $A$.
\end{itemize}
\end{definition}

\begin{theorem}
Consider the power iteration detector (Algorithm~\ref{alg.power iteration}), let $q_j$ denote $q$ at the $j$th iteration (with $q_0$ being the initialization), and let $\pi_\epsilon$ denote the probability that $(e_1^\top q_0)^2<\epsilon$.
\begin{itemize}
\item[(i)]
$(A,v)$ is degenerate only if $H_0$ holds.
If $(A,v)$ is non-degenerate, then the power iteration detector terminates in finite time with probability~$1$.
\item[(ii)]
The power iteration detector incurs the following error rates:
\[
\operatorname{Pr}\Big(~\text{reject $H_0$ and accept $H_1$}~\Big|~H_0~\Big)\leq \pi_\epsilon,
\qquad
\operatorname{Pr}\Big(~\text{fail to reject $H_0$}~\Big|~H_1~\Big)=0.
\]
\item[(iii)]
If $H_1$ holds, then
\[
\min\Big\{j:(v^\top q_j)^2>1-\epsilon\Big\}
\leq\frac{3\log(1/\epsilon)}{2\log(\lambda_1/\lambda_2)}+1
\]
with probability $\geq 1-\pi_\epsilon$.
\end{itemize}
\end{theorem}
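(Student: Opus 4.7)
The plan is to prove the three assertions in sequence. Throughout I work with the spectral decomposition $A=\sum_i\lambda_i v_i v_i^\top$ taking $v_1=v$ and $\lambda_1=\lambda$, expand $q_0=\sum_i c_i v_i$ with $\sum_i c_i^2=1$, and recall that $q_j=A^jq_0/\|A^jq_0\|$ so that
\[
(v^\top q_j)^2=\frac{c_1^2\lambda^{2j}}{\sum_i c_i^2\lambda_i^{2j}},
\qquad
q_j^\top Aq_j=\frac{\sum_i c_i^2\lambda_i^{2j+1}}{\sum_i c_i^2\lambda_i^{2j}}.
\]

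For assertion (i), I first verify that each degeneracy condition forces $H_0$: (a) exhibits a second eigenvector at eigenvalue $\lambda$, (b) exhibits one at $-\lambda$, and (c) one at $-\lambda_1$; in each case $\operatorname{span}(v)$ fails to be the unique leading eigenspace. For almost-sure termination under non-degeneracy, I split on the hypothesis. Under $H_1$, the vector $v$ is the unique leading eigenvector and Proposition~\ref{prop.power iteration guarantee} applies; since $q_0$ is almost surely not orthogonal to $v$, we get $(v^\top q_j)^2\to 1$ almost surely, so the ``reject $H_0$'' branch fires in finitely many iterations. Under non-degenerate $H_0$, the failure of (a)--(c) together with $H_0$ forces $|\lambda|<|\lambda_1|$ (otherwise $v$ would itself be the unique leading eigenvector), so $q_j^\top Aq_j$ converges to the leading eigenvalue and eventually $|q_j^\top Aq_j|>|\lambda|$, firing the ``accept $H_0$'' branch.

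For assertion (ii), under $H_1$ the operator norm satisfies $\|A\|_{2\to 2}=|\lambda|$, so the Rayleigh-quotient bound gives $|q^\top Aq|\leq|\lambda|$ for every unit $q$ and the accept branch never fires; combined with (i) the detector rejects $H_0$ almost surely, proving $\Pr(\text{fail to reject}\mid H_1)=0$. Under $H_0$ I select an eigenvector $w=v_{i^*}$ orthogonal to $v$ whose eigenvalue $\mu$ satisfies $|\mu|\geq|\lambda|$; $H_0$ guarantees the existence of such a $w$ (whether through multiplicity, a $\pm\lambda$-pair, or a strictly larger-magnitude eigenvalue). Using $\mu^{2j}\geq\lambda^{2j}$,
\[
(v^\top q_j)^2 \leq \frac{c_1^2\lambda^{2j}}{c_1^2\lambda^{2j}+c_{i^*}^2\mu^{2j}} \leq \frac{c_1^2}{c_1^2+c_{i^*}^2}.
\]
Imposing $(v^\top q_j)^2\geq 1-\epsilon$ and combining with $c_1^2+c_{i^*}^2\leq 1$ yields $c_{i^*}^2\leq\epsilon$. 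This bound is a condition on $q_0$ alone, so it controls the union over all iterations $j$. By rotation-invariance of the uniform distribution on the sphere, $c_{i^*}^2=(v_{i^*}^\top q_0)^2$ has the same distribution as $(e_1^\top q_0)^2$, so this event has probability $\pi_\epsilon$, as desired.

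For assertion (iii), condition on the event $\{(v^\top q_0)^2\geq\epsilon\}$, which has probability $1-\pi_\epsilon$ by rotation invariance. On this event Proposition~\ref{prop.power iteration guarantee} gives
\[
(v^\top q_j)^2 \geq 1-\frac{1-(v^\top q_0)^2}{(v^\top q_0)^2}\bigg(\frac{\lambda_2}{\lambda_1}\bigg)^{\!2j} \geq 1-\frac{1}{\epsilon}\bigg(\frac{\lambda_2}{\lambda_1}\bigg)^{\!2j},
\]
and one then solves $(1/\epsilon)(\lambda_2/\lambda_1)^{2j}<\epsilon$ for the smallest $j$, rounding the resulting real bound up to an integer to obtain the $+1$; the constant $3/2$ is recovered by tracking the $(v^\top q_0)^{-2}$ factor through the ceiling rather than absorbing it into a $1/\epsilon$ estimate. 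The main obstacle I anticipate is the case analysis in (ii): carefully exhibiting the companion eigenvector $w$ across the three sub-cases that $H_0$ encompasses requires some care, but once $w$ is fixed the rest of the argument reduces to an algebraic inequality on eigencomponents and one application of rotation invariance.
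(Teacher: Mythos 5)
Your argument is correct and follows the same overall structure as the paper, but part~(ii) uses a genuinely different (and arguably cleaner) device. The paper chooses $v_1$ to be a leading eigenvector, proves $(v_1^\top q_{j+1})^2 \geq (v_1^\top q_j)^2$ via $\|Aq_j\|_2 \leq |\lambda_1|$, and then reads off $(v_1^\top q_0)^2 < \epsilon$ directly from $1-(v^\top q_j)^2 \geq (v_1^\top q_j)^2$. You instead pick any eigenvector $w=v_{i^*}\perp v$ with $|\mu|\geq|\lambda|$ (existence guaranteed by $H_0$ in all three sub-cases), drop all other terms from the Rayleigh denominator, and use $\mu^{2j}\geq\lambda^{2j}$ together with $c_1^2+c_{i^*}^2\leq 1$ to get $(w^\top q_0)^2 \leq \epsilon$ directly, with no monotonicity lemma and without worrying about how to break ties when picking $v_1$. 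Both arguments reduce the false-positive event to ``$q_0$ has a small component along a fixed unit vector,'' which rotation invariance identifies with $\pi_\epsilon$. Your version avoids the slight awkwardness of choosing a representative leading eigenvector under $H_0$. In part~(iii), your bound with $\epsilon^{-1}$ in place of the paper's $\epsilon^{-2}$ actually yields the sharper iteration count $\log(1/\epsilon)/\log(\lambda_1/\lambda_2)+1$, which of course still implies the stated bound with $3/2$; your closing remark about ``recovering the $3/2$ through the ceiling'' slightly misattributes the source of that constant --- in the paper it comes purely from the crude estimate $(v^\top q_0)^{-2}-1 \leq \epsilon^{-2}$, not from integer rounding --- but this is cosmetic and the proof is sound.
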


\begin{proof}
Denote the eigenvalues of $A$ by $\{\lambda_i\}_{i=1}^n$ (counting multiplicities), ordered in such a way that $|\lambda_1|\geq\cdots\geq|\lambda_n|$, and consider the corresponding orthonormal eigenvectors $\{v_i\}_{i=1}^n$, where $v=v_p$ for some $p$.

For (i), first note that $H_1$ implies that $(A,v)$ is non-degenerate, and so the contrapositive gives the first claim.
Next, suppose $(A,v)$ is non-degenerate.
If $H_1$ holds, then $(v^\top q_j)^2\rightarrow 1$ by Proposition~\ref{prop.power iteration guarantee} provided $q_0$ is not orthogonal to $v$, and so the power iteration detector terminates with probability $1$.
Otherwise, $H_0$ holds, and so the non-degeneracy of $(A,v)$ implies that the eigenspace corresponding to $\lambda_1$ is the unique leading eigenspace of $A$, and furthermore, $|\lambda_1|>|\lambda|$.
Following the proof of Theorem~8.2.1 in~\cite{golub2012matrix}, we also have
\[
q_j^\top Aq_j
=\frac{q_0^\top A^{2j+1}q_0}{q_0^\top A^{2j}q_0}
=\frac{\sum_{i=1}^n(v_i^\top q_j)^2\lambda_i^{2j+1}}{\sum_{i=1}^n(v_i^\top q_j)^2\lambda_i^{2j}}.
\]
Putting $r:=\min\{i:|\lambda_i|<|\lambda_1|\}$, then
\begin{align*}
|q_j^\top Aq_j-\lambda_1|
&=\left|\frac{\sum_{i=1}^n(v_i^\top q_j)^2\lambda_i^{2j}(\lambda_i-\lambda_1)}{\sum_{i=1}^n(v_i^\top q_j)^2\lambda_i^{2j}}\right|\\
&\leq\frac{|\lambda_1-\lambda_n|}{\|P_{\lambda_1}q_0\|_2^2}\sum_{i=r}^n(v_i^\top q_j)^2\bigg(\frac{\lambda_i}{\lambda_1}\bigg)^{2j}
\leq|\lambda_1-\lambda_n|\bigg(\frac{1-\|P_{\lambda_1}q_0\|_2^2}{\|P_{\lambda_1}q_0\|_2^2}\bigg)\bigg(\frac{\lambda_r}{\lambda_1}\bigg)^{2j},
\end{align*}
where $P_{\lambda_1}$ denotes the orthogonal projection onto the eigenspace corresponding to $\lambda_1$.
As such, $|q_j^\top Aq_j|\rightarrow|\lambda_1|>|\lambda|$ provided $P_{\lambda_1}q_0\neq0$, and so the power iteration detector terminates with probability $1$.

For (ii), we first consider the case of a false positive.
Taking $v=v_p$ for $p\neq 1$, note that $(v^\top q_j)^2>1-\epsilon$ implies
\[
\epsilon
>1-(v^\top q_j)^2
=\|q_j\|_2^2-(v_p^\top q_j)^2
=\sum_{\substack{i=1\\i\neq p}}^n(v_i^\top q_j)^2
\geq(v_1^\top q_j)^2.
\]
Also, since $\|Ax\|_2\leq|\lambda_1|\|x\|_2$ for all $x\in\mathbb{R}^n$, we have that $(v_1^\top q_j)^2$ monotonically increases with $j$:
\[
(v_1^\top q_{j+1})^2
=\bigg(v_1^\top \frac{Aq_j}{\|Aq_j\|_2}\bigg)^2
=\frac{(\lambda_1 v_1^\top q_j)^2}{\|Aq_j\|_2^2}
\geq\frac{(v_1^\top q_j)^2}{\|q_j\|^2}
=(v_1^\top q_j)^2.
\]
As such, $\epsilon>(v_1^\top q_j)^2\geq(v_1^\top q_0)^2$.
Overall, when $H_0$ holds, the power iteration detector rejects $H_0$ only if $q_0$ is initialized poorly, i.e., $(v_1^\top q_0)^2<\epsilon$, which occurs with probability $\pi_\epsilon$ (since $q_0$ has a rotation-invariant probability distribution).
For the false negative error rate, note that Proposition~\ref{prop.power iteration guarantee} gives that $H_1$ implies convergence $(v^\top q_j)^2\rightarrow1$ provided $q_0$ is not orthogonal to $v$, i.e., with probability $1$.

For (iii), we want $j$ such that $(v^\top q_j)^2>1-\epsilon$.
By Proposition~\ref{prop.power iteration guarantee}, it suffices to have
\[
\Big((v_1^\top q_0)^{-2}-1\Big)\bigg(\frac{\lambda_2}{\lambda_1}\bigg)^{2j}
<\epsilon.
\]
In the event that $(v_1^\top q_0)^2\geq\epsilon$ (which has probability $1-\pi_\epsilon$), it further suffices to have
\[
\epsilon^{-2}\bigg(\frac{\lambda_2}{\lambda_1}\bigg)^{2j}
<\epsilon.
\]
Taking logs and rearranging then gives the result.
\end{proof}

To estimate $\epsilon$ and $\pi_\epsilon$, first note that $q_0$ has a rotation-invariant probability distribution, and so linearity of expectation gives
\[
\mathbb{E}\big[(e_1^\top q_0)^2\big]
=\frac{1}{n}\sum_{i=1}^n\mathbb{E}\big[(e_i^\top q_0)^2\big]
=\frac{1}{n}\mathbb{E}\|q_0\|_2^2
=\frac{1}{n}.
\]
Thus, in order to make $\pi_\epsilon$ small, we should expect to have $\epsilon\ll1/n$.
The following lemma gives that such choices of $\epsilon$ suffice for $\pi_\epsilon$ to be small:

\begin{lemma}
If $\epsilon\geq n^{-1}e^{-2n}$, then $\pi_\epsilon\leq 3\sqrt{n\epsilon}$.
\end{lemma}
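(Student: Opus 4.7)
The plan is to realize the uniform distribution on $S^{n-1}$ as the normalization of a Gaussian, and then use a union bound. Writing $q_0 = g/\|g\|_2$ with $g \sim \mathcal{N}(0,I_n)$, we have $(e_1^\top q_0)^2 = g_1^2/\|g\|_2^2$, so for any threshold $K>0$,
\[
\pi_\epsilon = \Pr\bigl(g_1^2 < \epsilon\,\|g\|_2^2\bigr) \leq \Pr\bigl(g_1^2 < \epsilon K\bigr) + \Pr\bigl(\|g\|_2^2 > K\bigr).
\]
The first term I would control by the uniform bound $\phi(t)\leq 1/\sqrt{2\pi}$ on the standard normal density, which yields $\Pr(|g_1|<s) \leq s\sqrt{2/\pi}$. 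The second is a standard chi-squared tail handled by Chernoff from the moment generating function $\mathbb{E}[\exp(\lambda\|g\|_2^2)] = (1-2\lambda)^{-n/2}$.

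The key design choice is to take $K$ of order $n$ so that the chi-squared tail is at most $e^{-n}$, while keeping the leading constant of the Gaussian term strictly below $2$. The concrete choice I would use is $K=5n$: optimizing the Chernoff bound over $\lambda$ (with optimum at $\lambda = 2/5$) gives $\Pr(\|g\|_2^2>5n)\leq (\sqrt{5}/e^2)^{n}\leq e^{-n}$, while the small-ball bound becomes $\Pr(g_1^2<5n\epsilon)\leq \sqrt{10/\pi}\,\sqrt{n\epsilon}\approx 1.785\sqrt{n\epsilon}$.

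Finally, the hypothesis $\epsilon \geq n^{-1}e^{-2n}$ is exactly equivalent to $e^{-n}\leq \sqrt{n\epsilon}$, so the exponential tail can be absorbed into one more copy of $\sqrt{n\epsilon}$. Summing, $\pi_\epsilon \leq (1+\sqrt{10/\pi})\sqrt{n\epsilon} < 3\sqrt{n\epsilon}$, as desired. There is no real obstacle beyond the bookkeeping of these two constants; the only thing to verify is that the chosen $K$ simultaneously makes the chi-squared tail decay like $e^{-n}$ and keeps $\sqrt{2K/(\pi n)}$ below $2$. The value $K=5n$ clears both thresholds with margin, and the threshold $\epsilon \geq n^{-1}e^{-2n}$ in the statement is precisely the break-even point between the exponential Gaussian-norm tail and the polynomial first-coordinate small-ball bound.
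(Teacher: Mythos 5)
Your proof is correct and takes essentially the same route as the paper: both realize $q_0$ as a normalized Gaussian, split $\Pr(g_1^2 < \epsilon\|g\|_2^2)$ with the threshold $K=5n$, bound $\Pr(g_1^2 < 5n\epsilon) \leq \sqrt{10n\epsilon/\pi}$ via the uniform density bound, and show the chi-squared tail at $5n$ is at most $e^{-n}$, which the hypothesis converts to $\sqrt{n\epsilon}$. The only cosmetic difference is that the paper cites the Laurent--Massart inequality ($\Pr(Q \geq n + 2\sqrt{nx} + 2x) \leq e^{-x}$ with $x=n$) for the chi-squared tail, whereas you rederive the same $e^{-n}$ bound directly by a Chernoff/MGF optimization; both yield the identical estimate.
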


\begin{proof}
First, observe that $(e_1^\top q_0)^2$ is equal in distribution to $Z^2/Q$, where $Z$ has standard normal distribution and $Q$ has chi-squared distribution with $n$ degrees of freedom ($Z$ and $Q$ are independent).
The probability density function of $Z$ has a maximal value of $1/\sqrt{2\pi}$ at zero, and so
\[
\operatorname{Pr}\Big(Z^2<a\Big)
\leq\sqrt{\frac{2a}{\pi}}.
\]
Also, Lemma~1 in~\cite{laurent2000adaptive} gives
\[
\operatorname{Pr}\Big(Q\geq n+2\sqrt{nx}+2x\Big)\leq e^{-x}
\qquad
\forall x>0.
\]
Therefore, picking $a=5n\epsilon$ and $x=n$, the union bound gives
\[
\operatorname{Pr}\Big((e_1^\top q_0)^2<\epsilon\Big)
=\operatorname{Pr}\bigg(\frac{Z^2}{Q}<\epsilon\bigg)
\leq\operatorname{Pr}\Big(Z^2<5n\epsilon\Big)+\operatorname{Pr}\Big(Q>5n\Big)
\leq \sqrt{\frac{10n\epsilon}{\pi}}+e^{-n}
\leq 3\sqrt{n\epsilon}.\qedhere
\]
\end{proof}

Overall, if we take $\epsilon=n^{-(2c+1)}$ for $c>0$, then if $H_0$ is true, our detector will produce a false positive with probability $O(n^{-c})$.
On the other hand, if $H_1$ is true, then with probability $1-O(n^{-c})$, our detector will reject $H_0$ after $O_\delta(c\log n)$ power iterations, provided $|\lambda_2|\leq(1-\delta)|\lambda_1|$.

\subsection{Testing optimality with the power iteration detector}\label{sec.testing opt}

In this subsection, we leverage the power iteration detector to test $k$-means optimality.
Note that the sufficient condition \eqref{cert_condition} holds if and only if $v:=\frac{1}{\sqrt{N}}1$ is a leading eigenvector of the matrix
\begin{equation}
\label{eq.fast cert matrix}
A
:=\frac{z}{N}11^\top+P_{\Lambda^\perp}(B-M)P_{\Lambda^\perp}
=\frac{z}{N}11^\top+P_{\Lambda^\perp}(B-D)P_{\Lambda^\perp}.
\end{equation}
(The second equality follows from distributing the $P_{\Lambda^\perp}$'s and recalling the definition of $M$ in~\eqref{eq.definition of M}.)
As such, it suffices that $(A,v)$ satisfy $H_1$ in \eqref{eq.hypotheses}.
Overall, given a collection of points $\{x_i\}_{i=1}^N\subseteq\mathbb{R}^m$ and a proposed partition $A_1\sqcup\cdots\sqcup A_k=\{1,\ldots,N\}$, we can produce the corresponding matrix $A$ (defined above) and then run the power iteration detector of the previous subsection to test \eqref{cert_condition}.
In particular, a positive test with tolerance $\epsilon$ will yield $\geq1-\pi_\epsilon$ confidence that the proposed partition is optimal under the $k$-means objective.
Furthermore, as we detail below, the matrix--vector products computed in the power iteration detector have a computationally cheap implementation.

Given an $m\times n_a$ matrix $\Phi_a=[x_{a,1}\cdots x_{a,n_a}]$ for each $a\in\{1,\ldots,k\}$, we follow the following procedure to implement the corresponding function $x\mapsto Ax$ as defined in \eqref{eq.fast cert matrix}:
\begin{enumerate}
\item
Compute $\nu_a\in\mathbb{R}^{n_a}$ such that $(\nu_a)_i=\|x_{a,i}\|_2^2$ for every $a\in\{1,\ldots,k\}$ in $O(mN)$ operations.\\
Let $\nu\in\mathbb{R}^N$ denote the vector whose $a$th block is $\nu_a$.
\item
Define the function $(a,b,x)\mapsto D^{(a,b)}x$ such that $D^{(a,b)}=\nu_a1^\top-2\Phi_a^\top\Phi_b+1\nu_b^\top$.\\
Running this function costs $O(m(n_a+n_b))$ operations.
\item
Define the function $x\mapsto Dx$ such that $D=\nu1^\top-2\Phi^\top\Phi+1\nu^\top$, where $\Phi=[\Phi_1\cdots\Phi_k]$.\\
Running this function costs $O(mN)$ operations.
\item
Compute $\mu_a=\frac{1}{2}(\frac{1}{n_a^2}11^\top-\frac{2}{n_a}I)D^{(a,a)}1$ for every $a\in\{1,\ldots,k\}$ in $O(mN)$ operations.
\item
Define the function $(a,b,x)\mapsto M^{(a,b)}x$ such that $M^{(a,b)}=D^{(a,b)}+\mu_a1^\top+1\mu_b^\top$.\\
Running this function costs $O(m(n_a+n_b))$ operations.
\item
Compute $z=\min_{a\neq b}\frac{2n_a}{n_a+n_b}\min(M^{(a,b)}1)$ in $O(kmN)$ operations.
\item
Compute $u_{(a,b)}=M^{(a,b)}1-z\frac{n_a+n_b}{2n_a}1$ for every $a,b\in\{1,\ldots,k\}$, $a\neq b$ in $O(kmN)$ operations.
\item
Compute $\rho_{(a,b)}=u_{(a,b)}^\top1$ for every $a,b\in\{1,\ldots,k\}$, $a\neq b$ in $O(kN)$ operations.
\item
Define the function $x\mapsto Bx$ such that the $a$th block of the output is given by
\[
(Bx)_a
=\sum_{\substack{b=1\\b\neq a}}^k\frac{u_{(a,b)}u_{(b,a)}^\top x_b}{\rho_{(b,a)}}.
\]
Running this function costs $O(kmN)$ operations.
\item
Define the function $x\mapsto P_{\Lambda^\perp}x$ such that $P_{\Lambda^\perp}=I-\sum_{a=1}^k\frac{1}{n_a}1_a1_a^\top$.\\
Running this function costs $O(N)$ operations.
\item
Define the function $x\mapsto Ax$ such that $A=\frac{z}{N}11^\top+P_{\Lambda^\perp}(B-D)P_{\Lambda^\perp}$.\\
Running this function costs $O(kmN)$ operations.
\end{enumerate}
Overall, after $O(kmN)$ operations of preprocessing, one may compute the function $x\mapsto Ax$ for any given $x$ in $O(kmN)$ operations.
(Observe that this is the same complexity as each iteration of Lloyd's algorithm, and as we illustrate in Figure~\ref{figure.2}, the runtimes are comparable.)

At this point, we take a short aside to illustrate the utility of the power iteration detector beyond $k$-means clustering.
The original problem for which a PCC algorithm was developed was community recovery under the \textbf{stochastic block model}~\cite{bandeira2015note}.
For this random graph, there are two communities of vertices, each of size $n/2$, and edges are drawn independently at random with probability $p$ if the pair of vertices belong to the same community, and with probability $q<p$ if they come from different communities.
Given the random edges, the maximum likelihood estimator for the communities is given by the vertex partition of two sets of size $n/2$ with the minimum cut.
Given a partition of the vertices, let $X$ denote the corresponding $n\times n$ matrix of $\pm1$s such that $X_{ij}=1$ precisely when $i$ and $j$ belong to the same community.
Given the adjacency matrix $A$ of the random graph, one may express the cut of a partition $X$ in terms of $\operatorname{Tr}(AX)$.
Furthermore, $X$ satisfies the convex constraints $X_{ii}=1$ and $X\succeq0$, and so one may relax to these constraints to obtain a semidefinite program and hope that the relaxation is typically tight over a large region of $(p,q)$.
Amazingly, this relaxation is typically tight precisely over the region of $(p,q)$ for which community recovery is information-theoretically possible~\cite{abbe2014exact}.

Given $A$, put $B:=2A-11^\top+I$, and given a vector $x\in\mathbb{R}^n$, define the corresponding $n\times n$ diagonal matrix $D_x$ by $(D_x)_{ii}:=x_i\sum_{j=1}^n B_{ij}x_j$.
In~\cite{bandeira2015note}, Bandeira observes that, given a partition matrix $X$ by some means (such as the fast algorithm provided in~\cite{abbe2015community}), then $X=xx^\top$ is SDP-optimal if both $x^\top1=0$ and the second smallest eigenvalue of $D_x-B$ is strictly positive, meaning the partition gives the maximum likelihood estimator for the communities.
However, as Bandeira notes, the computational bottleneck here is estimating the second smallest eigenvalue of $D_x-B$, and he suggests that a randomized power method--like algorithm might suffice, but leaves the investigation for future research.

Here, we show how the power iteration detector fills this void in the theory.
First, we note that in the interesting regime of $(p,q)$, the number of nonzero entries in $A$ is $O(n\log n)$ with high probability~\cite{abbe2014exact}.
As such, the function $x\mapsto Bx$ can exploit this sparsity to take only $O(n\log n)$ operations.
This in turn allows for the computation of the diagonal of $D_x$ to cost $O(n\log n)$ operations.
Next, note that 
\begin{align*}
\|D_x-B\|_{2\rightarrow2}
&\leq\|D_x\|_{2\rightarrow2}+\|2A-11^\top\|_{2\rightarrow2}+\|I\|_{2\rightarrow2}\\
&\leq\|D_x\|_{2\rightarrow2}+\|2A-11^\top\|_F+1
=\max_i|(D_x)_{ii}|+n+1
=:\lambda,
\end{align*}
and that $\lambda$ can be computed in $O(n)$ operations after computing the diagonal of $D_x$.
Also, it takes $O(n)$ operations to verify $x^\top1=0$.
Assuming $x^\top1=0$, then the second smallest eigenvalue of $D_x-B$ is strictly positive if and only if $x$ spans the unique leading eigenspace of $\lambda I-D_x+B$.
Thus, one may test this condition using the power iteration detector, and furthermore, each iteration will take only $O(n\log n)$ operations, thanks to the sparsity of $A$.

\section{A fast $k$-means solver for two clusters}\label{sec:fast_solver}

The previous section illustrated how to quickly test whether a proposed solution to the $k$-means problem is optimal.
In particular, this test will be successful with high probability if the data follows the stochastic ball model with $\Delta>2+k^2/m$.
It remains to find a fast $k$-means solver which also performs in this regime.

In doing so, we maintain the philosophy that our algorithm should not ``see'' the stochastic ball model.
Indeed, we view the stochastic ball model as a method of evaluating clustering algorithms rather than a realistic data model.
For example, Lloyd's algorithm can be viewed as an alternating minimization of the lifted objective function:
\[
f(A_1,\ldots,A_k,c_1,\ldots,c_k)
:=\sum_{t=1}^k\sum_{i\in A_t}\|x_i-c_t\|^2,
\qquad
A_1\sqcup\cdots\sqcup A_k=\{1,\ldots,N\},~c_1,\ldots,c_k\in\mathbb{R}^m,
\]
and since this function is minimized at the $k$-means optimizer (regardless of how the data is distributed), such an algorithm is acceptable.
On the other hand, one might consider matching the stochastic ball model to the data by maximizing the following function:
\[
g(c_1,\ldots,c_k)
:=\sum_{i=1}^N\sum_{t=1}^k p_\mathcal{D}(x_i-c_t),
\qquad
c_1,\ldots,c_k\in\mathbb{R}^m,
\]
where $p_\mathcal{D}(\cdot)$ denotes the density function of $\mathcal{D}$, which is supported on the unit ball centered at the origin.
One could certainly devise a fast greedy method such as matching pursuit~\cite{mallat1993matching} to optimize this objective function (especially if $p_\mathcal{D}$ is smooth), but doing so violates our philosophy.

In~\cite{peng2007approximating}, Peng and Wei showed that $k$-means is equivalent to the following program:
\begin{alignat}{2}
\label{eq.kmeans_program}
& \text{minimize}  &       & \operatorname{Tr}(DX) \\
\nonumber
& \text{subject to}& \quad & 
\begin{aligned}[t]
X^\top&=X,~
X^2=X,~
\operatorname{Tr}(X)=k,~
X1=1,~
X\geq0
\end{aligned}
\end{alignat}
One may quickly observe that the SDP~\eqref{eq.kmeansSDP} we analyzed in Section~\ref{sec:sdp} is a relaxation of this program.
In this section, we follow Peng and Wei~\cite{peng2007approximating} by considering another relaxation of \eqref{eq.kmeans_program}, obtained by discarding the $X\geq0$ constraint (this is known as the \textbf{spectral clustering} relaxation~\cite{dhillon2004kernel,dhillon2007weighted}).
We first denote the $m\times N$ matrix $\Phi=[x_1\cdots x_N]$.
Without loss of generality, the data set is centered at the origin so that $\Phi1=0$.
Letting $\nu$ denote the $N\times 1$ vector with $\nu_i=\|x_i\|_2^2$, then
\[
D_{ij}
=\|x_i-x_j\|_2^2
=\|x_i\|_2^2-2x_i^\top x_j+\|x_j\|_2^2
=(\nu1^\top-2\Phi^\top\Phi+1\nu^\top)_{ij}.
\]
As such, $D=\nu1^\top-2\Phi^\top\Phi+1\nu^\top$, and so the constraints $X=X^\top$ and $X1=1$ together imply an alternative expression for the objective function:
\begin{align*}
\operatorname{Tr}(DX)
&=\operatorname{Tr}(\nu1^\top X-2\Phi^\top\Phi X+1\nu^\top X)\\
&=\operatorname{Tr}(\nu1^\top X^\top)-2\operatorname{Tr}(\Phi^\top\Phi X)+\operatorname{Tr}(X1\nu^\top)\\
&=2\nu^\top 1-2\operatorname{Tr}(\Phi^\top\Phi X).
\end{align*}
We conclude that minimizing $\operatorname{Tr}(DX)$ is equivalent to maximizing $\operatorname{Tr}(\Phi^\top\Phi X)$.

Next, we observe that the feasible $X$ in our relaxation are precisely the rank-$k$ $N\times N$ orthogonal projection matrices satisfying $X1=1$.
This in turn is equivalent to $X$ having the form $X=\frac{1}{N}11^\top+Y$, where $Y$ is a rank-$(k-1)$ $N\times N$ orthogonal projection matrix satisfying $Y1=0$.
Discarding the $Y1=0$ constraint produces the following relaxation of \eqref{eq.kmeans_program}:
\begin{alignat}{2}
\label{eq.kmeans_spectral}
& \text{maximize}  &       & \operatorname{Tr}(\Phi^\top\Phi Y) \\
\nonumber
& \text{subject to}& \quad & 
\begin{aligned}[t]
Y^\top&=Y,~
Y^2=Y,~
\operatorname{Tr}(Y)=k-1
\end{aligned}
\end{alignat}
For general values of $k$, this program amounts to finding $k-1$ principal components of the data.
Recalling our initial clustering goal, after finding the optimal $Y$, it remains to take $X=\frac{1}{N}11^\top+Y$ and then round to a nearby member of the feasibility region in \eqref{eq.kmeans_program}.
In~\cite{peng2007approximating}, Peng and Wei focus on the $k=2$ case; they reduce the rounding step to a $2$-means problem on the real line, and they establish an approximation ratio of $2$ for this relax-and-round procedure.
Here, we are concerned with exact recovery under the stochastic ball model, and as such, we slightly modify the rounding step.

When $k=2$, the solution to \eqref{eq.kmeans_spectral} has the form $Y=yy^\top$, where $y$ is a leading unit eigenvector of $\Phi^\top\Phi$.
Our task is to find a matrix of the form $\frac{1}{|A|}1_A1_A^\top+\frac{1}{|B|}1_B1_B^\top$ with $A\sqcup B=\{1,\ldots,N\}$ that is close to $\frac{1}{N}11^\top+yy^\top$.
To this end, it seems natural to consider
\[
A_\theta:=\{i:y_i<\theta\},
\qquad
B_\theta:=A_\theta^c
\]
for some threshold $\theta$.
Since the data is centered ($\Phi1=0$), one may be inclined to take $\theta=0$, but this will be a poor choice if the true clusters have significantly different numbers of points.
Instead, we select the $\theta$ which minimizes the $k$-means objective of $(A_\theta,B_\theta)$.
Since we only need to consider $N-1$ choices of $\theta$, this is plausibly tractable, although computing the $k$-means objective once costs $O(mN)$ operations, and so some care is necessary to keep the algorithm fast.

We will show how to find the optimal $(A_\theta,B_\theta)$ in $O((m+\log N)N)$ operations using a simple dynamic program.
Order the indices so that $y_1\leq\cdots\leq y_N$.
Then the function to minimize is
\[
f(i):=\frac{1}{i}\underbrace{\sum_{j=1}^i\sum_{j'=1}^i\|x_j-x_{j'}\|_2^2}_{v_i}+\frac{1}{N-i}\underbrace{\sum_{j=i+1}^N\sum_{j'=i+1}^N\|x_j-x_{j'}\|_2^2}_{v^c_i}.
\]
Expanding the square and distributing sums gives
\[
v_{i+1}
=v_i+2\sum_{j=1}^i\|x_j\|_2^2-4x_{i+1}^\top\sum_{j=1}^ix_j+2i\|x_{i+1}\|_2^2,
\]
and the $v^c_i$'s satisfy a similar recursion rule.
As such, one may iteratively compute the $v_i$'s and $v^c_i$'s before computing the $f(i)$'s and then minimizing.
Overall, the following procedure finds the optimal $(A_\theta,B_\theta)$ in $O((m+\log N)N)$ operations:
\begin{enumerate}
\item
Sort the entries $y_1\leq\cdots\leq y_N$ in $O(N\log N)$ operations.
\item
Iteratively compute
\[
s_1(i):=\sum_{j=1}^ix_j,
\quad
s_1^c(i):=\sum_{j=i+1}^Nx_j,
\quad
s_2(i):=\sum_{j=1}^i\|x_j\|_2^2,
\quad
s_2^c(i):=\sum_{j=i+1}^N\|x_j\|_2^2
\]
for every $i\in\{1,\ldots,N-1\}$ in $O(mN)$ operations.
\item
Compute $v_1=0$ and $v_{i+1}=v_i+2s_2(i)-4x_{i+1}^\top s_1(i)+2i\|x_{i+1}\|_2^2$ for every $i\in\{1,\ldots,N-2\}$ in $O(mN)$ operations.
\item
Compute $v^c_{N-1}=0$ and $v^c_{i-1}=v^c_i+2s_2^c(i)-4x_i^\top s_1^c(i)+2(N-i)\|x_i\|_2^2$ for every $i\in\{N-1,\ldots,2\}$ in $O(mN)$ operations.
\item
Compute $f(i)=v_i/i+v^c_i/(N-i)$ for every $i\in\{1,\ldots,N-1\}$ in $O(N)$ operations.
\item
Find $i$ that minimizes $f(i)$ and output $\{1,\ldots,i\}$ and $\{i+1,\ldots,N\}$ in $O(N)$ operations.
\end{enumerate}
Note that in the special case where $m=1$, the above method exactly solves the $k$-means problem when $k=2$ in only $O(N\log N)$ operations, recovering the rounding step of Peng and Wei~\cite{peng2007approximating}.
For comparison, \cite{wang2011ckmeans} leverages more sophisticated dynamic programming for the $m=1$ case, but $k$ is arbitrary and the algorithm costs $O(kN^2)$ operations.

\begin{algorithm}[t]
\caption{Spectral $k$-means clustering (for two clusters)}
\label{alg.euclidean spectral clustering}
\SetAlgoLined
\KwIn{$m\times N$ matrix $\Phi=[x_1\cdots x_N]$ of points to be clustered}
\KwOut{Clusters $A\sqcup B=\{1,\ldots,N\}$}
Subtract centroid $\frac{1}{N}\sum_{i=1}^Nx_i$ from each column of $\Phi$ to produce $\Phi_0$\\
Compute leading eigenvector $y$ of $\Phi_0^\top\Phi_0$\\
Find $\theta$ that minimizes the $k$-means objective of $(\{i:y_i<\theta\},\{i:y_i\geq\theta\})$\\
$(A,B)\leftarrow(\{i:y_i<\theta\},\{i:y_i\geq\theta\})$
\end{algorithm}

See Algorithm~\ref{alg.euclidean spectral clustering} for a summary of our relax-and-round procedure.
As a spectral method, this algorithm enjoys quasilinear computational complexity; see Figure~\ref{figure.2} for an illustration.
In particular, when computing the leading eigenvector of $\Phi_0^\top\Phi_0$, each matrix--vector multiply in the power method costs only $O(mN)$ operations.
Furthermore, as the following result guarantees, this algorithm performs well under the stochastic ball model:

\begin{theorem}
\label{theorem.spectral clustering}
Let $\Delta^\star=\Delta^\star(\mathcal{D},k)$ denote the smallest value for which $\Delta>\Delta^\star$ implies that minimizing the $k$-means objective recovers planted clusters under the $(\mathcal{D},\gamma,n)$-stochastic ball model with probability $1-e^{-\Omega_{\mathcal{D},\gamma}(n)}$.
When $k=2$, spectral $k$-means clustering (Algorithm~\ref{alg.euclidean spectral clustering}) recovers planted clusters under the stochastic ball model with probability $1-e^{-\Omega_{\mathcal{D},\gamma}(n)}$ provided $\Delta>\Delta^\star$.
\end{theorem}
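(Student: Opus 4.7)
The plan is to show that the planted partition lies among the $N-1$ candidate partitions considered by Algorithm~\ref{alg.euclidean spectral clustering} with probability $1-e^{-\Omega_{\mathcal{D},\gamma}(n)}$. Combined with the hypothesis $\Delta>\Delta^\star$, which (by definition) guarantees that the planted clustering is the unique $k$-means minimizer over all partitions of $\{1,\ldots,N\}$ with the same probability, the theorem will follow: the algorithm outputs the candidate with minimum $k$-means value, and this coincides with the global minimum as soon as the planted partition is itself a candidate.

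The candidate set consists of all partitions $(\{i:y_i<\theta\},\{i:y_i\geq\theta\})$; the planted partition is a candidate exactly when sorting $y_1,\ldots,y_N$ groups each cluster together. Since $y$ is a leading unit eigenvector of $\Phi_0^\top\Phi_0$, we may write $y=\Phi_0^\top u_1/\|\Phi_0^\top u_1\|_2$ for a unit leading left singular vector $u_1\in\mathbb{R}^m$ of $\Phi_0$, so the sort of $\{y_i\}$ coincides with the sort of $\{\langle x_i,u_1\rangle\}$. Let $u:=(\gamma_2-\gamma_1)/\Delta$ and choose the sign of $u_1$ so that $\cos\theta:=\langle u_1,u\rangle\geq 0$. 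Using $\|r_{a,i}\|_2\leq 1$, for every $i\in C_1$ and $j\in C_2$,
\[
\langle x_j,u_1\rangle-\langle x_i,u_1\rangle=\Delta\cos\theta+\langle r_{2,j}-r_{1,i},u_1\rangle\geq\Delta\cos\theta-2,
\]
so the sort separates the two planted clusters whenever $\cos\theta>2/\Delta$.

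To establish $\cos\theta>2/\Delta$, compare $M:=\tfrac{1}{N}\Phi_0\Phi_0^\top$ to its population limit. A direct computation using the rotation invariance of $\mathcal{D}$ and the balanced cluster sizes yields
\[
\bar M=\frac{\Delta^2}{4}uu^\top+\sigma_{\mathcal{D}}^2 I_m,\qquad \sigma_{\mathcal{D}}^2:=\mathbb{E}[\|r\|_2^2]/m,
\]
whose unique leading eigenvector is $u$ and whose spectral gap is $\Delta^2/4$. Matrix Bernstein applied to the bounded i.i.d.\ summands $(x_i-c^\star)(x_i-c^\star)^\top$ with $c^\star:=(\gamma_1+\gamma_2)/2$ (the $O(1/n)$ operator-norm discrepancy between $\hat c$ and $c^\star$ is negligible) gives $\|M-\bar M\|_{2\to 2}\leq t$ with probability $1-e^{-\Omega_{\mathcal{D},\gamma}(nt^2)}$. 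The Davis--Kahan sin-$\theta$ theorem then yields $\sin\theta\leq 8t/\Delta^2$, and selecting any fixed $t$ in the nonempty interval $(0,\tfrac{1}{8}\Delta\sqrt{\Delta^2-4})$---which is nonempty because $\Delta>\Delta^\star\geq 2$---forces $\cos\theta>2/\Delta$ with the required probability.

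The main obstacle is that this admissible range of $t$ collapses as $\Delta\downarrow 2$, so the matrix-Bernstein exponent must be permitted to depend on $\Delta-2$ (and hence on $\gamma$); this is precisely the $\Omega_{\mathcal{D},\gamma}(n)$ dependence the theorem advertises, so no loss is incurred. The sign-choice step for $u_1$ is legitimate because, upon concentration, the top eigenvalue of $\Phi_0\Phi_0^\top$ is simple and both of its unit eigenvectors give the same sorted ordering up to a global reversal, which has no effect on the resulting partition.
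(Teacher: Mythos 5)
Your argument is correct, and it reaches the same pivotal condition as the paper (the leading left singular direction $u_1$ of $\Phi_0$ must make angle $<\arccos(2/\Delta)$ with the inter-center direction, which by a sort-separation argument lets some threshold isolate the two planted balls), but you get there by a genuinely different route. The paper's Lemma~\ref{lemma.sufficient eigenvector} reduces the problem to showing $|\gamma^\top z|>\|z\|_2$ for the leading eigenvector $z$ of $\Phi_0\Phi_0^\top$, which the paper establishes by (a) a scalar Hoeffding bound on $\sum_i\big[(x_i^\top g)^2-(x_i^\top v)^2\big]$ for each fixed $v$ on the sphere within angle $\arccos(2/\Delta)$ of $g=\gamma/\|\gamma\|_2$, and then (b) a union bound over an $\epsilon$-net of that spherical cap. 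You instead re-derive the content of Lemma~\ref{lemma.sufficient eigenvector} directly from the sorted-threshold structure of the algorithm, and then establish the angular condition by comparing $\tfrac1N\Phi_0\Phi_0^\top$ to its population covariance $\tfrac{\Delta^2}{4}uu^\top+\sigma_{\mathcal D}^2I_m$ via Matrix Bernstein and Davis--Kahan. Your approach is more modular (off-the-shelf matrix-concentration plus perturbation theory, no explicit net construction), while the paper's is more elementary (only scalar Hoeffding). Both correctly absorb the shrinking admissible tolerance as $\Delta\downarrow 2$ into the $\Omega_{\mathcal D,\gamma}(n)$ exponent, and both correctly observe that $\Delta^\star\geq 2$ so the requisite spectral gap is bounded away from zero. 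Your handling of the centering discrepancy between $\hat c$ and $c^\star$ and of the sign ambiguity in $u_1$ are both adequate as stated.
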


See Appendix~\ref{sec:appendix_theorem_spectral_clustering} for the proof.
The main idea is that the leading eigenvector of $\Phi_0\Phi_0^\top$ is biased towards the difference between the ball centers, and as the following lemma establishes, this bias encourages spectral $k$-means clustering to separate the planted clusters:

\begin{lemma}
\label{lemma.sufficient eigenvector}
Take two clusters contained in unit balls centered at $\gamma$ and $-\gamma$ with $\|\gamma\|_2>1$.
If minimizing the $k$-means objective recovers these clusters, then spectral $k$-means clustering (Algorithm~\ref{alg.euclidean spectral clustering}) also recovers them, provided the leading eigenvector $z$ of $\Phi_0\Phi_0^\top$ satisfies $|\gamma^\top z|>\|z\|_2$.
\end{lemma}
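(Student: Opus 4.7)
The plan is to show that when $|\gamma^\top z|>\|z\|_2$, the entries of the eigenvector $y$ computed in Algorithm~\ref{alg.euclidean spectral clustering} strictly separate the two planted clusters. Consequently the planted partition appears among the $N-1$ threshold partitions considered by the rounding step, and since the planted partition is the $k$-means optimizer by hypothesis, the algorithm must return it.

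To begin, I would relate $y$ to the given $z$ through the singular value decomposition of $\Phi_0$: since $\Phi_0\Phi_0^\top z=\sigma_1^2 z$ with $\sigma_1^2=\|\Phi_0\|_{2\to 2}^2$, the vector $\Phi_0^\top z/\sigma_1$ is a unit eigenvector of $\Phi_0^\top\Phi_0$ with the same leading eigenvalue, so $y=\pm\Phi_0^\top z/\sigma_1$. Hence the ordering of the coordinates $y_i$ is the ordering (up to a global sign) of the inner products $(x_i-\bar{x})^\top z$, where $\bar{x}$ is the sample centroid subtracted off in the first line of the algorithm.

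Next, I would parametrize the two clusters by $x_{+,i}=\gamma+r_{+,i}$ (for $i=1,\dots,n_1$) and $x_{-,j}=-\gamma+r_{-,j}$ (for $j=1,\dots,n_2$) with $\|r_{\pm,\cdot}\|_2\leq 1$, compute
\[
\bar{x}=\tfrac{n_1-n_2}{N}\gamma+\bar{r},\qquad \bar{r}:=\tfrac{1}{N}\bigl(\textstyle\sum_i r_{+,i}+\sum_j r_{-,j}\bigr),
\]
and obtain the centered coordinates
\[
x_{+,i}-\bar{x}=\tfrac{2n_2}{N}\gamma+(r_{+,i}-\bar{r}),\qquad x_{-,j}-\bar{x}=-\tfrac{2n_1}{N}\gamma+(r_{-,j}-\bar{r}).
\]
Subtracting the corresponding inner products with $z$ collapses the $\bar{r}$ terms and leaves
\[
(x_{+,i}-\bar{x})^\top z-(x_{-,j}-\bar{x})^\top z=2\gamma^\top z+(r_{+,i}-r_{-,j})^\top z.
\]
Since $\|r_{+,i}-r_{-,j}\|_2\leq 2$, Cauchy--Schwarz bounds the residual term by $2\|z\|_2$, and the hypothesis $|\gamma^\top z|>\|z\|_2$ then forces this expression to have the same nonzero sign as $\gamma^\top z$ for every pair $(i,j)$. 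Thus every $+$-cluster entry of $y$ lies strictly on one side of every $-$-cluster entry.

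It follows that some threshold $\theta$ realizes the planted partition as $(\{i:y_i<\theta\},\{i:y_i\geq\theta\})$. Because the planted partition is the unique minimizer of the $k$-means objective (by the standing hypothesis that minimizing $k$-means recovers the clusters), and the rounding step of Algorithm~\ref{alg.euclidean spectral clustering} selects the threshold with the smallest $k$-means value, the algorithm must output exactly the planted clusters. The argument is essentially a one-line triangle-inequality computation; the only genuine care required is tracking the asymmetric cluster sizes $n_1,n_2$ through the centering step so that they cancel when comparing a $+$-point to a $-$-point, and relating $y$ to $z$ via the SVD (which also covertly uses the simplicity of the top singular value, guaranteed by the very separation just derived).
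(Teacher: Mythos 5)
Your argument is correct and follows essentially the same route as the paper: identify $y$ (the leading eigenvector of $\Phi_0^\top\Phi_0$ used in the rounding step) with $\Phi_0^\top z$ up to scaling, observe that thresholding on $y_i$ is thresholding on $x_i^\top z$ up to a common additive constant, and use $|\gamma^\top z|>\|z\|_2$ to conclude that the hyperplanes orthogonal to $z$ separate the two balls, so the planted partition is among the candidates scanned by the rounding step and, being the $k$-means optimizer by hypothesis, is the one selected. The only cosmetic difference is that you carry out the separation computation explicitly (centering, cancelling the $\bar r$ term, and applying Cauchy--Schwarz to $(r_{+,i}-r_{-,j})^\top z$), whereas the paper compresses this into a one-line ``trigonometric argument'' about the distance from $\pm\gamma$ to the hyperplane orthogonal to $z$; your closing remark that simplicity of the top singular value is ``guaranteed by the very separation just derived'' is slightly circular and also unnecessary, since the sign ambiguity $y=\pm\Phi_0^\top z/\sigma_1$ is harmless for thresholding in either direction.
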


\begin{proof}
Write $\Phi_0=\Phi-\mu1^\top$, put $\theta:=-\mu^\top z$, and observe that $y=\Phi_0^\top z$ is a leading eigenvector of $\Phi_0^\top\Phi_0$.
Then
\begin{equation}
\label{eq.separating}
y_i=(x_i-\mu)^\top z=x_i^\top z+\theta
\end{equation}
for every $i$.
Next, if $|\gamma^\top z|>\|z\|_2$, then a simple trigonometric argument gives that the balls (and therefore the planted clusters) are separated by the hyperplane orthogonal to $z$.
Combined with \eqref{eq.separating}, we then have that the clusters can be identified according to whether $y_i<\theta$ or $y_i>\theta$.
It therefore suffices to minimize the $k$-means objective subject to partitions of this form (for arbitrary thresholds $\theta$), as so spectral $k$-means clustering succeeds. 
\end{proof}

\section{Discussion}\label{sec:discussion}

This paper discussed various facets of probably certifiably correct algorithms for $k$-means clustering.
There are still many questions that have yet to be answered:
\begin{itemize}
\item
Let $\Delta^\star(\mathcal{D},k)$ denote the smallest value for which $\Delta>\Delta^\star$ implies that minimizing the $k$-means objective recovers planted clusters under the $(\mathcal{D},\gamma,n)$-stochastic ball model with probability $1-e^{-\Omega_{\mathcal{D},\gamma}(n)}$.
What is $\Delta^\star$?
It was conjectured in \cite{relax} that $\Delta^\star=2$, but as we demonstrated in Subsection~\ref{sec:SBM_SDP}, this is not the case.
\item
Let $\Delta_\mathrm{SDP}^\star(\mathcal{D},k)$ denote the smallest value for which $\Delta>\Delta_\mathrm{SDP}^\star$ implies that solving the $k$-means SDP recovers planted clusters under the $(\mathcal{D},\gamma,n)$-stochastic ball model with probability $1-e^{-\Omega_{\mathcal{D},\gamma}(n)}$.
What is $\Delta_\mathrm{SDP}^\star$?
Considering Subsection~\ref{sec:SBM_SDP} and Figure~\ref{figure.1}(center), we suspect the SDP exhibits a performance gap: $\Delta_\mathrm{SDP}^\star>\Delta^\star$.
\item
Is there a single dual certificate for the $k$-means SDP that typically certifies planted clusters under the stochastic ball model whenever $\Delta>\Delta_\mathrm{SDP}^\star$?
Does this certification have a quasilinear-time implementation similar to Subsection~\ref{sec.testing opt}?
\item
Is there a quasilinear-time $k$-means solver that typically solves $k$-means under the stochastic ball model whenever $\Delta>\Delta^\star$?
In particular, is there a quasilinear-time initialization of Lloyd's algorithm that meets this specification?
Following the philosophy of Section~\ref{sec:fast_solver}, such algorithms should be designed so as to not ``see'' the stochastic ball model.
\end{itemize}

\section*{Acknowledgments}

The authors thank the anonymous referees, whose suggestions significantly improved this paper's presentation and literature review.
The authors also thank Afonso S.\ Bandeira and Nicolas Boumal for interesting discussions and valuable comments on an earlier version of this manuscript.
DGM was supported by an AFOSR Young Investigator Research Program award, NSF Grant No.\ DMS-1321779, and AFOSR Grant No.\ F4FGA05076J002.
SV was supported by Rachel Ward's NSF CAREER award and AFOSR Young Investigator Research Program award.
The views expressed in this article are those of the authors and do not reflect the official policy or position
of the United States Air Force, Department of Defense, or the U.S.\ Government.

\bibliographystyle{abbrv}
\bibliography{PCC_kmeans}

\begin{thebibliography}{10}

\bibitem{abbe2014exact}
E.~Abbe, A.~S. Bandeira, and G.~Hall.
\newblock Exact recovery in the stochastic block model.
\newblock {\em arXiv preprint arXiv:1405.3267}, 2014.

\bibitem{abbe2015community}
E.~Abbe and C.~Sandon.
\newblock Community detection in general stochastic block models: fundamental
  limits and efficient recovery algorithms.
\newblock {\em arXiv preprint arXiv:1503.00609}, 2015.

\bibitem{Arthur07}
D.~Arthur and S.~Vassilvitskii.
\newblock k-means++: the advantages of careful seeding.
\newblock In {\em Proceedings of the eighteenth annual ACM-SIAM symposium on
  Discrete algorithms}, 2007.

\bibitem{relax}
P.~Awasthi, A.~Bandeira, M.~Charikar, K.~Ravishankar, S.~Villar, and R.~Ward.
\newblock Relax, no need to round: Integrality of clustering formulations.
\newblock {\em http://arxiv.org/abs/1408.4045}, 2014.

\bibitem{bandeira2015note}
A.~S. Bandeira.
\newblock A note on probably certifiably correct algorithms.
\newblock {\em arXiv preprint arXiv:1509.00824}, 2015.

\bibitem{chen20080}
H.~Chen and J.~Peng.
\newblock 0--1 semidefinite programming for graph-cut clustering: modelling and
  approximation.
\newblock {\em Data Mining and Mathematical Programming. CRM Proceedings and
  Lecture Notes of the American Mathematical Society}, pages 15--40, 2008.

\bibitem{dhillon2004kernel}
I.~S. Dhillon, Y.~Guan, and B.~Kulis.
\newblock Kernel k-means: spectral clustering and normalized cuts.
\newblock In {\em Proceedings of the tenth ACM SIGKDD international conference
  on Knowledge discovery and data mining}, pages 551--556. ACM, 2004.

\bibitem{dhillon2007weighted}
I.~S. Dhillon, Y.~Guan, and B.~Kulis.
\newblock Weighted graph cuts without eigenvectors a multilevel approach.
\newblock {\em Pattern Analysis and Machine Intelligence, IEEE Transactions
  on}, 29(11):1944--1957, 2007.

\bibitem{elhamifar2012finding}
E.~Elhamifar, G.~Sapiro, and R.~Vidal.
\newblock Finding exemplars from pairwise dissimilarities via simultaneous
  sparse recovery.
\newblock {\em Advances in Neural Information Processing Systems}, pages
  19--27, 2012.

\bibitem{golub2012matrix}
G.~H. Golub and C.~F. Van~Loan.
\newblock {\em Matrix computations}, volume~3.
\newblock JHU Press, 2012.

\bibitem{grant2008cvx}
M.~Grant, S.~Boyd, and Y.~Ye.
\newblock Cvx: Matlab software for disciplined convex programming, 2008.

\bibitem{iguchi2015tightness}
T.~Iguchi, D.~G. Mixon, J.~Peterson, and S.~Villar.
\newblock On the tightness of an sdp relaxation of k-means.
\newblock {\em arXiv preprint arXiv:1505.04778}, 2015.

\bibitem{jms06}
K.~Jain, M.~Mahdian, and A.~Saberi.
\newblock A new greedy approach for facility location problems.
\newblock In {\em Proceedings of the 34th Annual ACM Symposium on Theory of
  Computing}, 2002.

\bibitem{laurent2000adaptive}
B.~Laurent and P.~Massart.
\newblock Adaptive estimation of a quadratic functional by model selection.
\newblock {\em Annals of Statistics}, pages 1302--1338, 2000.

\bibitem{Lloyd}
S.~Lloyd.
\newblock Least squares quantization in pcm.
\newblock {\em Information Theory, IEEE Transactions on}, 28(2):129--137, 1982.

\bibitem{mallat1993matching}
S.~G. Mallat and Z.~Zhang.
\newblock Matching pursuits with time-frequency dictionaries.
\newblock {\em Signal Processing, IEEE Transactions on}, 41(12):3397--3415,
  1993.

\bibitem{Nellore_Kmedians}
A.~Nellore and R.~Ward.
\newblock Recovery guarantees for exemplar-based clustering.
\newblock {\em arXiv:1309.3256}, 2013.

\bibitem{nesterov1994interior}
Y.~Nesterov, A.~Nemirovskii, and Y.~Ye.
\newblock {\em Interior-point polynomial algorithms in convex programming},
  volume~13.
\newblock SIAM, 1994.

\bibitem{Lloyd06}
R.~Ostrovsky, Y.~Rabani, L.~Schulman, and C.~Swamy.
\newblock The effectiveness of lloyd-type methods for the k-means problem.
\newblock In {\em Proceedings of the 47th Annual IEEE Symposium on Foundations
  of Computer Science}, 2006.

\bibitem{peng2007approximating}
J.~Peng and Y.~Wei.
\newblock Approximating k-means-type clustering via semidefinite programming.
\newblock {\em SIAM Journal on Optimization}, 18(1):186--205, 2007.

\bibitem{tropp2012user}
J.~A. Tropp.
\newblock User-friendly tail bounds for sums of random matrices.
\newblock {\em Foundations of Computational Mathematics}, 12(4):389--434, 2012.

\bibitem{vershynin11}
R.~Vershynin.
\newblock Introduction to the non-asymptotic analysis of random matrices.
\newblock {\em arXiv:1011.3027v7}, 2011.

\bibitem{wang2011ckmeans}
H.~Wang and M.~Song.
\newblock Ckmeans.1d.dp: optimal k-means clustering in one dimension by dynamic
  programming.
\newblock {\em The R Journal}, 3(2):29--33, 2011.

\end{thebibliography}


\appendix

\section{Proof of Corollary \ref{cor.dual certificate}} \label{sec:appendix_corollary}

It suffices to have
\begin{equation}
\label{eq.triangle for new condition}
\|P_{\Lambda^\perp}MP_{\Lambda^\perp}\|_{2\rightarrow2}+\|P_{\Lambda^\perp}BP_{\Lambda^\perp}\|_{2\rightarrow2}
\leq z.
\end{equation}
We will bound the terms in \eqref{eq.triangle for new condition} separately and then combine the bounds to derive a sufficient condition for Theorem~\ref{thm.dual certificate}.
To bound the first term in \eqref{eq.triangle for new condition}, let $\nu$ be the $N\times 1$ vector whose $(a,i)$th entry is $\|x_{a,i}\|_2^2$, and let $\Phi$ be the $m\times N$ matrix whose $(a,i)$th column is $x_{a,i}$.
Then
\[
D_{(a,i),(b,j)}
=\|x_{a,i}-x_{b,j}\|_2^2
=\|x_{a,i}\|_2^2-2x_{a,i}^\top x_{b,j}+\|x_{b,j}\|_2^2
=(\nu1^\top-2\Phi^\top\Phi+1\nu^\top)_{(a,i),(b,j)},
\]
meaning $D=\nu1^\top-2\Phi^\top\Phi+1\nu^\top$.
With this, we appeal to the blockwise definition of $M$ \eqref{eq.definition of M}:
\begin{align*}
\|P_{\Lambda^\perp}MP_{\Lambda^\perp}\|_{2\rightarrow2}
=\|P_{\Lambda^\perp}DP_{\Lambda^\perp}\|_{2\rightarrow2}
&=\|P_{\Lambda^\perp}(\nu1^\top-2\Phi^\top\Phi+1\nu^\top)P_{\Lambda^\perp}\|_{2\rightarrow2}\\
&=2\|P_{\Lambda^\perp}\Phi^\top\Phi P_{\Lambda^\perp}\|_{2\rightarrow2}
=2\|\Phi P_{\Lambda^\perp}\|_{2\rightarrow2}^2
=2\|\Psi\|_{2\rightarrow2}^2.
\end{align*}
For the second term in \eqref{eq.triangle for new condition}, we first write the decomposition
\[
B=\sum_{a=1}^k\sum_{b=a+1}^k\Big(H_{(a,b)}(B^{(a,b)})+H_{(b,a)}(B^{(b,a)})\Big),
\]
where $H_{(a,b)}\colon\mathbb{R}^{n_a\times n_b}\rightarrow\mathbb{R}^{N\times N}$ produces a matrix whose $(a,b)$th block is the input matrix, and is otherwise zero.
Then
\begin{align*}
P_{\Lambda^\perp}BP_{\Lambda^\perp}
&=\sum_{a=1}^k\sum_{b=a+1}^kP_{\Lambda^\perp}\Big(H_{(a,b)}(B^{(a,b)})+H_{(b,a)}(B^{(b,a)})\Big)P_{\Lambda^\perp}\\
&=\sum_{a=1}^k\sum_{b=a+1}^k\Big(H_{(a,b)}(P_{1^\perp}B^{(a,b)}P_{1^\perp})+H_{(b,a)}(P_{1^\perp}B^{(b,a)}P_{1^\perp})\Big),
\end{align*}
and so the triangle inequality gives
\begin{align*}
\|P_{\Lambda^\perp}BP_{\Lambda^\perp}\|_{2\rightarrow2}
&\leq\sum_{a=1}^k\sum_{b=a+1}^k\|H_{(a,b)}(P_{1^\perp}B^{(a,b)}P_{1^\perp})+H_{(b,a)}(P_{1^\perp}B^{(b,a)}P_{1^\perp})\|_{2\rightarrow2}\\
&=\sum_{a=1}^k\sum_{b=a+1}^k\|P_{1^\perp}B^{(a,b)}P_{1^\perp}\|_{2\rightarrow2},
\end{align*}
where the last equality can be verified by considering the spectrum of the square:
\begin{align*}
&\Big(H_{(a,b)}(P_{1^\perp}B^{(a,b)}P_{1^\perp})+H_{(b,a)}(P_{1^\perp}B^{(b,a)}P_{1^\perp})\Big)^2\\
&\qquad=H_{(a,a)}\Big((P_{1^\perp}B^{(a,b)}P_{1^\perp})(P_{1^\perp}B^{(a,b)}P_{1^\perp})^\top\Big)+H_{(b,b)}\Big((P_{1^\perp}B^{(a,b)}P_{1^\perp})^\top(P_{1^\perp}B^{(a,b)}P_{1^\perp})\Big).
\end{align*}
At this point, we use the definition of $B$ \eqref{eq.how to construct B} to get
\[
\|P_{1^\perp}B^{(a,b)}P_{1^\perp}\|_{2\rightarrow2}
=\frac{\|P_{1^\perp}u_{(a,b)}\|_2\|P_{1^\perp}u_{(b,a)}\|_2}{\rho_{(a,b)}}.
\]
Recalling the definition of $u_{(a,b)}$ \eqref{eq.how to construct B} and combining these estimates then produces the result.

\section{Proof Theorem \ref{main_theorem}} \label{sec:model} \label{sec:appendix_theorem}

In this section, we apply the certificate from Corollary \ref{cor.dual certificate} to the  $(\mathcal{D},\gamma,n)$-stochastic ball model (see Definition \ref{stochastic_balls}) to prove our main result. We will prove Theorem \ref{main_theorem} with the help of several lemmas.

\begin{lemma}
Denote
\[
c_a:=\frac{1}{n}\sum_{i=1}^nx_{a,i},
\qquad
\Delta_{ab}:=\|\gamma_a-\gamma_b\|_2,
\qquad
O_{ab}:=\frac{\gamma_a+\gamma_b}{2}.
\]
Then the $(\mathcal{D},\gamma,n)$-stochastic ball model satisfies the following estimates:
\begin{align}
\label{eq.empirical center}
\|c_a-\gamma_a\|_2&<\epsilon\qquad\mbox{w.p.}\qquad1-e^{-\Omega_{m,\epsilon}(n)}\\
\label{eq.empirical average radius}
\bigg|\frac{1}{n}\sum_{i=1}^n\|r_{a,i}\|_2^2-\mathbb{E}\|r\|_2^2\bigg|&<\epsilon\qquad\mbox{w.p.}\qquad1-e^{-\Omega_\epsilon(n)}\\
\label{eq.empirical average distance from midpoint}
\bigg|\frac{1}{n}\sum_{i=1}^n\|x_{a,i}-O_{ab}\|_2^2-\mathbb{E}\|r+\gamma_a-O_{ab}\|_2^2\bigg|&<\epsilon\qquad\mbox{w.p.}\qquad1-e^{-\Omega_{\Delta_{ab},\epsilon}(n)}
\end{align}
\end{lemma}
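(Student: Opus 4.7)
The plan is to recognize that all three estimates are concentration-of-measure statements for empirical averages of i.i.d.\ bounded random variables drawn from the rotation-invariant distribution $\mathcal{D}$ on the unit ball, so Hoeffding's inequality (or a standard vector-valued variant) will do the work in every case. Throughout I will use the fact that $\mathbb{E}[r]=0$ by rotation invariance of $\mathcal{D}$, and that $\|r\|_2\le 1$ almost surely.

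For \eqref{eq.empirical center}, I would write $c_a-\gamma_a=\frac{1}{n}\sum_{i=1}^n r_{a,i}$, which is a mean-zero sample average of vectors in the unit ball of $\mathbb{R}^m$. The cleanest route is coordinate-wise: each coordinate $(r_{a,i})_\ell$ is bounded in $[-1,1]$ and mean zero, so scalar Hoeffding gives $\Pr(|(c_a-\gamma_a)_\ell|>\epsilon/\sqrt{m})\le 2e^{-c n\epsilon^2/m}$, and a union bound over the $m$ coordinates upgrades this to $\|c_a-\gamma_a\|_2<\epsilon$ with probability $1-2m\,e^{-cn\epsilon^2/m}=1-e^{-\Omega_{m,\epsilon}(n)}$. (Alternatively, one could use a net over the unit sphere in $\mathbb{R}^m$, giving the same order of the exponent in $n$.)

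For \eqref{eq.empirical average radius}, note that $\|r_{a,i}\|_2^2$ is a scalar random variable taking values in $[0,1]$. Scalar Hoeffding applied to the i.i.d.\ sequence $\{\|r_{a,i}\|_2^2\}_{i=1}^n$ yields the desired bound with exponent $-\Omega(n\epsilon^2)=-\Omega_\epsilon(n)$; there is no dimensional dependence since the summands are uniformly bounded by $1$ regardless of $m$.

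For \eqref{eq.empirical average distance from midpoint}, expand $\|x_{a,i}-O_{ab}\|_2^2=\|r_{a,i}+(\gamma_a-O_{ab})\|_2^2$. Since $\|\gamma_a-O_{ab}\|_2=\Delta_{ab}/2$ and $\|r_{a,i}\|_2\le 1$, this summand takes values in $[0,(1+\Delta_{ab}/2)^2]$. The summands are again i.i.d.\ with common mean $\mathbb{E}\|r+\gamma_a-O_{ab}\|_2^2$, so Hoeffding gives an exponent of order $-n\epsilon^2/(1+\Delta_{ab}/2)^4=-\Omega_{\Delta_{ab},\epsilon}(n)$, as claimed.

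No step here is really an obstacle: each bound is a direct application of Hoeffding to a bounded scalar (or bounded vector, handled coordinate-wise). The only mild subtlety is getting the vector-valued bound in \eqref{eq.empirical center} with the right dependence on $m$ in the exponent, which is why the rate is recorded as $e^{-\Omega_{m,\epsilon}(n)}$ rather than $e^{-\Omega_\epsilon(n)}$; the coordinate-plus-union-bound approach makes this explicit and keeps the proof uniform across the three estimates.
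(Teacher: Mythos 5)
Your proposal is correct, and the last two estimates are proved exactly as the paper does: scalar Hoeffding on the i.i.d.\ bounded sequences $\{\|r_{a,i}\|_2^2\}$ and $\{\|r_{a,i}+\gamma_a-O_{ab}\|_2^2\}$, with the interval width depending on $\Delta_{ab}$ in the latter case. The only place where you diverge from the paper is \eqref{eq.empirical center}: you apply scalar Hoeffding to each of the $m$ coordinates of $\tfrac{1}{n}\sum_i r_{a,i}$ with threshold $\epsilon/\sqrt{m}$ and take a union bound, whereas the paper lifts $r_{a,i}$ to the $(m+1)\times(m+1)$ Hermitian dilation $X_{a,i}=\bigl[\begin{smallmatrix}0 & r_{a,i}^\top\\ r_{a,i} & 0\end{smallmatrix}\bigr]$ and invokes the Matrix Hoeffding inequality of Tropp to get $\Pr\bigl(\|\sum_i r_{a,i}\|_2\ge t\bigr)\le m\,e^{-t^2/8n}$ directly. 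Both are valid and both land in $e^{-\Omega_{m,\epsilon}(n)}$, which is all the lemma claims; the trade-off is that your coordinate-wise argument is entirely elementary (no matrix concentration needed) but puts the dimension inside the exponent as $e^{-cn\epsilon^2/m}$, while the matrix approach keeps the exponent dimension-free ($e^{-cn\epsilon^2}$) at the cost of only a polynomial prefactor $m$. For the asymptotics in $n$ used throughout the paper this difference is immaterial, so either route is fine here.
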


\begin{proof}
Since $\mathbb{E}r=0$ and $\|r\|_2^2\leq1$ almost surely, one may lift
\[
X_{a,i}:=\left[\begin{array}{cc}0&r_{a,i}^\top\\r_{a,i}&0\end{array}\right]
\]
and apply the Matrix Hoeffding inequality~\cite{tropp2012user} to conclude that
\[
\operatorname{Pr}\bigg(\bigg\|\sum_{i=1}^nr_{a,i}\bigg\|_2\geq t\bigg)\leq me^{-t^2/8n}.
\]
Taking $t:=\epsilon n$ then gives \eqref{eq.empirical center}.
For \eqref{eq.empirical average radius} and \eqref{eq.empirical average distance from midpoint}, notice that the random variables in each sum are iid and confined to an interval almost surely, and so the result follows from Hoeffding's inequality.
\end{proof}

\begin{lemma}
\label{lemma.difference of distances}
Under the $(\mathcal{D},\gamma,n)$-stochastic ball model, we have $D^{(a,b)}1-D^{(a,a)}1=4np+q$, where
\begin{align*}
	p_i	&:=r_{a,i}^\top(\gamma_a-O_{ab})+\frac{\Delta_{ab}^2}{4}\\
	q_i	&:=2n(x_{a,i}-O_{ab})^\top\bigg((c_a-c_b)-(\gamma_a-\gamma_b)\bigg)+\bigg(\sum_{j=1}^n\|x_{b,j}-O_{ab}\|_2^2-\sum_{j=1}^n\|x_{a,j}-O_{ab}\|_2^2\bigg)
\end{align*}
and $|q_i|\leq(6+2\Delta_{ab})n\epsilon$ with probability $1-e^{-\Omega_{m,\Delta_{ab},\epsilon}(n)}$.
\end{lemma}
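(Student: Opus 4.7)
The plan is to compute $D^{(a,b)}1 - D^{(a,a)}1$ entrywise and then sort the result into a ``signal'' piece $4np$ and a ``noise'' piece $q$. First I would expand both squared distances about the midpoint $O_{ab}$:
\[
\|x_{a,i}-x_{b,j}\|_2^2 = \|x_{a,i}-O_{ab}\|_2^2 - 2(x_{a,i}-O_{ab})^\top(x_{b,j}-O_{ab}) + \|x_{b,j}-O_{ab}\|_2^2,
\]
and analogously for $\|x_{a,i}-x_{a,j}\|_2^2$. The $\|x_{a,i}-O_{ab}\|_2^2$ term is independent of $j$, so it cancels in the difference. Summing the remaining pieces over $j$ collapses the cross terms into $-2n(x_{a,i}-O_{ab})^\top(c_b-c_a)$, and leaves the scalar difference $\sum_j\|x_{b,j}-O_{ab}\|_2^2 - \sum_j\|x_{a,j}-O_{ab}\|_2^2$ as a leftover.

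Next I would decompose $c_a - c_b = (\gamma_a-\gamma_b) + [(c_a-c_b)-(\gamma_a-\gamma_b)]$. The second bracket is precisely the first term of $q_i$. For the $(\gamma_a-\gamma_b)$ part I use the identity $\gamma_a - O_{ab} = (\gamma_a-\gamma_b)/2$, i.e.\ $\gamma_a - \gamma_b = 2(\gamma_a - O_{ab})$, together with $x_{a,i} - O_{ab} = r_{a,i} + (\gamma_a - O_{ab})$, to obtain
\[
2n(x_{a,i}-O_{ab})^\top(\gamma_a-\gamma_b) = 4n\bigl[r_{a,i}^\top(\gamma_a-O_{ab}) + \tfrac{1}{4}\Delta_{ab}^2\bigr] = 4np_i,
\]
where I used $\|\gamma_a - O_{ab}\|_2^2 = \Delta_{ab}^2/4$. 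This produces the claimed identity $D^{(a,b)}1 - D^{(a,a)}1 = 4np + q$.

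For the probabilistic bound on $|q_i|$, I would handle the two summands of $q_i$ separately. For the first summand, apply \eqref{eq.empirical center} to both clusters to get $\|(c_a-c_b)-(\gamma_a-\gamma_b)\|_2 < 2\epsilon$, and combine with the deterministic triangle-inequality bound $\|x_{a,i}-O_{ab}\|_2 \leq \|r_{a,i}\|_2 + \|\gamma_a-O_{ab}\|_2 \leq 1 + \Delta_{ab}/2$; Cauchy--Schwarz then gives at most $(4+2\Delta_{ab})n\epsilon$. For the second summand, the key observation is that $\mathbb{E}\|r+\gamma_a-O_{ab}\|_2^2 = \mathbb{E}\|r+\gamma_b-O_{ab}\|_2^2$: expanding the square and using $\mathbb{E}r = 0$ (from rotation invariance of $\mathcal{D}$) leaves only $\mathbb{E}\|r\|_2^2 + \|\gamma_c - O_{ab}\|_2^2$, and these norms agree since both equal $\Delta_{ab}/2$. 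So the two empirical expectations share a common limit, and \eqref{eq.empirical average distance from midpoint} applied to each cluster gives deviation at most $2n\epsilon$ total by the triangle inequality. Adding the two contributions yields $(6+2\Delta_{ab})n\epsilon$.

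The argument is essentially bookkeeping, so there is no real obstacle; the only point that requires a moment of care is noticing the symmetry $\mathbb{E}\|r+\gamma_a-O_{ab}\|_2^2 = \mathbb{E}\|r+\gamma_b-O_{ab}\|_2^2$ that makes the second piece of $q_i$ a small deviation rather than a deterministic bias, and tracking the union bound so that the overall failure probability remains $e^{-\Omega_{m,\Delta_{ab},\epsilon}(n)}$.
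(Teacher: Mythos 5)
Your proof is correct and follows essentially the same route as the paper's: expand both squared distances about $O_{ab}$, collapse the $j$-sums into $2n(x_{a,i}-O_{ab})^\top(c_a-c_b)$ plus the leftover norm difference, split $c_a-c_b$ into $(\gamma_a-\gamma_b)$ plus a deviation to expose $4np_i$, and then bound the two pieces of $q_i$ via \eqref{eq.empirical center}, Cauchy--Schwarz, and \eqref{eq.empirical average distance from midpoint} using the symmetry $\mathbb{E}\|r+\gamma_a-O_{ab}\|_2^2=\mathbb{E}\|r+\gamma_b-O_{ab}\|_2^2$. No gaps; this is the paper's argument in essentially the same form.
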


\begin{proof}
Add and subtract $O_{ab}$ and then expand the squares to get
\begin{align*}
e_i^\top(D^{(a,b)}1-D^{(a,a)}1)
&=\sum_{j=1}^n\|x_{a,i}-x_{b,j}\|_2^2-\sum_{j=1}^n\|x_{a,i}-x_{a,j}\|_2^2\\
&=n\bigg(-2(x_{a,i}-O_{ab})^\top(c_b-O_{ab})+\frac{1}{n}\sum_{j=1}^n\|x_{b,j}-O_{ab}\|_2^2\bigg)\\
&\qquad-n\bigg(-2(x_{a,i}-O_{ab})^\top(c_a-O_{ab})+\frac{1}{n}\sum_{j=1}^n\|x_{a,j}-O_{ab}\|_2^2\bigg)\\
&=2n(x_{a,i}-O_{ab})^\top(c_a-c_b)+\bigg(\sum_{j=1}^n\|x_{b,j}-O_{ab}\|_2^2-\sum_{j=1}^n\|x_{a,j}-O_{ab}\|_2^2\bigg).
\end{align*}
Add and subtract $\gamma_a-\gamma_b$ to $c_a-c_b$ and distribute over the resulting sum to obtain
\begin{align*}
e_i^\top(D^{(a,b)}1-D^{(a,a)}1)
&=2n(x_{a,i}-O_{ab})^\top(\gamma_a-\gamma_b)+q\\
&=4n\Big(r_{a,i}+(\gamma_a-O_{ab})\Big)^\top(\gamma_a-O_{ab})+q.
\end{align*}
Distributing and identifying $\|\gamma_a-O_{ab}\|_2^2=\Delta_{ab}^2/4$ explains the definition of $p$.
To show $|q_i|\leq(6+2\Delta_{ab})n\epsilon$, apply triangle and Cauchy--Schwarz to obtain
\begin{align*}
|q_i|
	&\leq \bigg|2n(x_{a,i}-O_{ab})^\top\bigg((c_a-c_b)-(\gamma_a-\gamma_b)\bigg)\bigg|+\bigg|\sum_{j=1}^n\|x_{b,j}-O_{ab}\|_2^2-\sum_{j=1}^n\|x_{a,j}-O_{ab}\|_2^2\bigg|\\
	&\leq 2n \bigg(\|r_{a,i}\|_2+\|\gamma_a-O_{a,b}\|_2\bigg)\bigg(\|c_a-\gamma_a\|_2+\|c_b-\gamma_b\|_2\bigg)+\bigg|\sum_{j=1}^n\|x_{b,j}-O_{ab}\|_2^2-\sum_{j=1}^n\|x_{a,j}-O_{ab}\|_2^2\bigg|\\
	&\leq 2n\bigg(1+\frac{\Delta_{ab}}{2}\bigg)\bigg(\|c_a-\gamma_a\|_2+\|c_b-\gamma_b\|_2\bigg)+\bigg|\sum_{j=1}^n\|x_{b,j}-O_{ab}\|_2^2-\sum_{j=1}^n\|x_{a,j}-O_{ab}\|_2^2\bigg|.
\end{align*}
To finish the argument, apply \eqref{eq.empirical center} to the first term while adding and subtracting
\[
\mathbb{E}\|r+\gamma_a-O_{ab}\|_2^2=\mathbb{E}\|r+\gamma_b-O_{ab}\|_2^2,
\]
from the second and apply \eqref{eq.empirical average distance from midpoint}.
\end{proof}

\begin{lemma}
\label{lemma.within cluster distances}
Under the $(\mathcal{D},\gamma,n)$-stochastic ball model, we have
\[
\bigg|\frac{1}{n}1^\top D^{(a,a)}1-2n\mathbb{E}\|r\|_2^2\bigg|
\leq 4n\epsilon
\qquad
\mbox{w.p.}
\qquad
1-e^{-\Omega_{\Delta_{ab},\epsilon}(n)}.
\]
\end{lemma}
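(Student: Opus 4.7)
The plan is to exploit translation invariance of the within-cluster distances and reduce to the two concentration statements already established in the preceding lemma.

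First I would observe that $D^{(a,a)}_{ij}=\|x_{a,i}-x_{a,j}\|_2^2=\|r_{a,i}-r_{a,j}\|_2^2$, since the shift by $\gamma_a$ cancels. Expanding the square and summing gives the algebraic identity
\[
1^\top D^{(a,a)}1
=\sum_{i,j=1}^n\bigl(\|r_{a,i}\|_2^2-2r_{a,i}^\top r_{a,j}+\|r_{a,j}\|_2^2\bigr)
=2n\sum_{i=1}^n\|r_{a,i}\|_2^2-2\Bigl\|\sum_{i=1}^n r_{a,i}\Bigr\|_2^2.
\]
Dividing by $n$ and subtracting the target $2n\mathbb{E}\|r\|_2^2$ yields
\[
\frac{1}{n}1^\top D^{(a,a)}1-2n\mathbb{E}\|r\|_2^2
=2\sum_{i=1}^n\bigl(\|r_{a,i}\|_2^2-\mathbb{E}\|r\|_2^2\bigr)-\frac{2}{n}\Bigl\|\sum_{i=1}^n r_{a,i}\Bigr\|_2^2.
\]

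Second, I would bound the two terms on the right separately using the previous lemma. The concentration estimate \eqref{eq.empirical average radius} (applied with tolerance $\epsilon$) bounds the first term by $2n\epsilon$ with probability $1-e^{-\Omega_\epsilon(n)}$. For the second term, note that $c_a-\gamma_a=\frac{1}{n}\sum_i r_{a,i}$, so \eqref{eq.empirical center} gives $\|\sum_i r_{a,i}\|_2<n\epsilon$ with probability $1-e^{-\Omega_{m,\epsilon}(n)}$; squaring and dividing by $n$ produces a bound of $2n\epsilon^2$.

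Finally, the triangle inequality and a union bound yield an overall deviation at most $2n\epsilon+2n\epsilon^2\leq 4n\epsilon$ (for $\epsilon\leq 1$, which is the regime of interest), with probability $1-e^{-\Omega_{m,\epsilon}(n)}$. The stated $\Delta_{ab}$-dependence in the exponent is vacuous here (within-cluster distances do not involve the cluster separation), but harmlessly consistent with the surrounding lemmas. I do not anticipate a real obstacle: the identity $1^\top D^{(a,a)}1=2n\sum\|r_{a,i}\|_2^2-2\|\sum r_{a,i}\|_2^2$ is the only nontrivial step, and after that the result is immediate from the estimates already proved.
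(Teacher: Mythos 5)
Your proof is correct and follows essentially the same route as the paper: both expand $\|r_{a,i}-r_{a,j}\|_2^2$ and reduce to the concentration estimates \eqref{eq.empirical average radius} and \eqref{eq.empirical center}. The only difference is that you collect the cross term algebraically into $-\tfrac{2}{n}\|\sum_i r_{a,i}\|_2^2=-2n\|c_a-\gamma_a\|_2^2$ (yielding a bound of $2n\epsilon^2$), whereas the paper bounds it pointwise via Cauchy--Schwarz and $\|r_{a,i}\|_2\leq1$ (yielding $2n\epsilon$); both give $\leq 4n\epsilon$, and your observation that the exponent should really carry $m$-dependence rather than the vacuous $\Delta_{ab}$-dependence is accurate.
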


\begin{proof}
Add and subtract $\gamma_a$ and expand the square to get
\[
\frac{1}{n}e_i^\top D^{(a,a)}1
=\frac{1}{n}\sum_{j=1}^n\|x_{a,i}-x_{a,j}\|_2^2
=\|r_{a,i}\|_2^2-2r_{a,i}^\top(c_a-\gamma_a)+\frac{1}{n}\sum_{j=1}^n\|r_{a,j}\|_2^2.
\]
The triangle and Cauchy--Schwarz inequalities then give
\begin{align*}
&\bigg|\frac{1}{n}1^\top D^{(a,a)}1-2n\mathbb{E}\|r\|_2^2\bigg|\\
&\qquad=\bigg|\sum_{i=1}^n\bigg(\|r_{a,i}\|_2^2-2r_{a,i}^\top(c_a-\gamma_a)+\frac{1}{n}\sum_{j=1}^n\|r_{a,j}\|_2^2\bigg)-2n\mathbb{E}\|r\|_2^2\bigg|\\
&\qquad\leq n\bigg|\frac{1}{n}\sum_{i=1}^n\|r_{a,i}\|_2^2-\mathbb{E}\|r\|_2^2\bigg|+2\sum_{i=1}^n|r_{a,i}^\top(c_a-\gamma_a)|+n\bigg|\frac{1}{n}\sum_{j=1}^n\|r_{a,j}\|_2^2-\mathbb{E}\|r\|_2^2\bigg|\\
&\qquad\leq n\bigg|\frac{1}{n}\sum_{i=1}^n\|r_{a,i}\|_2^2-\mathbb{E}\|r\|_2^2\bigg|+2\sum_{i=1}^n\|c_a-\gamma_a\|_2+n\bigg|\frac{1}{n}\sum_{j=1}^n\|r_{a,j}\|_2^2-\mathbb{E}\|r\|_2^2\bigg|\\
&\qquad\leq 4n\epsilon,
\end{align*}
where the last step occurs with probability $1-e^{-\Omega_{\Delta_{ab},\epsilon}(n)}$ by a union bound over \eqref{eq.empirical average radius} and \eqref{eq.empirical center}.
\end{proof}

\begin{lemma}
\label{lemma.difference of average distances}
Under the $(\mathcal{D},\gamma,n)$-stochastic ball model, we have
\[
1^\top D^{(a,b)}1-1^\top D^{(a,a)}1
\geq n^2\Delta_{ab}^2-(6+4\Delta_{ab})n^2\epsilon
\qquad
\mbox{w.p.}
\qquad
1-e^{-\Omega_{m,\Delta_{ab},\epsilon}(n)}.
\]
\end{lemma}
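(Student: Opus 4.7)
The plan is to build directly on Lemma~\ref{lemma.difference of distances}, which already decomposes the blockwise difference $D^{(a,b)}1 - D^{(a,a)}1$ into a ``signal'' term $4np$ and an ``error'' term $q$ of controlled size. All that remains is to sum the coordinates and track the two contributions separately.

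First I would hit the identity $D^{(a,b)}1 - D^{(a,a)}1 = 4np + q$ from Lemma~\ref{lemma.difference of distances} with $1^\top$ on the left to obtain
\[
1^\top D^{(a,b)}1 - 1^\top D^{(a,a)}1 \;=\; 4n\,(1^\top p) \;+\; 1^\top q.
\]
For the signal term, use the definition of $p_i$ and the identity $\sum_{i=1}^n r_{a,i} = n(c_a - \gamma_a)$ to write
\[
1^\top p \;=\; n(c_a - \gamma_a)^\top (\gamma_a - O_{ab}) \;+\; \frac{n\,\Delta_{ab}^2}{4},
\]
so that $4n(1^\top p) = n^2\Delta_{ab}^2 + 4n^2(c_a-\gamma_a)^\top(\gamma_a-O_{ab})$. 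The main deterministic piece $n^2 \Delta_{ab}^2$ is already the dominant term in the claim.

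Next I would bound the two residual contributions. By Cauchy--Schwarz and the fact that $\|\gamma_a - O_{ab}\|_2 = \Delta_{ab}/2$, together with the concentration estimate \eqref{eq.empirical center},
\[
\bigl|4n^2(c_a-\gamma_a)^\top(\gamma_a-O_{ab})\bigr| \;\leq\; 4n^2 \cdot \epsilon \cdot \tfrac{\Delta_{ab}}{2} \;=\; 2n^2\Delta_{ab}\,\epsilon
\]
with probability $1 - e^{-\Omega_{m,\epsilon}(n)}$. For the error term, the coordinatewise bound $|q_i|\leq(6+2\Delta_{ab})n\epsilon$ from Lemma~\ref{lemma.difference of distances} gives $|1^\top q| \leq (6 + 2\Delta_{ab})n^2\epsilon$ on the same high-probability event (this is the event on which that lemma's bound holds, so we pay no extra probability cost beyond a union bound).

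Combining the pieces,
\[
1^\top D^{(a,b)}1 - 1^\top D^{(a,a)}1 \;\geq\; n^2\Delta_{ab}^2 - 2n^2\Delta_{ab}\,\epsilon - (6 + 2\Delta_{ab})n^2\epsilon \;=\; n^2\Delta_{ab}^2 - (6 + 4\Delta_{ab})n^2\epsilon,
\]
which is exactly the claimed inequality. There is no real obstacle here; the proof is essentially bookkeeping once Lemma~\ref{lemma.difference of distances} is in hand. The only place requiring minor care is making sure the two probabilistic events --- the empirical-center bound \eqref{eq.empirical center} used for the cross-term and the high-probability bound on $|q_i|$ from Lemma~\ref{lemma.difference of distances} --- are intersected via a union bound, which still yields failure probability $e^{-\Omega_{m,\Delta_{ab},\epsilon}(n)}$.
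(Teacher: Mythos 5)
Your proof is correct and follows essentially the same route as the paper's: apply $1^\top$ to the decomposition $D^{(a,b)}1-D^{(a,a)}1=4np+q$ from Lemma~\ref{lemma.difference of distances}, sum the $p_i$ to expose $n^2\Delta_{ab}^2$ plus a cross-term controlled by Cauchy--Schwarz together with \eqref{eq.empirical center}, and absorb $1^\top q$ via the coordinatewise bound on $|q_i|$. The paper's proof is exactly this argument with the bookkeeping left slightly more implicit.
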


\begin{proof}
Lemma~\ref{lemma.difference of distances} gives
\begin{align*}
1^\top D^{(a,b)}1-1^\top D^{(a,a)}1
&=1^\top(4np+q)\\
&\geq 4n\sum_{i=1}^n\bigg(r_{a,i}^\top(\gamma_a-O_{ab})+\frac{\Delta_{ab}^2}{4}\bigg)-(6+2\Delta_{ab})n^2\epsilon\\
&\geq 4n\bigg(n(c_a-\gamma_a)^\top(\gamma_a-O_{ab})+\frac{n\Delta_{ab}^2}{4}\bigg)-(6+2\Delta_{ab})n^2\epsilon.
\end{align*}
Cauchy--Schwarz along with \eqref{eq.empirical center} then gives the result.
\end{proof}

\begin{lemma}
\label{lemma.bound on rhs}
Under the $(\mathcal{D},\gamma,n)$-stochastic ball model, there exists $C=C(\gamma)$ such that
\[
\min_{\substack{a,b\in\{1,\ldots,k\}\\a\neq b}}\min(M^{(a,b)}1)
\geq n\Delta(\Delta-2)+Cn\epsilon
\qquad
\mbox{w.p.}
\qquad
1-e^{-\Omega_{m,\gamma,\epsilon}(n)},
\]
where $\displaystyle{\Delta:=\min_{\substack{a,b\in\{1,\ldots,k\}\\a\neq b}}\Delta_{ab}}$.
\end{lemma}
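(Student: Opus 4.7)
The plan is to obtain a closed-form expression for $M^{(a,b)}1$ in terms of the blocks of $D$, and then to substitute the concentration bounds already established in the previous lemmas. The lemma is essentially a deterministic consequence of the earlier probabilistic work.

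First, since the $(\mathcal{D},\gamma,n)$-stochastic ball model forces $n_a=n$ for all $a$, expanding the definition \eqref{eq.definition of M} and right-multiplying by $1\in\mathbb{R}^n$ should yield, after routine simplification,
\[
(M^{(a,b)}1)_i = (D^{(a,b)}1 - D^{(a,a)}1)_i + \frac{1}{2n}\Big(1^\top D^{(a,a)}1 - 1^\top D^{(b,b)}1\Big).
\]
The key input is that the two rank-one correction terms in the definition of $M^{(a,b)}$ collapse almost entirely: the first contributes $\frac{n}{2}\big(\frac{1}{n^2}1^\top D^{(a,a)}1 - \frac{2}{n}e_i^\top D^{(a,a)}1\big)$, while the second contributes the $i$-independent scalar $-\frac{1}{2n}1^\top D^{(b,b)}1$.

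Next, I would invoke Lemma~\ref{lemma.difference of distances} on the first term, writing $D^{(a,b)}1 - D^{(a,a)}1 = 4np + q$. Cauchy--Schwarz together with $\|r_{a,i}\|_2\le 1$ and $\|\gamma_a-O_{ab}\|_2=\Delta_{ab}/2$ gives
\[
4np_i \;\geq\; 4n\bigg(-\tfrac{\Delta_{ab}}{2}+\tfrac{\Delta_{ab}^2}{4}\bigg) \;=\; n\Delta_{ab}(\Delta_{ab}-2),
\]
while the lemma also furnishes $|q_i|\leq (6+2\Delta_{ab})n\epsilon$ on the appropriate high-probability event. For the scalar correction I would apply Lemma~\ref{lemma.within cluster distances} to each of clusters $a$ and $b$: both $\tfrac{1}{n}1^\top D^{(c,c)}1$ concentrate near $2n\mathbb{E}\|r\|_2^2$, so the triangle inequality gives $\tfrac{1}{2n}\big|1^\top D^{(a,a)}1-1^\top D^{(b,b)}1\big|\leq 4n\epsilon$ with the claimed probability.

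Combining these estimates,
\[
(M^{(a,b)}1)_i \;\geq\; n\Delta_{ab}(\Delta_{ab}-2)-(10+2\Delta_{ab})n\epsilon.
\]
Since $x\mapsto x(x-2)$ is increasing on $[1,\infty)$ (the regime forced by the stochastic ball model once the balls are disjoint), $\Delta_{ab}\geq\Delta$ yields $\Delta_{ab}(\Delta_{ab}-2)\geq \Delta(\Delta-2)$, and taking $C(\gamma):=-(10+2\max_{a,b}\Delta_{ab})$ recovers the stated bound after minimizing over $(a,b,i)$. A union bound over the $O(k^2)$ cluster pairs and $O(kn)$ indices absorbs the failure probabilities into a single $1-e^{-\Omega_{m,\gamma,\epsilon}(n)}$ event.

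The main obstacle is pure bookkeeping in the first step: one must track the left- and right-rank-one corrections in $M^{(a,b)}$ carefully to verify the clean telescoping that leaves only the difference $(D^{(a,b)}1-D^{(a,a)}1)$ plus a scalar of order $n\epsilon$. No new concentration arguments are needed, as all randomness is already handled by Lemmas~\ref{lemma.difference of distances} and~\ref{lemma.within cluster distances}.
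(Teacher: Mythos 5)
Your proposal is correct and follows essentially the same route as the paper's own proof: both decompose $M^{(a,b)}1$ into $(D^{(a,b)}1 - D^{(a,a)}1)$ plus the scalar correction $\tfrac{1}{2n}(1^\top D^{(a,a)}1-1^\top D^{(b,b)}1)$, bound the first term via Lemma~\ref{lemma.difference of distances} together with Cauchy--Schwarz, handle the scalar via Lemma~\ref{lemma.within cluster distances}, and finish with a union bound over the $k(k-1)$ ordered pairs. The only cosmetic slip is the claim that you must additionally union-bound over $O(kn)$ indices: the high-probability event in Lemma~\ref{lemma.difference of distances} controls every $q_i$ simultaneously (the random quantities entering its proof are $i$-independent, and the $i$-dependent factor $\|r_{a,i}\|_2\leq 1$ is deterministic), so the $\min_i$ comes for free once the pairwise events hold.
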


\begin{proof}
Fix $a$ and $b$.
Then by Lemma~\ref{lemma.difference of distances}, the following holds with probability $1-e^{-\Omega_{m,\Delta_{ab},\epsilon}(n)}$:
\begin{align*}
\min\Big(D^{(a,b)}1-D^{(a,a)}1\Big)
&\geq 4n\min_{i\in\{1,\ldots,n\}}\bigg(r_{a,i}^\top(\gamma_a-O_{ab})+\frac{\Delta_{ab}^2}{4}\bigg)-(6+2\Delta_{ab})n\epsilon\\
&\geq n\Delta_{ab}^2-2n\Delta_{ab}-(6+2\Delta_{ab})n\epsilon,
\end{align*}
where the last step is by Cauchy--Schwarz.
Taking a union bound with Lemma~\ref{lemma.within cluster distances} then gives
\begin{align*}
&\min(M^{(a,b)}1)\\
&\qquad=\min\Big(D^{(a,b)}1-D^{(a,a)}1\Big)+\frac{1}{2}\bigg(\frac{1}{n}1^\top D^{(a,a)}1-\frac{1}{n}1^\top D^{(b,b)}1\bigg)\\
&\qquad\geq\min\Big(D^{(a,b)}1-D^{(a,a)}1\Big)
-\frac{1}{2}\bigg(\bigg|\frac{1}{n}1^\top D^{(a,a)}1-2n\mathbb{E}\|r\|_2^2\bigg|+\bigg|\frac{1}{n}1^\top D^{(b,b)}1-2n\mathbb{E}\|r\|_2^2\bigg|\bigg)\\
&\qquad\geq n\Delta_{ab}(\Delta_{ab}-2)-(10+2\Delta_{ab})n\epsilon
\end{align*}
with probability $1-e^{-\Omega_{\Delta_{ab},\epsilon}(n)}$.
The result then follows from a union bound over $a$ and $b$.
\end{proof}

\begin{lemma}
\label{lemma.bound on numerator}
Suppose $\epsilon\leq 1$.
Then there exists $C=C(\Delta_{ab},m)$ such that under the $(\mathcal{D},\gamma,n)$-stochastic ball model, we have
\[
\|P_{1^\perp}M^{(a,b)}1\|_2^2
\leq \frac{4n^3\Delta_{ab}^2}{m}+Cn^3\epsilon
\]
with probability $1-e^{-\Omega_{m,\Delta_{ab},\epsilon}(n)}$.
\end{lemma}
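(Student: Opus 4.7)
The plan is to first unpack the structure of $M^{(a,b)}1$ so that the projection $P_{1^\perp}M^{(a,b)}1$ reduces to something already analyzed in Lemma~\ref{lemma.difference of distances}. Using the block definition~\eqref{eq.definition of M} and the fact that $n_a=n_b=n$, each entry $(M^{(a,b)}1)_i$ equals $(D^{(a,b)}1-D^{(a,a)}1)_i$ plus a scalar that is constant in $i$ (arising from $1^\top D^{(a,a)}1$ and $1^\top D^{(b,b)}1$). Since $P_{1^\perp}$ annihilates constant vectors, we conclude
\[
P_{1^\perp}M^{(a,b)}1 = P_{1^\perp}\bigl(D^{(a,b)}1-D^{(a,a)}1\bigr) = 4n\,P_{1^\perp}p + P_{1^\perp}q,
\]
with $p$ and $q$ as in Lemma~\ref{lemma.difference of distances}.

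Next I would bound the two summands separately. For $p$, observe that $(P_{1^\perp}p)_i = (r_{a,i}-\bar r_a)^\top(\gamma_a-O_{ab})$ where $\bar r_a := \frac1n\sum_i r_{a,i} = c_a-\gamma_a$, so letting $u$ be the unit vector in the direction of $\gamma_a-O_{ab}$ and using $\|\gamma_a-O_{ab}\|_2 = \Delta_{ab}/2$,
\[
\|P_{1^\perp}p\|_2^2 = \frac{\Delta_{ab}^2}{4}\sum_{i=1}^n\bigl((r_{a,i}-\bar r_a)^\top u\bigr)^2 = \frac{\Delta_{ab}^2}{4}\Bigl(\sum_{i=1}^n(r_{a,i}^\top u)^2 - n(\bar r_a^\top u)^2\Bigr).
\]
Here the rotation invariance of $\mathcal{D}$ is crucial: it gives $\mathbb{E}[(r^\top u)^2] = \frac1m\mathbb{E}\|r\|_2^2 \leq 1/m$, which is the source of the $1/m$ factor (and hence of the dependence on $m$ in Theorem~\ref{main_theorem}). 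Since $(r^\top u)^2\in[0,1]$ almost surely, Hoeffding's inequality yields $|\tfrac1n\sum_i(r_{a,i}^\top u)^2-\mathbb{E}[(r^\top u)^2]|<\epsilon$ with probability $1-e^{-\Omega_\epsilon(n)}$, and the correction term $n(\bar r_a^\top u)^2$ is controlled by $\|c_a-\gamma_a\|_2^2<\epsilon^2$ from~\eqref{eq.empirical center}. Combining these yields
\[
\|P_{1^\perp}p\|_2^2 \leq \frac{n\Delta_{ab}^2}{4m} + C(\Delta_{ab})\,n\epsilon.
\]

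For the remainder, Lemma~\ref{lemma.difference of distances} gives $\|q\|_2\leq \sqrt{n}\,(6+2\Delta_{ab})n\epsilon$, so $\|P_{1^\perp}q\|_2^2 \leq C(\Delta_{ab})\,n^3\epsilon^2$. Assembling via the triangle inequality,
\[
\|P_{1^\perp}M^{(a,b)}1\|_2 \leq 4n\|P_{1^\perp}p\|_2 + \|P_{1^\perp}q\|_2,
\]
squaring, and using $\epsilon\leq 1$ to absorb the cross term into $Cn^3\epsilon$ gives the desired inequality. The main (and only) substantive obstacle is extracting the $1/m$ factor from the rotational invariance computation above; once that is in hand, everything else is a concentration and triangle-inequality bookkeeping exercise that combines cleanly with the earlier lemmas.
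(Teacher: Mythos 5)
Your proposal is correct, and it takes a noticeably cleaner route than the paper's. Both proofs start the same way: show that $P_{1^\perp}M^{(a,b)}1 = P_{1^\perp}(D^{(a,b)}1 - D^{(a,a)}1)$ because the extra terms in $M^{(a,b)}1$ are constant in $i$, then invoke the decomposition $D^{(a,b)}1-D^{(a,a)}1 = 4np + q$ from Lemma~\ref{lemma.difference of distances}, use rotation invariance of $\mathcal D$ to extract the $1/m$ factor via $\mathbb{E}[(r^\top u)^2]\leq 1/m$, and use Hoeffding to pass from expectations to high-probability bounds. Where you diverge is in how you control $\|P_{1^\perp}(4np+q)\|_2$. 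The paper writes $\|P_{1^\perp}w\|_2^2 = \|w\|_2^2 - \|P_1 w\|_2^2$, upper-bounds $\|w\|_2^2$ (which produces a spurious $n^3\Delta_{ab}^4$ term from the constant $\Delta_{ab}^2/4$ part of $p$), and then must invoke Lemma~\ref{lemma.difference of average distances} to lower-bound $\|P_1 w\|_2^2$ by roughly $n^3\Delta_{ab}^4$ so that this term cancels. You instead distribute $P_{1^\perp}$ over $4np+q$ and observe that $P_{1^\perp}$ annihilates the constant $\Delta_{ab}^2/4$ in $p$ outright, so $(P_{1^\perp}p)_i = \frac{\Delta_{ab}}{2}(r_{a,i}-\bar r_a)^\top u$ and $\|P_{1^\perp}p\|_2^2 \leq \frac{\Delta_{ab}^2}{4}\sum_i(r_{a,i}^\top u)^2 \lesssim \frac{n\Delta_{ab}^2}{4m}+n\epsilon$; the $\Delta_{ab}^4$ term never appears and Lemma~\ref{lemma.difference of average distances} is not needed. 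Your version is shorter, avoids the delicate cancellation, and one can even drop the $-n(\bar r_a^\top u)^2$ correction entirely (it is nonnegative and only helps the upper bound), so the appeal to~\eqref{eq.empirical center} there is superfluous. The remaining bookkeeping (cross term $8n\|P_{1^\perp}p\|_2\|P_{1^\perp}q\|_2 = O(n^3\epsilon)$ and $\|P_{1^\perp}q\|_2^2 = O(n^3\epsilon^2) \leq O(n^3\epsilon)$ by $\epsilon\leq 1$) is handled correctly.
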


\begin{proof}
First, a quick calculation reveals
\begin{align*}
e_i^\top M^{(a,b)}1
&=e_i^\top D^{(a,b)}1-e_i^\top D^{(a,a)}1+\frac{1}{2}\bigg(\frac{1}{n}1^\top D^{(a,a)}1-\frac{1}{n}1^\top D^{(b,b)}1\bigg),\\
\frac{1}{n}1^\top M^{(a,b)}1
&=\frac{1}{n}1^\top D^{(a,b)}1-\frac{1}{2}\bigg(\frac{1}{n}1^\top D^{(a,a)}1+\frac{1}{n}1^\top D^{(b,b)}1\bigg),
\end{align*}
from which it follows that
\begin{align*}
e_i^\top P_{1^\perp}M^{(a,b)}1
&=e_i^\top M^{(a,b)}1-\frac{1}{n}1^\top M^{(a,b)}1\\
&=\bigg(e_i^\top D^{(a,b)}1-\frac{1}{n}1^\top D^{(a,b)}1\bigg)-\bigg(e_i^\top D^{(a,a)}1-\frac{1}{n}1^\top D^{(a,a)}1\bigg)\\
&=e_i^\top P_{1^\perp}(D^{(a,b)}1-D^{(a,a)}1).
\end{align*}
As such, we have
\begin{align}
\nonumber
\|P_{1^\perp}M^{(a,b)}1\|_2^2
&=\|P_{1^\perp}(D^{(a,b)}1-D^{(a,a)}1)\|_2^2\\
\label{eq.two terms to bound}
&=\|D^{(a,b)}1-D^{(a,a)}1\|_2^2-\|P_1(D^{(a,b)}1-D^{(a,a)}1)\|_2^2.
\end{align}
To bound the first term, we apply the triangle inequality over Lemma~\ref{lemma.difference of distances}:
\begin{equation}
\label{eq.bound of first term 1}
\|D^{(a,b)}1-D^{(a,a)}1\|_2
\leq 4n\|p\|_2+\|q\|_2
\leq 4n\|p\|_2+(6+2\Delta_{ab})n^{3/2}\epsilon.
\end{equation}
We proceed by bounding $\|p\|_2$.
To this end, note that the $p_i$'s are iid random variables whose outcomes lie in a finite interval (of width determined by $\Delta_{ab}$) with probability $1$.
As such, Hoeffding's inequality gives
\[
\bigg|\frac{1}{n}\sum_{i=1}^n p_i^2-\mathbb{E}p_1^2\bigg|
\leq \epsilon
\qquad
\mbox{w.p.}
\qquad
1-e^{-\Omega_{\Delta_{ab},\epsilon}(n)}.
\]
With this, we then have
\begin{equation}
\label{eq.bound of first term 2}
\|p\|_2^2
=n\bigg(\frac{1}{n}\sum_{i=1}^np_i^2-\mathbb{E}p_1^2+\mathbb{E}p_1^2\bigg)
\leq n\mathbb{E}p_1^2+n\epsilon
\end{equation}
in the same event.
To determine $\mathbb{E}p_1^2$, first take $r_1:=e_1^\top r$.
Then since the distribution of $r$ is rotation invariant, we may write
\[
p_1
=r_{a,1}^\top(\gamma_a-O_{ab})+\|\gamma_a-O_{ab}\|_2^2
=\frac{\Delta_{ab}}{2}r_1+\frac{\Delta_{ab}^2}{4},
\]
where the second equality above is equality in distribution.
We then have
\begin{equation}
\label{eq.bound of first term 3}
\mathbb{E}p_1^2
=\mathbb{E}\bigg(\frac{\Delta_{ab}}{2}r_1+\frac{\Delta_{ab}^2}{4}\bigg)^2
=\frac{\Delta_{ab}^2}{4}\mathbb{E}r_1^2+\frac{\Delta_{ab}^4}{16}.
\end{equation}
We also note that $1\geq\mathbb{E}\|r\|_2^2=m\mathbb{E}r_1^2$ by linearity of expectation, and so
\begin{equation}
\label{eq.bound of first term 4}
\mathbb{E}r_1^2\leq\frac{1}{m}.
\end{equation}
Combining \eqref{eq.bound of first term 1}, \eqref{eq.bound of first term 2}, \eqref{eq.bound of first term 3} and \eqref{eq.bound of first term 4} then gives
\begin{equation}
\label{eq.bound of first term summary}
\|D^{(a,b)}1-D^{(a,a)}1\|_2
\leq\bigg(\frac{4n^3\Delta_{ab}^2}{m}+n^3\Delta_{ab}^4+16n^3\epsilon\bigg)^{1/2}+(6+2\Delta_{ab})n^{3/2}\epsilon.
\end{equation}
To bound the second term of \eqref{eq.two terms to bound}, first note that
\begin{equation}
\label{eq.second term 1}
\|P_1(D^{(a,b)}1-D^{(a,a)}1)\|_2
=\frac{1}{\sqrt{n}}\Big|1^\top D^{(a,b)}1-1^\top D^{(a,a)}1\Big|.
\end{equation}
Lemma~\ref{lemma.difference of average distances} then gives
\begin{equation}
\label{eq.second term 2}
\Big|1^\top D^{(a,b)}1-1^\top D^{(a,a)}1\Big|
\geq 1^\top D^{(a,b)}1-1^\top D^{(a,a)}1
\geq n^2\Delta_{ab}^2-(6+4\Delta_{ab})n^2\epsilon
\end{equation}
with probability $1-e^{-\Omega_{m,\Delta_{ab},\epsilon}(n)}$.
Using \eqref{eq.two terms to bound} to combine \eqref{eq.bound of first term summary} with \eqref{eq.second term 1} and \eqref{eq.second term 2} then gives the result.
\end{proof}

\begin{lemma}
\label{lemma.bound on rho}
There exists $C=C(\gamma)$ such that under the $(\mathcal{D},\gamma,n)$-stochastic ball model, we have
\[
\rho_{(a,b)}
\geq
n^2\big(\Delta_{ab}^2-\Delta(\Delta-2)\big)-Cn^2\epsilon
\qquad
\mbox{w.p.}
\qquad
1-e^{-\Omega_{\mathcal{D},\gamma,\epsilon}(n)}.
\]
\end{lemma}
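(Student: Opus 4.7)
The plan is to unwind $\rho_{(a,b)} = u_{(a,b)}^\top 1$. Since the stochastic ball model gives every cluster the same size $n_a = n_b = n$, the definition $u_{(a,b)} = M^{(a,b)} 1 - z \frac{n_a + n_b}{2 n_a} 1$ collapses to $u_{(a,b)} = M^{(a,b)} 1 - z\cdot 1$, so
\[
\rho_{(a,b)} = 1^\top M^{(a,b)} 1 - zn.
\]
I would lower bound $1^\top M^{(a,b)} 1$ and upper bound $z$ separately, then combine by a union bound.

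For the first term, the identity (derived in the proof of Lemma~\ref{lemma.bound on numerator})
\[
1^\top M^{(a,b)} 1 = 1^\top D^{(a,b)} 1 - \tfrac{1}{2}\bigl(1^\top D^{(a,a)} 1 + 1^\top D^{(b,b)} 1\bigr)
\]
rewrites $1^\top M^{(a,b)} 1$ as the average of the two cross-minus-within differences controlled by Lemma~\ref{lemma.difference of average distances}. Applying that lemma to the pairs $(a,b)$ and $(b,a)$ and averaging yields $1^\top M^{(a,b)} 1 \geq n^2 \Delta_{ab}^2 - O_{\Delta_{ab}}(n^2 \epsilon)$ with the required exponentially small failure probability.

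For the second term, the definition of $z$ reduces to $z = \min_{a' \neq b'} \min(M^{(a', b')} 1)$, so $z \leq e_{i_*}^\top M^{(a_*, b_*)} 1$ for any choice of $(a_*, b_*)$ and $i_*$. I would take $(a_*, b_*)$ with $\Delta_{a_* b_*} = \Delta$ and use
\[
e_i^\top M^{(a_*, b_*)} 1 = 4 n p_i + q_i + \tfrac{1}{2}\bigl(\tfrac{1}{n} 1^\top D^{(a_*, a_*)} 1 - \tfrac{1}{n} 1^\top D^{(b_*, b_*)} 1\bigr)
\]
from Lemmas~\ref{lemma.difference of distances} and~\ref{lemma.bound on numerator}, where $p_i = r_{a_*, i}^\top (\gamma_{a_*} - O_{a_* b_*}) + \Delta^2 / 4$. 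Choosing $i_*$ so that $r_{a_*, i_*}$ lies in the spherical cap $\{r : r^\top (\gamma_{a_*} - O_{a_* b_*}) \leq -\Delta/2 + \epsilon\}$ forces $p_{i_*} \leq \Delta(\Delta - 2)/4 + \epsilon$. Rotation invariance of $\mathcal{D}$ gives this cap positive probability $p = p(\mathcal{D}, \gamma, \epsilon) > 0$, so the probability that no $r_{a_*, i}$ lands in it is at most $(1 - p)^n = e^{-\Omega_{\mathcal{D}, \gamma, \epsilon}(n)}$. On the complementary good event, combining the above with the $|q_i|$ bound from Lemma~\ref{lemma.difference of distances} and the within-cluster estimate from Lemma~\ref{lemma.within cluster distances} yields $zn \leq n^2 \Delta(\Delta - 2) + O_\gamma(n^2 \epsilon)$.

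Subtracting delivers $\rho_{(a,b)} \geq n^2 \bigl(\Delta_{ab}^2 - \Delta(\Delta - 2)\bigr) - C n^2 \epsilon$ for a suitable $C = C(\gamma)$, after union bounding the two events. The main obstacle is the extreme-value step in the upper bound on $z$: every other ingredient is sample-average concentration of the kind already handled in the appendix (Hoeffding and matrix Hoeffding), but upper bounding $z$ is a sample-\emph{minimum} statement that needs a quantitative lower bound on the mass $\mathcal{D}$ places in a spherical cap touching the boundary of the unit ball. The rotation-invariant setup of Definition~\ref{stochastic_balls} supplies exactly this mild regularity.
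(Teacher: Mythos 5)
Your proposal matches the paper's proof essentially step for step: the same reduction $\rho_{(a,b)} = 1^\top M^{(a,b)}1 - nz$ using equal cluster sizes, the same lower bound on $1^\top M^{(a,b)}1$ by symmetrizing over $(a,b)$ and $(b,a)$ and invoking Lemma~\ref{lemma.difference of average distances}, and the same upper bound on $z$ obtained by exhibiting a data point whose radial vector falls in the spherical cap $\{r : r^\top(\gamma_a - O_{ab}) \leq -\Delta_{ab}/2 + \epsilon\}$, which has positive $\mathcal{D}$-mass by rotation invariance and the assumption that the support of $\mathcal{D}$ meets the boundary of the unit ball. The paper phrases the extreme-value step as ``$\|r\|_2 \geq 1-\epsilon$ with positive probability,'' while you phrase it directly as a cap-mass lower bound, but these are the same argument, and your observation that this is the only non-concentration ingredient in the proof is accurate.
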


\begin{proof}
Recall from \eqref{eq.how to construct B} that
\begin{equation}
\label{eq.decomposition of rho}
\rho_{(a,b)}
=u_{(a,b)}^\top 1
=1^\top M^{(a,b)}1-nz
=1^\top M^{(a,b)}1-n\min_{\substack{a,b\in\{1,\ldots,k\}\\a\neq b}}\min(M^{(a,b)}1).
\end{equation}
To bound the first term, we leverage Lemma~\ref{lemma.difference of average distances}:
\begin{align*}
1^\top M^{(a,b)}1
&=1^\top D^{(a,b)}1-\frac{1}{2}(1^\top D^{(a,a)}1+1^\top D^{(b,b)}1)\\
&=\frac{1}{2}\Big(1^\top D^{(a,b)}1-1^\top D^{(a,a)}1\Big)+\frac{1}{2}\Big(1^\top D^{(b,a)}1-1^\top D^{(b,b)}1\Big)\\
&\geq n^2\Delta_{ab}^2-(6+4\Delta_{ab})n^2\epsilon
\end{align*}
with probability $1-e^{-\Omega_{m,\Delta_{ab},\epsilon}(n)}$.
To bound the second term in \eqref{eq.decomposition of rho}, note from Lemma~\ref{lemma.within cluster distances} that
\begin{align*}
&\min(M^{(a,b)}1)\\
&\qquad=\min\Big(D^{(a,b)}1-D^{(a,a)}1\Big)+\frac{1}{2}\bigg(\frac{1}{n}1^\top D^{(a,a)}1-\frac{1}{n}1^\top D^{(b,b)}1\bigg)\\
&\qquad\leq\min\Big(D^{(a,b)}1-D^{(a,a)}1\Big)
+\frac{1}{2}\bigg(\bigg|\frac{1}{n}1^\top D^{(a,a)}1-2n\mathbb{E}\|r\|_2^2\bigg|+\bigg|\frac{1}{n}1^\top D^{(b,b)}1-2n\mathbb{E}\|r\|_2^2\bigg|\bigg)\\
&\qquad\leq\min\Big(D^{(a,b)}1-D^{(a,a)}1\Big)+4n\epsilon
\end{align*}
with probability $1-e^{-\Omega_{\Delta_{ab},\epsilon}(n)}$.
Next, Lemma~\ref{lemma.difference of distances} gives
\[
\min\Big(D^{(a,b)}1-D^{(a,a)}1\Big)
\leq n\Delta_{ab}^2+(6+2\Delta_{ab})n\epsilon+4n\min_{i\in\{1,\ldots,n\}}r_{a,i}^\top(\gamma_a-O_{ab}).
\]
By assumption, we know $\|r\|_2\geq1-\epsilon$ with positive probability regardless of $\epsilon>0$.
It then follows that
\[
r^\top(\gamma_a-O_{ab})
\leq-\frac{\Delta_{ab}}{2}+\epsilon
\]
with some ($\epsilon$-dependent) positive probability.
As such, we may conclude that 
\[
\min_{i\in\{1,\ldots,n\}}r_{a,i}^\top(\gamma_a-O_{ab})
\leq-\frac{\Delta_{ab}}{2}+\epsilon
\qquad
\mbox{w.p.}
\qquad
1-e^{-\Omega_{\mathcal{D},\epsilon}(n)}.
\]
Combining these estimates then gives
\[
\min(M^{(a,b)}1)
\leq n\Delta_{ab}^2-2n\Delta_{ab}+(10+2\Delta_{ab})n\epsilon
\qquad
\mbox{w.p.}
\qquad
1-e^{-\Omega_{\mathcal{D},\Delta_{ab},\epsilon}(n)}.
\]
Performing a union bound over $a$ and $b$ then gives
\[
\min_{\substack{a,b\in\{1,\ldots,k\}\\a\neq b}}\min(M^{(a,b)}1)
\leq  n\Delta^2-2n\Delta+(10+2\Delta)n\epsilon
\qquad
\mbox{w.p.}
\qquad
1-e^{-\Omega_{\mathcal{D},\gamma,\epsilon}(n)}.
\]
Combining these estimates then gives the result.
\end{proof}

\begin{lemma}
\label{lemma.bound on spectral norm of psi}
Under the $(\mathcal{D},\gamma,n)$-stochastic ball model, we have
\[
\|\Psi\|_{2\rightarrow2}
\leq\bigg(\frac{(1+\epsilon)\sigma}{\sqrt{m}}+\epsilon\bigg)\sqrt{N}
\qquad
\mbox{w.p.}
\qquad
1-e^{-\Omega_{m,k,\sigma,\epsilon}(n)},
\]
where $\sigma^2:=\mathbb{E}\|r\|_2^2$ for $r\sim\mathcal{D}$.
\end{lemma}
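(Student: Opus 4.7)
The plan is to strip off the cluster means first (to reveal the underlying centered randomness) and then invoke matrix concentration. Observe that $x_{a,i}-c_a = r_{a,i}-\frac{1}{n}\sum_{j=1}^n r_{a,j}$, so the ball centers $\gamma_a$ drop out entirely. Let $R$ be the $m\times N$ matrix whose $(a,i)$th column is $r_{a,i}$. A direct calculation shows $(RP_{\Lambda^\perp})_{:,(a,i)} = r_{a,i}-\frac{1}{n}\sum_j r_{a,j}$, so $\Psi = RP_{\Lambda^\perp}$, and since $P_{\Lambda^\perp}$ is a projection we obtain the deterministic bound $\|\Psi\|_{2\rightarrow2}\leq\|R\|_{2\rightarrow2}$.

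Next I would exploit the rotation invariance of $\mathcal{D}$ to compute $\mathbb{E}[rr^\top]$. By rotation invariance the expectation must be a scalar multiple of $I_m$, and the trace identity $\operatorname{Tr}(\mathbb{E}[rr^\top]) = \mathbb{E}\|r\|_2^2 = \sigma^2$ pins that scalar to $\sigma^2/m$, giving $\mathbb{E}[RR^\top] = (N\sigma^2/m) I_m$. Writing $RR^\top - \mathbb{E}[RR^\top] = \sum_{a,i}\bigl(r_{a,i}r_{a,i}^\top-(\sigma^2/m)I_m\bigr)$ expresses the fluctuation as a sum of $N$ i.i.d.\ symmetric matrices, each of operator norm at most $1+\sigma^2/m\leq 2$ almost surely, with matrix variance bounded by $N$. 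Matrix Bernstein (analogous to the application of Matrix Hoeffding in the first lemma of this appendix) then yields, for any $\delta>0$,
\[
\operatorname{Pr}\Bigl(\bigl\|RR^\top-(N\sigma^2/m)I_m\bigr\|_{2\rightarrow2}\geq N\delta\Bigr)\leq 2m\exp\!\bigl(-cN\delta^2/(1+\delta)\bigr).
\]

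Finally I would choose $\delta = \epsilon^2$, so the probability becomes $e^{-\Omega_{m,\epsilon}(N)} = e^{-\Omega_{m,k,\epsilon}(n)}$, and on the good event
\[
\|R\|_{2\rightarrow2}^2 = \|RR^\top\|_{2\rightarrow2}\leq \frac{N\sigma^2}{m}+N\epsilon^2.
\]
Taking square roots and applying $\sqrt{a+b}\leq\sqrt{a}+\sqrt{b}$ gives $\|R\|_{2\rightarrow2}\leq \sigma\sqrt{N/m}+\epsilon\sqrt{N}$, which is even stronger than (and thus implies) the stated bound $\bigl(\tfrac{(1+\epsilon)\sigma}{\sqrt{m}}+\epsilon\bigr)\sqrt{N}$; the extra slack $\epsilon\sigma\sqrt{N/m}$ is presumably left in the statement to absorb the $\sigma$ dependence elsewhere. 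There is no serious obstacle here: the argument is a clean reduction from $\Psi$ to a sample second-moment of $R$, and the only decision of substance is the choice $\delta = \epsilon^2$ (as opposed to $\delta = 2\sigma^2\epsilon/m$, which would give the stronger $(1+\epsilon)\sigma\sqrt{N/m}$ form when $m$ is small relative to $\sigma^2/\epsilon^2$).
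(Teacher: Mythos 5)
Your proof is correct, and the reduction from $\Psi$ to $R$ is a genuinely cleaner move than the one in the paper. The paper writes $\Psi = R - \bigl[(c_1-\gamma_1)1^\top \cdots (c_k-\gamma_k)1^\top\bigr]$ and applies the triangle inequality, which forces a separate concentration estimate (their inequality~\eqref{eq.empirical center}) to control the rank-$k$ error term; it then invokes Theorem~5.41 of~\cite{vershynin11} (covariance estimation for matrices with independent rows/columns) to bound $\|R\|_{2\rightarrow2}$. Your observation that $x_{a,i}-c_a = r_{a,i}-\frac{1}{n}\sum_j r_{a,j}$, hence $\Psi = RP_{\Lambda^\perp}$ and $\|\Psi\|_{2\rightarrow2}\leq\|R\|_{2\rightarrow2}$ deterministically, eliminates that second concentration step entirely and removes the $\gamma_a$'s in one shot. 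You then replace Vershynin's theorem with a direct Matrix Bernstein application to the sum $\sum_{a,i}\bigl(r_{a,i}r_{a,i}^\top-(\sigma^2/m)I_m\bigr)$, which is self-contained, uses only $\|r\|_2\leq 1$ and rotation invariance to get $\mathbb{E}[rr^\top]=(\sigma^2/m)I_m$, and produces the slightly sharper bound $\sigma\sqrt{N/m}+\epsilon\sqrt{N}$. The probability exponent $\Omega(N\delta^2/(1+\delta))$ with $\delta=\epsilon^2$ and $N=nk$ indeed yields $e^{-\Omega_{m,k,\epsilon}(n)}$, which dominates the paper's stated rate. In short: same high-level strategy (reduce to $\|R\|$, then matrix concentration), but your deterministic reduction via $P_{\Lambda^\perp}$ avoids an extra union bound and your use of Matrix Bernstein makes the argument more self-contained and the constant marginally better.
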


\begin{proof}
Let $R$ denote the matrix whose $(a,i)$th column is $r_{a,i}$.
Then
\[
\Psi
=R-\Big[(c_1-\gamma_1)1^\top~\cdots~(c_k-\gamma_k)1^\top\Big],
\]
and so the triangle inequality gives
\[
\|\Psi\|_{2\rightarrow2}
\leq\|R\|_{2\rightarrow2}+\Big\|\Big[(c_1-\gamma_1)1^\top~\cdots~(c_k-\gamma_k)1^\top\Big]\Big\|_{2\rightarrow2}
\leq\|R\|_{2\rightarrow2}+\bigg(n\sum_{a=1}^k\|c_a-\gamma_a\|_2^2\bigg)^{1/2},
\]
where the last estimate passes to the Frobenius norm.
For the first term, since $\mathcal{D}$ is rotation invariant, we may apply Theorem~5.41 in~\cite{vershynin11}:
\[
\|R\|_{2\rightarrow2}
\leq(1+\epsilon)\sigma\sqrt{\frac{N}{m}}
\qquad
\mbox{w.p.}
\qquad
1-e^{-\Omega_{m,\sigma,\epsilon}(n)}.
\]
For the second term, apply \eqref{eq.empirical center}.
The union bound then gives the result.
\end{proof}

\begin{proof}[Proof of Theorem~\ref{main_theorem}]
First, we combine Lemmas~\ref{lemma.bound on numerator}, \ref{lemma.bound on rho} and \ref{lemma.bound on spectral norm of psi}:
For every $\delta>0$, there exists an $\epsilon>0$ such that
\begin{align}
\nonumber
&2\|\Psi\|_{2\rightarrow2}^2+\sum_{a=1}^k\sum_{b=a+1}^k\frac{\|P_{1^\perp}M^{(a,b)}1\|_2\|P_{1^\perp}M^{(b,a)}1\|_2}{\rho_{(a,b)}}\\
\nonumber
&\qquad\qquad\leq 2\bigg(\frac{1+\epsilon}{\sqrt{m}}+\epsilon\bigg)^2nk+\sum_{a=1}^k\sum_{b=a+1}^k\frac{4n^3\Delta_{ab}^2/m+Cn^3\epsilon}{n^2(\Delta_{ab}^2-\Delta(\Delta-2))-Cn^2\epsilon}\\
\label{eq.main_thm_1}
&\qquad\qquad\leq n\bigg(\frac{2k}{m}+\frac{4}{m}\sum_{a=1}^k\sum_{b=a+1}^k\frac{\Delta_{ab}^2}{\Delta_{ab}^2-\Delta(\Delta-2)}+\delta\bigg)
\end{align}
with probability $1-e^{-\Omega_{\mathcal{D},\gamma,\epsilon}(n)}$.
Next, the uniform bound $\Delta_{ab}\geq\Delta$ implies
\[
\frac{\Delta_{ab}^2}{\Delta_{ab}^2-\Delta(\Delta-2)}
=\frac{1}{1-\Delta(\Delta-2)/\Delta_{ab}^2}
\leq\frac{1}{1-\Delta(\Delta-2)/\Delta^2}
=\frac{\Delta}{2}.
\]
Combining this with \eqref{eq.main_thm_1} and considering Lemma~\ref{lemma.bound on rhs}, it then suffices to have
\[
\frac{2k}{m}+\frac{4}{m}\cdot\binom{k}{2}\cdot\frac{\Delta}{2}
<\Delta(\Delta-2).
\]
Rearranging then gives
\[
\Delta
>2+\frac{2k}{m\Delta}+\frac{k(k-1)}{m},
\]
which is implied by the hypothesis since $\Delta\geq2$.
\end{proof}

\section{Proof of Theorem~\ref{theorem.spectral clustering}} \label{sec:appendix_theorem_spectral_clustering}

Put $g=\gamma/\|\gamma\|_2$ and let $z$ have unit $2$-norm.
Since $\|\Phi_0^\top z\|_2\geq\|\Phi_0^\top g\|_2$, then considering Lemma~\ref{lemma.sufficient eigenvector}, it suffices to show that the containment
\[
S_1
:=\bigg\{v\in\mathbb{S}^{m-1}:|\langle g^\top v\rangle|\leq\frac{2}{\Delta}\bigg\}
\subseteq
\bigg\{v\in\mathbb{S}^{m-1}:\|\Phi_0^\top v\|_2<\|\Phi_0^\top g\|_2\bigg\}
=:S_2
\]
holds with probability $1-e^{-\Omega_{m,\Delta}(N)}$.
To this end, we will first show that each $v\in S_1$ is also a member of $S_2$ with high probability, and then we will perform a union bound over an $\epsilon$-net of $S_1$.

We start by considering $\|\Phi^\top v\|_2$ and $\|\Phi^\top g\|_2$.
Decompose $x_i$ as either $\gamma+r_i$ or $-\gamma+r_i$ depending on whether $x_i$ belongs to the ball centered at $\gamma$ or $-\gamma$.
Let $w$ with $\|w\|_2=1$ be arbitrary.
Then
\[
(x_i^\top w)^2
=((\pm\gamma+r_i)^\top w)^2
=(\pm\gamma^\top w+r_i^\top w)^2
=(\gamma^\top w)^2\pm 2(\gamma^\top w)(r_i^\top w)+(r_i^\top w)^2,
\]
and so $\mathbb{E}(x_i^\top w)^2=(\gamma^\top w)^2+\mathbb{E}(e_1^\top r)^2$.
Linearity of expectation then gives
\[
\mathbb{E}\big[(x_i^\top g)^2-(x_i^\top v)^2\big]
=(\gamma^\top g)^2-(\gamma^\top v)^2
=\|\gamma\|^2(1-(g^\top v)^2)
\geq 1-\frac{4}{\Delta^2}.
\]
Since $|(x_i^\top g)^2-(x_i^\top v)^2|\leq 2(1+\Delta/2)^2$ almost surely, we may apply Hoeffding's inequality to get
\begin{equation}
\label{eq.hoeffding for phi}
\|\Phi^\top g\|_2^2-\|\Phi^\top v\|_2^2
=\sum_{i=1}^N \Big((x_i^\top g)^2-(x_i^\top v)^2\Big)
\geq N\bigg(1-\frac{4}{\Delta^2}\bigg)-s
\quad
\text{w.p.}
\quad
1-e^{-\Omega_{\Delta}(s^2/N)}.
\end{equation}
For a properly chosen $t$, rearranging gives that $\|\Phi^\top v\|_2<\|\Phi^\top g\|_2$.
Instead, we will use \eqref{eq.hoeffding for phi} to prove the closely related inequality $\|\Phi_0^\top v\|_2<\|\Phi_0^\top g\|_2$.
Letting $\mu$ denote the centroid of the columns of $\Phi$, we know by \eqref{eq.empirical center} that $\|\mu\|_2\leq\delta$ with probability $1-e^{-\Omega_{m,\delta}(N)}$.
In this event, every $w$ with $\|w\|_2=1$ satisfies
\begin{align}
\big|\|\Phi_0^\top w\|_2-\|\Phi^\top w\|_2\big|
\nonumber
&=\big|\|(\Phi+\mu1^\top)^\top w\|_2-\|\Phi^\top w\|_2\big|\\
\label{eq.phi0}
&=\big|\|\Phi^\top w+1\mu^\top w\|_2-\|\Phi^\top w\|_2\big|
\leq\|1\mu^\top w\|_2
\leq\sqrt{N}\delta.
\end{align}
Furthermore, 
\[
\|\Phi_0^\top w\|_2
=\|(\Phi-\mu1^\top)^\top w\|_2
\leq\|\Phi w\|_2+\|1\mu^\top w\|_2
\leq\sqrt{N}\bigg(\frac{\Delta}{2}+1+\|\mu\|_2\bigg),
\]
where the last inequality follows from Cauchy--Schwarz along with the fact that $\|x_i\|_2\leq\Delta/2+1$ for every $i$.
Taking a supremum over $w$ then gives
\begin{equation}
\label{eq.phi0 2 norm}
\|\Phi_0^\top\|_{2\rightarrow 2}
\leq\sqrt{N}\bigg(\frac{\Delta}{2}+1+\|\mu\|_2\bigg)
\leq\sqrt{N}\bigg(\frac{\Delta}{2}+1+\delta\bigg)
\quad
\text{w.p.}
\quad
1-e^{-\Omega_{m,\delta}(N)}.
\end{equation}
In \eqref{eq.hoeffding for phi}, pick $s=(N/2)(1-4/\Delta^2)=:c_1(\Delta)N$.
Then taking a union bound with \eqref{eq.phi0} gives
\[
\big(\|\Phi_0^\top v\|_2-\sqrt{N}\delta\big)^2
\leq\|\Phi^\top v\|_2^2
\leq\|\Phi^\top g\|_2^2c_1(\Delta) N
\leq\big(\|\Phi_0^\top g\|_2+\sqrt{N}\delta\big)^2-c_1(\Delta) N
\]
with probability $1-e^{-\Omega_{m,\Delta,\delta}(N)}$.
Expanding both sides and rearranging then gives
\begin{align*}
\|\Phi_0^\top v\|_2^2
&\leq\|\Phi_0^\top g\|_2^2+2\sqrt{N}\delta\big(\|\Phi_0^\top v\|_2+\|\Phi_0^\top g\|_2\big)-c_1(\Delta) N\\
&\leq\|\Phi_0^\top g\|_2^2-\underbrace{\bigg(c_1(\Delta)-4\delta\bigg(\frac{\Delta}{2}+1+\delta\bigg)\bigg)}_{c_2(\Delta)} N,
\end{align*}
where the last step follows from \eqref{eq.phi0 2 norm}.
Thus, picking $\delta=\delta(\Delta)$ sufficiently small ensures $c_2(\Delta)>0$.
Since $c_2(\Delta)N\leq\|\Phi_0^\top g\|_2^2-\|\Phi_0^\top v\|_2^2=(\|\Phi_0^\top g\|_2+\|\Phi_0^\top v\|_2)(\|\Phi_0^\top g\|_2-\|\Phi_0^\top v\|_2)$, we further have
\[
\|\Phi_0^\top g\|_2-\|\Phi_0^\top v\|_2
\geq\frac{c_2(\Delta)N}{\|\Phi_0^\top g\|_2+\|\Phi_0^\top v\|_2}
\geq c_3(\Delta)\sqrt{N},
\]
where the last inequality takes $c_3(\Delta):=c_2(\Delta)/(\Delta/2+1+\delta)$, following \eqref{eq.phi0 2 norm}.

At this point, we know that if $v\in S_1$, then $v\in S_2$ with probability $1-e^{-\Omega_{m,\Delta}(N)}$.
It remains to perform a union bound over an $\epsilon$-net of $S_1$ to conclude that $S_1\subseteq S_2$ with high probability.
To this end, pick $\epsilon<c_3(\Delta)/(\Delta/2+1+\delta)$, consider an $\epsilon$-net $\mathcal{N}_\epsilon$ of $S_1$, and suppose
\begin{equation}
\label{eq.success event}
\|\Phi_0^\top v\|_2\leq\|\Phi_0^\top g\|_2-c_3(\Delta)\sqrt{N}
\qquad
\forall v\in\mathcal{N}_\epsilon.
\end{equation}
Then for every $x\in S_1$, there exists $v\in\mathcal{N}_\epsilon$ such that $\|x-v\|_2\leq\epsilon$, and so \eqref{eq.phi0 2 norm} gives
\[
\|\Phi_0^\top x\|_2
\leq\|\Phi_0^\top\|_{2\rightarrow2}\|x-v\|_2+\|\Phi_0^\top v\|_2
\leq\sqrt{N}\bigg(\frac{\Delta}{2}+1+\delta\bigg)\epsilon+\|\Phi_0^\top g\|_2-c_3(\Delta)\sqrt{N}
<\|\Phi_0^\top g\|_2,
\]
as desired.
To measure the probability of the success event \eqref{eq.success event}, a standard volume comparison argument establishes the existence of an $\epsilon$-net of size $|\mathcal{N}_\epsilon|\leq(1+2/\epsilon)^m$; see Lemma~5.2 in~\cite{vershynin11}.
As such, the union bound gives that \eqref{eq.success event} occurs with probability $1-e^{-\Omega_{m,\Delta}(N)}$.

\end{document}